\documentclass[11pt,superscriptaddress,onecolumn,tightenlines,notitlepage,prx]{revtex4-2}

\usepackage[dvips]{graphicx}
\usepackage{amsmath,amssymb,amsthm,mathrsfs,amsfonts}
\usepackage{bm}
\usepackage{color}
\usepackage{enumerate}
\usepackage{physics,complexity}
\usepackage{soul,xcolor}
\usepackage{textcomp}
\usepackage{verbatim}
\usepackage{makecell}
\usepackage{multirow}
\usepackage{algpseudocode}
\usepackage{algorithm}
\usepackage{newfloat}
\usepackage{adjustbox}
\usepackage[caption=false]{subfig}
\usepackage{tikz}
\newsavebox{\measurebox}
\usetikzlibrary{quantikz}
\usepackage[colorlinks = true]{hyperref}
\usepackage{longtable}
\usepackage{xcolor}
\definecolor{darkred}  {rgb}{0.5,0,0}
\definecolor{darkblue} {rgb}{0,0,0.5}
\definecolor{darkgreen}{rgb}{0,0.5,0}

\hypersetup{
  urlcolor   = blue,         % color of external links
  linkcolor  = darkblue,     % color of internal links
  citecolor  = darkgreen,    % color of links to bibliography
  filecolor   = darkred       % color of file links
}

\newtheorem{lemma}{Lemma}

\newtheorem{theorem}{Theorem}

\newcommand{\mc}{\mathcal}
\renewcommand{\E}{\mathop{\mathbb E\/}}

\DeclareMathOperator{\Var}{Var}

\newenvironment{nalign}{
    \begin{equation}
    \begin{aligned}
}{
    \end{aligned}
    \end{equation}
    \ignorespacesafterend{}
}
\newtheorem*{theorem*}{Theorem}
\usepackage{wrapfig}
\usepackage{capt-of}

\begin{document}
\title{Benchmarking near-term quantum computers via random circuit sampling}

\author{Yunchao Liu}
\email{yunchaoliu@berkeley.edu}
\affiliation{Department of Electrical Engineering and Computer Sciences, University of California, Berkeley, California 94720, USA}

\author{Matthew Otten}
\email{mjotten@hrl.com}
\affiliation{HRL Laboratories, LLC, 3011 Malibu Canyon Rd., Malibu, California 90265, USA}

\author{Roozbeh Bassirianjahromi}
\email{roozbeh@uchicago.edu}
\affiliation{Department of Computer Science, University of Chicago, Chicago, IL 60637, USA}

\author{Liang Jiang}
\email{liang.jiang@uchicago.edu}
\affiliation{Pritzker School of Molecular Engineering, University of Chicago, Chicago, IL 60637, USA}

\author{Bill Fefferman}
\email{wjf@uchicago.edu}
\affiliation{Department of Computer Science, University of Chicago, Chicago, IL 60637, USA}

\begin{abstract}
The increasing scale of near-term quantum hardware motivates the need for efficient noise characterization methods, since qubit and gate level techniques cannot capture crosstalk and correlated noise in many qubit systems. While scalable approaches, such as cycle benchmarking, are known for special classes of quantum circuits, the characterization of noise in general circuits with non-Clifford gates has been an unreachable task. We develop an algorithm that can sample-efficiently estimate the total amount of noise induced by a layer of arbitrary non-Clifford gates, including all crosstalks, and experimentally demonstrate the method on IBM Quantum hardware. Our algorithm is inspired by Google's quantum supremacy experiment and is based on random circuit sampling. In their paper, Google observed that their experimental linear cross entropy was consistent with a simple uncorrelated noise model, and claimed this coincidence indicated that the noise in their device was uncorrelated -- a key step in hardware development towards fault tolerance. As an application, we show that our result provides formal evidence to support such a conclusion.

\end{abstract}

\maketitle

With the recent exciting progress in NISQ (Noisy Intermediate Scale Quantum) experiments, the characterization of noise in quantum devices has become a central challenge in the field~\cite{Preskill2018quantumcomputingin,Eisert2020quantum}. This goes beyond the benchmarking of individual gates; the characterization of many-body quantum noise can be expected to play an important role in building scalable quantum computers~\cite{Erhard2019characterizing,Harper2020efficient}. This is because with scaling, understanding crosstalk between gates assumes increasing significance, in both improving fidelity of near-term experiments as well as making progress towards fault tolerance~\cite{Sarovar2020detectingcrosstalk,hashim2021randomized,Winick2021simulating,iyer2021efficient}.

Randomized benchmarking~\cite{Emerson2005scalable,Knill2008randomized,Dankert2009exact,Magesan2011scalable,Magesan2012characterizing} has been the standard approach in traditional noise benchmarking, but in the context of characterizing quantum computers it does not scale beyond 2-3 qubits due to large circuit depth~\cite{McKay2019threequbit}, and is therefore most useful for the benchmarking of individual gates. Cycle benchmarking and its variants~\cite{Erhard2019characterizing,Flammia2020efficient,Harper2020efficient,harper2020fast,flammia2021pauli,flammia2021averaged} provide a scalable approach to benchmark Clifford circuits. However, the characterization of noise in general circuits with non-Clifford gates has been an unreachable task due to the lack of group structure.

Here we address this challenge by drawing inspiration from Google's ``quantum supremacy" experiment~\cite{arute2019quantum}. Google observed that their experimental estimate of the linear cross entropy benchmark -- a proxy for the global fidelity of random circuits -- was consistent with an uncorrelated noise model defined by multiplying individual gate fidelities, and they claimed that these experimental results could be considered a way of verifying that the noise channel acting on a layer of gates is uncorrelated across each gate. This observation is remarkable in the sense that the fidelity of highly complex random circuits could be predicted by such a simple noise model, but also intriguing as little theoretical evidence has been shown that supports this observation~\cite{Martinis2019talk,Aaronson2019panel}. A natural question is how convincing is this observation from the theoretical perspective, and more importantly, could this new observation be the germ of a new way to benchmark noise in general quantum circuits, even without assuming locality and independence in the noise model? 

In this paper we develop a noise benchmarking algorithm based on random circuit sampling (RCS). We show that the total amount of noise in a global and arbitrarily correlated noise model can be sample-efficiently extracted by measuring the linear cross entropy. While the noise parameter to be estimated is the same as cycle benchmarking~\cite{Erhard2019characterizing}, RCS benchmarking works for arbitrary non-Clifford gates and therefore can be used to benchmark general circuits.
As an application, our results imply that if the noise in Google's device were highly non-local and correlated, this would cause the linear cross entropy and the uncorrelated noise model to deviate from each other in Google's experiment. This provides some formal evidence that supports Google's claim that the coincidence they observed between the two metrics indicated that the noise in their device was uncorrelated~\cite{arute2019quantum}, which raises the potential for achieving fault tolerance.

\vspace{1.5mm}
\noindent\textbf{RCS benchmarking.} The noise benchmarking problem can be formulated as follows. Consider the Pauli noise channel induced by a layer of arbitrary non-Clifford gates $\mc N(\rho)=\sum_{\alpha\in\{0,1,2,3\}^n}p_\alpha\sigma_\alpha\rho\sigma_\alpha$ where $\sigma_\alpha$ are $n$-qubit Pauli operators and $p_\alpha$ are the corresponding Pauli error rates. The assumption that noise can be described by a Pauli channel is without loss of generality, as we show below that general noise channels will be effectively twirled into a Pauli channel by RCS. The main challenge then is designing an efficient experimental procedure to estimate the total error $\lambda=\sum_{\alpha\neq 0^n}p_\alpha$. In RCS benchmarking (Algorithm~\ref{alg:rcsbenchmarkingapp}), the gates to be characterized are applied in an alternating architecture and interleaved by Haar random single-qubit gates (see Fig.~\ref{fig:rcsbenchmarking}), followed by a measurement in the computational basis. The algorithm works by estimating the average fidelity of these random circuits at several different depths, and then fit the average fidelity as an exponential decay function of depth, which gives the noise parameter $\lambda$. This algorithm can be implemented on today's hardware and is robust to state preparation and measurement (SPAM) errors.

\begin{figure}[t]
\begin{minipage}{0.44\textwidth}
  \tikzset{operator/.append style={fill=blue!50,minimum size=0.5cm}}
  \tikzset{operator/.append style={}}
    \centering
    \begin{quantikz}[column sep=0.4cm,row sep=0cm]
 \lstick{$\ket{0}$} & \gate[]{} & \gate[2,style={fill=white}]{} & \gate[]{} & \qw & \gate[]{} & \gate[2,style={fill=white}]{} &\gate[]{} & \qw & \meter{} \\
 \lstick{$\ket{0}$} & \gate[]{} & 			& \gate[]{} & \gate[2,style={fill=white}]{} & \gate[]{} &			& \gate[]{} & \gate[2,style={fill=white}]{}  & \meter{} \\
 \lstick{$\ket{0}$} & \gate[]{} & \gate[2,style={fill=white}]{} & \gate[]{} & 			& \gate[]{} & \gate[2,style={fill=white}]{} & \gate[]{} &			 & \meter{} \\
 \lstick{$\ket{0}$} & \gate[]{} & 	 		& \gate[]{} & \gate[2,style={fill=white}]{} & \gate[]{} &			& \gate[]{} & \gate[2,style={fill=white}]{}  & \meter{} \\
 \lstick{$\ket{0}$} & \gate[]{} & \qw 		& \gate[]{} & 			& \gate[]{} & \qw	    & \gate[]{} & 			 & \meter{}
 \end{quantikz}
  \caption{RCS benchmarking is an efficient algorithm to estimate the total amount of noise, including all crosstalks, on a layer of arbitrary two-qubit gates (white boxes) by implementing an alternating architecture interleaved with Haar random single qubit gates (blue boxes).}
  \label{fig:rcsbenchmarking}
\end{minipage}
\hfill 
\begin{minipage}{0.53\textwidth}
\centering
  \begin{algorithm}[H]
    \caption{\raggedright RCS benchmarking (simplified)}\label{alg:rcsbenchmarking}
    \raggedright\textbf{Input:} number of qubits $n$, maximum circuit depth $D$, number of circuits $L$\\
    \textbf{Output:} effective noise rate (ENR)
    \begin{algorithmic}[1]
    \For {$d=1\dots D$}
        \For {$i=1\dots L$}
        \State sample a random circuit $C_i\sim\mathrm{RQC}(n,d)$
        \State estimate the fidelity of $C_i$, denote as $\hat{F}_{d,i}$ \State\Comment{fidelity estimation}
        \EndFor
        \State $\hat{F}_d:=\frac{1}{L}\sum_{i=1}^L \hat{F}_{d,i}$
    \EndFor
    \State fit exponential decay $F=Ae^{-\lambda d}$ using data $\{\hat{F}_d\}_{d=1}^D$
    \State\textbf{Return} $\lambda$
    \end{algorithmic}
    \end{algorithm}
\end{minipage}
\end{figure}

\vspace{1.5mm}
\noindent\textbf{Result 1: exponential decay of average fidelity.} Let $\mathrm{RQC}(n,d)$ denote the ensemble of random quantum circuits with $n$ qubits and depth $d$ as shown in Fig.~\ref{fig:rcsbenchmarking}. An ideal implementation of a random circuit $C\sim\mathrm{RQC}(n,d)$ creates a pure state $\ket{\psi}=C\ket{0^n}$, while due to noise the experimental implementation corresponds to a mixed state $\rho$, and the fidelity of $C$ is $F=\expval{\rho}{\psi}=\bra{0^n}C^\dag \rho C\ket{0^n}$. The average fidelity is given by $\E_{C\sim\mathrm{RQC}(n,d)}F$ which we denote by $\E F$. Consider an arbitrary $n$-qubit noise channel acting on each layer of gates, which can be described as $\mc N(\rho)=\sum_{\alpha,\beta\in\{0,1,2,3\}^n}\chi_{\alpha\beta}\sigma_{\alpha}\rho \sigma_{\beta}$, where $(\chi_{\alpha\beta})$ is a positive semi-definite matrix known as the process matrix. We show that only the Pauli-diagonal component of the noise channel $\mc N^{\mathrm{diag}}(\rho)=\sum_{\alpha\in\{0,1,2,3\}^n}\chi_{\alpha\alpha}\sigma_{\alpha}\rho \sigma_{\alpha}$ is effective due to the twirling effect of random circuits (Appendix~\ref{sec:reducetopaulinoise}), in the sense that the average fidelity with noise channel $\mc N$ equals the average fidelity with $\mc N^{\mathrm{diag}}$, so without loss of generality we assume a Pauli noise channel $\mc N(\rho)=\sum_{\alpha\in\{0,1,2,3\}^n}p_\alpha\sigma_\alpha\rho\sigma_\alpha$, and the goal is to estimate the effective noise rate (ENR) $\lambda=\sum_{\alpha\neq 0^n}p_\alpha$. Our main result shows that $\E F\approx e^{-\lambda d}$, which implies that $\lambda$ can be extracted via estimating $\E F$.

\begin{theorem*}[Main result]\label{thm:fidelitydecay}
For random quantum circuits in 1D and 3-local noise channel with effective noise rate $\lambda$, the average fidelity is given by $
    e^{-\lambda d}\leq \E F\leq e^{-\lambda d}(1+K\lambda)$
up to a first-order approximation in $\lambda$. Here $K$ is a universal constant, and we assume $d\ll 2^n$.
\end{theorem*}

For simplicity here we focus on random circuits with Haar random 2-qubit gates. The same result is expected to hold for circuits with fixed 2-qubit gates interleaved by Haar random single-qubit gates as in Fig.~\ref{fig:rcsbenchmarking} as they share similar properties such as convergence to unitary $t$-designs~\cite{Harrow2009random,Brandao2016,harrow2018approximate,haferkamp2020quantum,haferkamp2020improved}, which we also confirm via numerical simulations. The starting point of the proof (Appendix~\ref{sec:fidelitydecay}) is to decompose $\E F$ via the law of total expectation by conditioning on the number of errors that happen in the circuit, $\E F=\E F_0+\E F_1+\sum_{k\geq 2}\E F_k$, where $\E F_k := \Pr[k\text{ errors happen}]\cdot\E[F|k\text{ errors happen}]$. Note that $\E F_0=\Pr[\text{no error happens}]\approx e^{-\lambda d}$, so the goal is to prove that all $\E F_k$ with $k\geq 1$ are small compared with $\E F_0$. Intuitively $\E F_k$ has a $\lambda^k$ factor and should decrease quickly with $k$ when $\lambda$ is small, so we make a first-order approximation by ignoring the $k\geq 2$ terms and focusing on the first-order contribution $\E F_1$. It is easy to prove that this approximation is valid when circuit depth $d\leq c/\lambda$ for some small constant $c$, and our numerical simulations suggest that it remains valid with experimentally relevant noise rates and circuit depths. Next, we show that $\E F_1=O(\lambda\cdot e^{-\lambda d})$ by mapping this quantity to the partition function of a classical spin model and then use a domain wall counting argument to analytically bound the partition function. This technique has been used to analyze properties of random quantum circuits~\cite{Nahum2018operator,Zhou2019emergent,hunterjones2019unitary,bao2020theory,barak2020spoofing,dalzell2020random} and here we generalize this to the context of noise benchmarking. While our rigorous arguments for showing $\E F_1=O(\lambda\cdot e^{-\lambda d})$ apply to arbitrary 3-local errors which capture most error sources in quantum devices, numerical simulations suggest that it holds for arbitrary errors. Combining the above arguments gives the exponential decay $\E F\approx e^{-\lambda d}$ when $\lambda$ is upper bounded by a small constant, which we also directly verify by simulating the average fidelity with different correlated noise models and gate sets. Our numerical simulation results are presented in Fig.~\ref{fig:xeb_mcwf_simulations} and Appendix~\ref{sec:rcstheory} and~\ref{app:numerical}.

\begin{figure}[t]
   \centering
   \subfloat[]{
    \centering
    \includegraphics[width=0.35\linewidth]{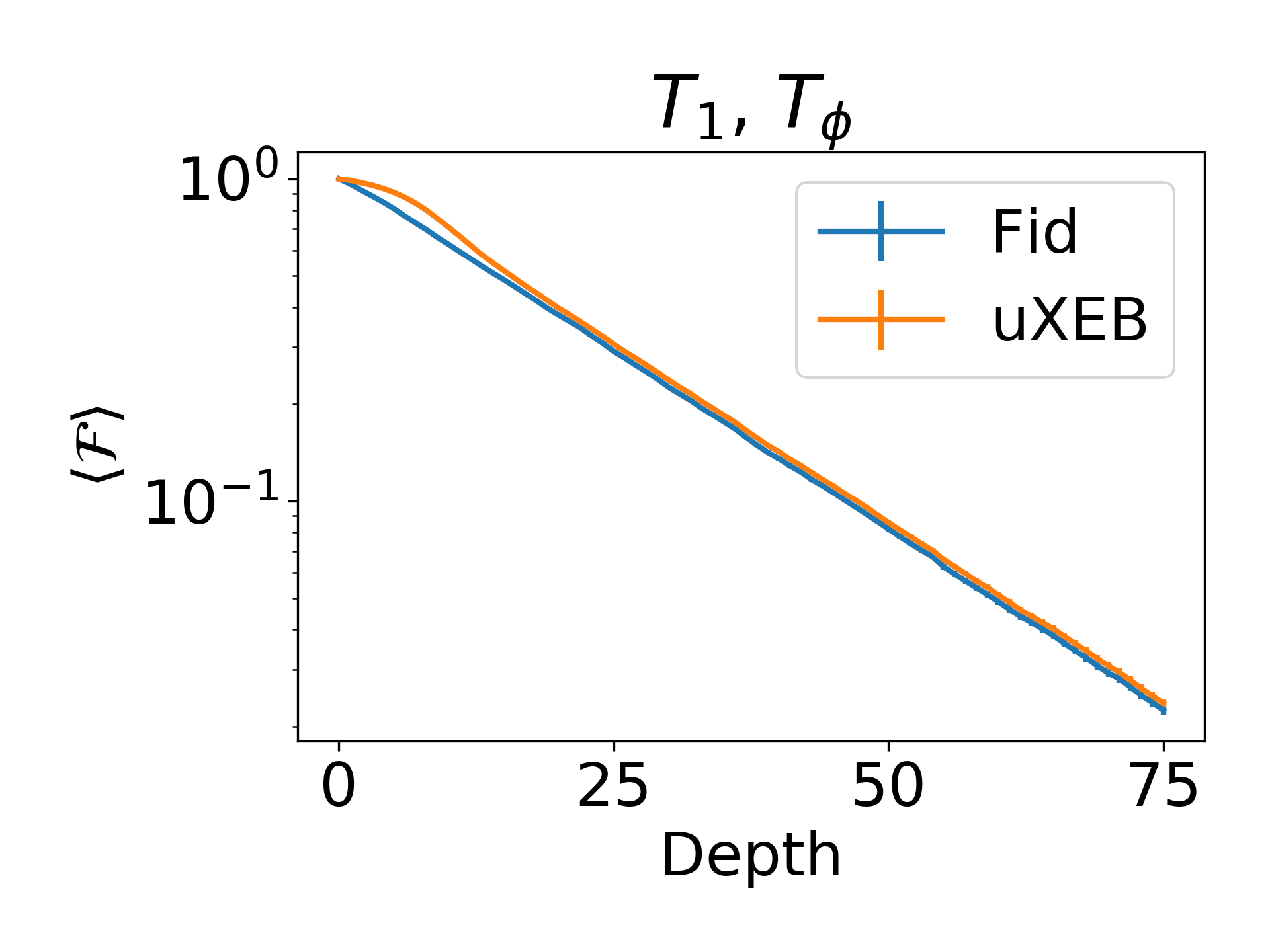}
    }
    \subfloat[]{
    \centering
    \includegraphics[width=0.35\linewidth]{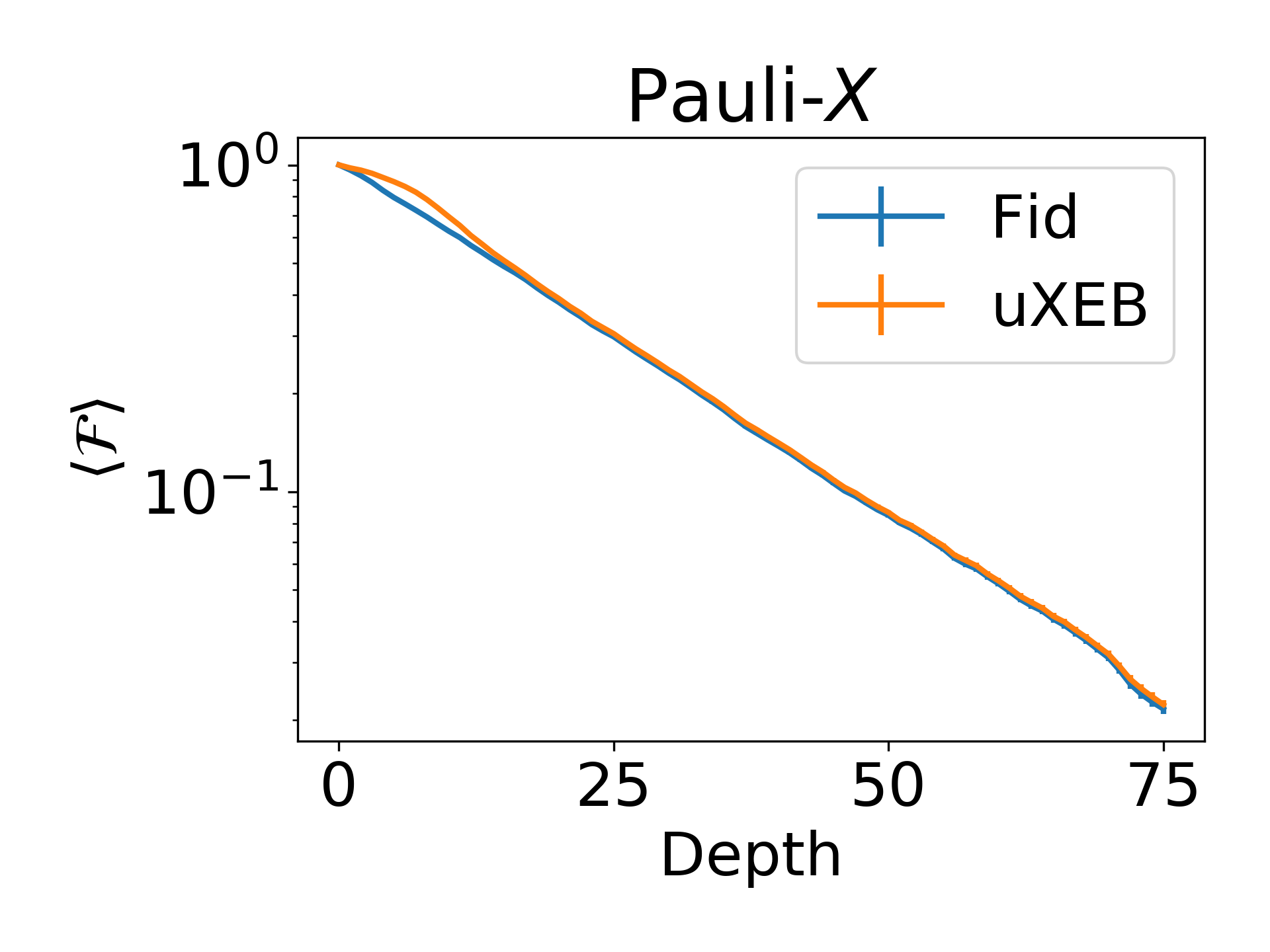}
    }
    \\
    \subfloat[]{
    \centering
    \includegraphics[width=0.35\linewidth]{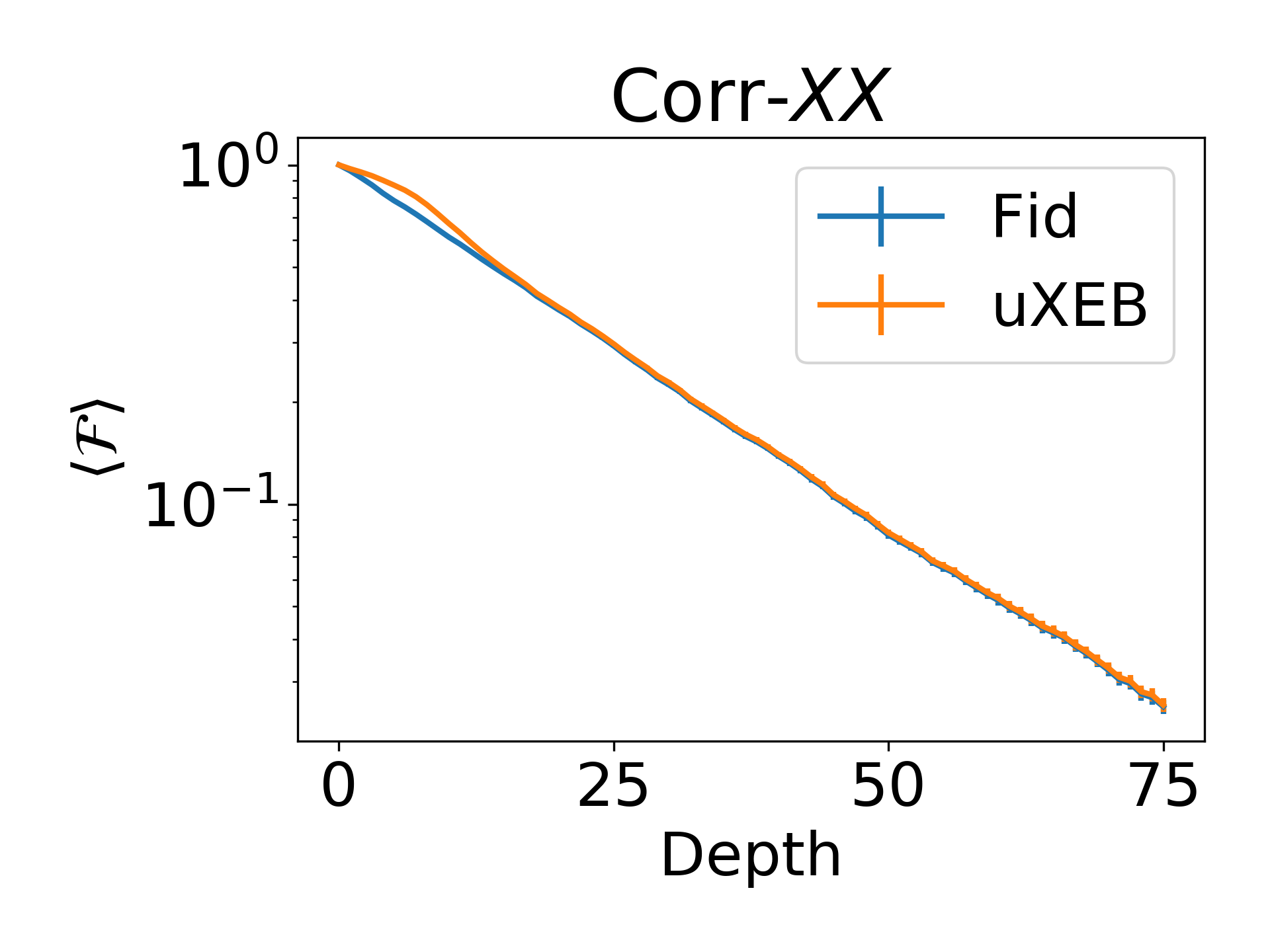}
    }
    \subfloat[]{
    \centering
    \includegraphics[width=0.35\linewidth]{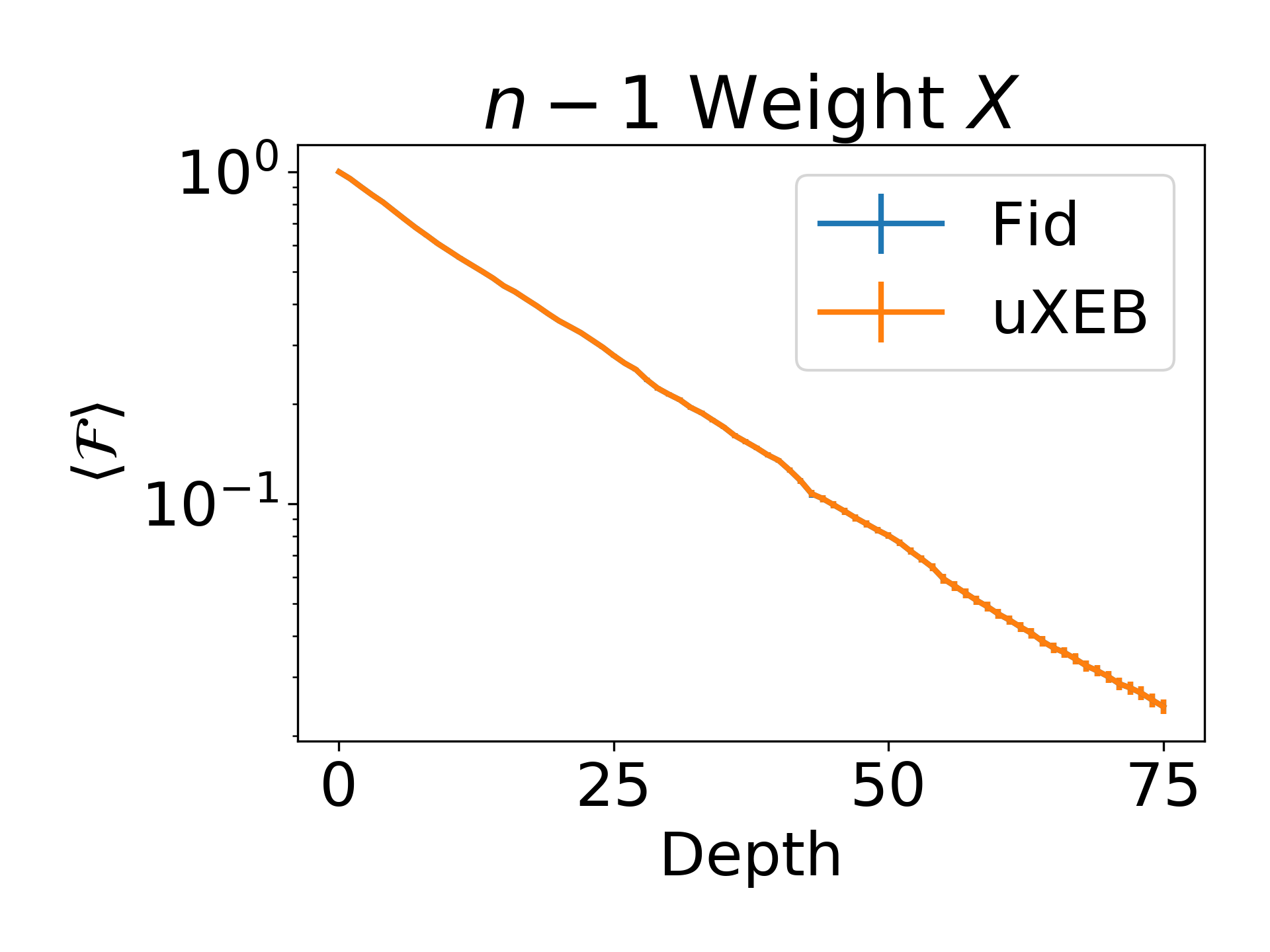}
    }

\caption{Numerical simulations using the Monte Carlo wave function (MCWF) method for various noise models. The system is modeled as perfect gates followed by evolution for one time unit under noisy channels~\cite{otten2019accounting} using the Lindblad master equation~\cite{otten-prb-2015} $\frac{\mathrm{d} \rho}{\mathrm{d} t} = \sum_i \gamma_i D[J_i](\rho)$, where the sum is over different noise channels, $D[J_i](\rho) = J_i \rho J_i^\dagger  - \frac{1}{2}(J_i^\dagger J_i \rho + \rho J_i^\dagger J_i)$ is a Lindblad superoperator for generic collapse operator $J_i$, and $\gamma_i$ controls the noise strength. The unbiased linear cross entropy agrees with the fidelity for all depths above a small threshold and correctly predicts the ENR. The noise models include: (a) single qubit amplitude-decay and pure dephasing, (b) single qubit Pauli-$X$ noise, (c) nearest-neighbor correlated $XX$ noise, (d) correlated $X$ noise with weight $n-1 = 19$, which is an artificial high-weight noise model. Here we simulate $n=20$ qubits on a 1D ring with noise strength $\gamma=0.0025$. Each global noise channel has an ENR of $\lambda_{\mathrm{true}} = n\gamma = 0.05$ by design. We average over
100 random circuits consisting of layers of two-qubit Haar-random unitaries and use 400 noise trajectories for each circuit at each depth. We fit the uXEB curves from depths 20 to 50.}
\label{fig:xeb_mcwf_simulations}

\vspace{5mm}
\begin{tabular}{ |c|c|c|c| }
    \hline
    Description & Lindblad &  $\lambda_{F}$ & $\lambda_{\mathrm{uXEB}}$ \\
    \hline
    $T_1$, $T_\phi$ & $\gamma D[\ketbra{0}{1}] + 2\gamma D[\ketbra{1}]$ & 0.0511(2) & 0.0511(2) \\\hline
Pauli-$X$ & $\gamma D[X]$ & 0.0508(2) & 0.0509(2) \\\hline
Corr-$XX$ & $\gamma D[X_i X_{i+1}]$ & 0.0505(3) & 0.0505(3) \\\hline
$n-1$ Weight $X$ & $\gamma D[\prod_{i\ne j} X_i]$ & 0.0506(3) & 0.0506(3) \\\hline
  \end{tabular}
  \captionof{table}{Curve fitting results for the numerical simulation in Fig.~\ref{fig:xeb_mcwf_simulations}. $\lambda_F$ and $\lambda_{\mathrm{uXEB}}$ shows the simulated RCS benchmarking result, which corresponds to the decay rate of fidelity and unbiased linear cross entropy, respectively.}
  \label{tab:noise_table}
\end{figure}

\vspace{1.5mm}
\noindent\textbf{Result 2: fidelity estimation and variance.} The above result outlines a procedure to extract $\lambda$ via estimating $\E F$ (Algorithm~\ref{alg:rcsbenchmarking}). Here the depth-independent coefficient $A$ corresponds to SPAM errors, and both $A$ and $\lambda$ can be extracted via curve fitting. To complete this we need a sample-efficient estimator of fidelity (line 4 and 5 of Algorithm~\ref{alg:rcsbenchmarking}). This is non-trivial as direct fidelity estimation (DFE)~\cite{flammia2011direct,dasilva2011practical} requires an exponential number of samples in the worst case. It has been suggested through heuristic arguments~\cite{Boixo2018Characterizing,arute2019quantum,choi2021emergent} that (unbiased) linear cross entropy appears to be a sample-efficient estimator for the fidelity of noisy random circuits. For a random circuit $C$ with output distribution $p_C(x)=\left|\mel{x}{C}{0^n}\right|^2$, the linear cross entropy estimator with $M$ samples $S=\{x_i\}_{i=1}^M$ is given by $\hat{F}_{\mathrm{XEB}}(S;C)=\frac{2^n}{M}\sum_{i=1}^M p_C(x_i) -1$. In experiments, after collecting the output samples, we perform exact classical simulation of the ideal circuit $C$ to compute the probabilities.
We also consider the unbiased linear cross entropy estimator~\cite{rinott2020statistical} defined as
\begin{equation}\label{eq:defunbiasedxeb}
    \hat{F}_{\mathrm{uXEB}}(S;C)=\frac{\frac{2^n}{M}\sum_{i=1}^M p_C(x_i) -1}{2^n\sum_{x\in\{0,1\}^n}p_C(x)^2-1}.
\end{equation}
The term ``unbiased" can be understood as follows: when the samples $S$ come from the ideal distribution $p_C(x)$, we have $\E_S\hat{F}_{\mathrm{uXEB}}(S;C)=1$, while $\E_S\hat{F}_{\mathrm{XEB}}(S;C)$ can be exponentially large. Note that for random quantum circuits the denominator $2^n\sum_{x\in\{0,1\}^n}p_C(x)^2-1$ approaches 1 in log depth~\cite{barak2020spoofing,dalzell2020random}, and therefore the two estimators give the same value as depth increases. In our experiments we use the unbiased linear cross entropy estimator by default, as it is more accurate at small constant depth. The main advantage of cross entropy estimators compared with DFE is that $O(1/\varepsilon^2)$ measurement samples suffice for estimating the fidelity of a random circuit within $\varepsilon$ additive error.

To further justify the connection between unbiased linear cross entropy and fidelity, we perform extensive numerical simulations under correlated noise models. Fig.~\ref{fig:xeb_mcwf_simulations} shows the results of these simulations, which include the exponential decay curves of fidelity and the unbiased linear cross entropy, and error bars correspond to the standard error of the mean across different circuits which are too small to be seen on the plot. Note that the unbiased linear cross entropy estimates the true fidelity very well in all noise models except for very small depths. The curve fitting results are shown in Table~\ref{tab:noise_table}. Here, both the decay rate of fidelity and unbiased linear cross entropy agree very well with the effective noise rate of the underlying noise model. These results verify our theoretical argument on the exponential decay of fidelity, as well as the correctness of unbiased linear cross entropy as a fidelity estimator, for both i.i.d. and highly correlated noise models. Additional simulation results with other system sizes, fidelity estimators, noise rates and gate sets are presented in Appendix~\ref{app:numerical}.

Next, we show evidence that the variance of cross entropy estimators for a random circuit scale as $O\left(1/M+\lambda^2\left(\E F\right)^2\right)$, where $M$ is the number of samples collected for each circuit (Appendix~\ref{sec:rcsvariance}). In a large scale experiment the second term is much smaller than the first term due to the exponential decay of $\E F$, and therefore it suffices to collect a large number of samples for few circuits to estimate $\E F$ within small additive error. These results provide evidence that supports Google's claim that only 10 random circuits with a large number of samples per circuit are sufficient to estimate the linear cross entropy in their experiment. On the other hand, the second term dominates for RCS benchmarking with a small number of qubits, and it is necessary to average over many ($\sim 100$) different circuits to obtain good error bars, such as in our experiments below. 

\begin{figure*}[t]
    \centering
   \subfloat[Simultaneous RB]{\includegraphics[width=0.43\linewidth]{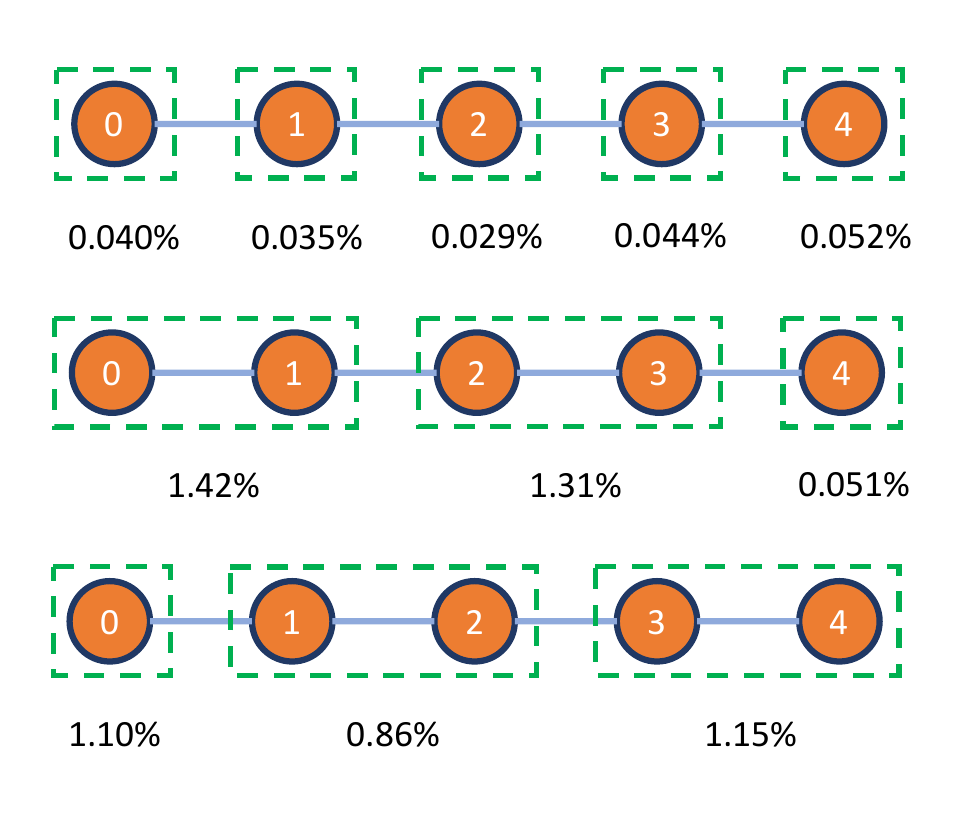}}
 \subfloat[RCS benchmarking]{\includegraphics[width=0.53\linewidth]{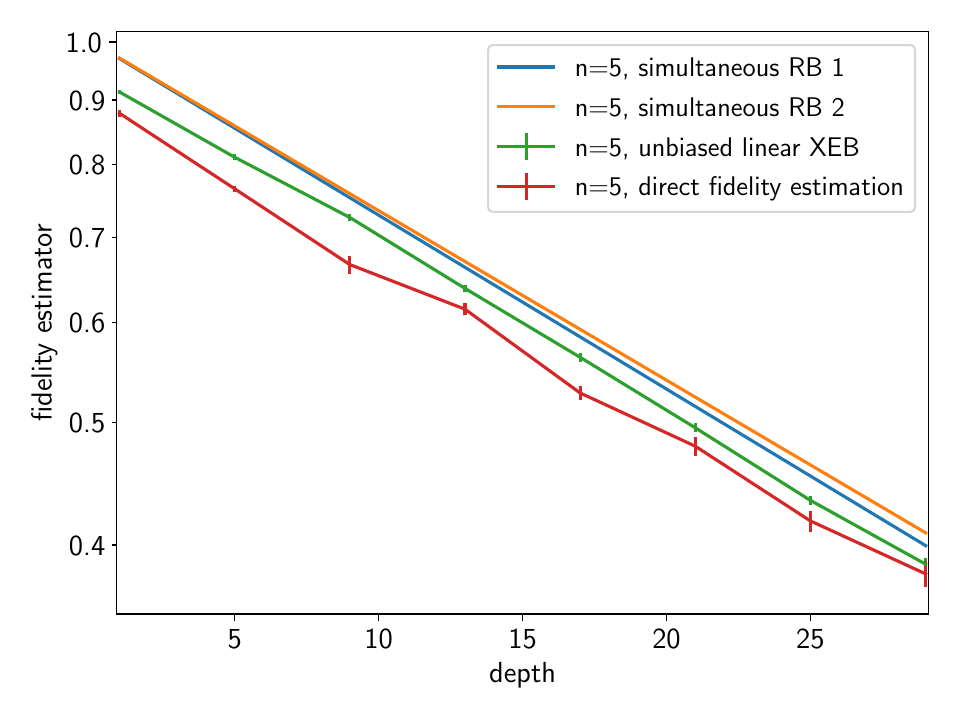}}
  \caption{Experimental implementation of RCS benchmarking on \texttt{ibmq\_athens} with 5 qubits. (a) Simultaneous RB, where parallel RB sequences are implemented for three gate patterns (green boxes) used in RCS benchmarking. The error rates underlying the single/two qubit green boxes are the RB results represented in Pauli error for Haar random single qubit gates and $\mathrm{CNOT}$ gates, respectively. (b) Exponential decay curves for RCS benchmarking with direct fidelity estimation and cross entropy. The decay rates, which represent the total amount of quantum noise per layer, are $\lambda_{\mathrm{DFE}}=3.05(5)\%$ and $\lambda_{\mathrm{uXEB}}=3.08(3)\%$. As a reference, the simultaneous RB estimator gives $\lambda_{\mathrm{sRB}}=3.13(4)\%$. We select 8 depth values ranging from 1 to 29. For DFE we implement 30 random circuits for each depth, and estimate the fidelity of each circuit by measuring 20 Fourier coefficients, which is 600 circuits for each depth. For cross entropy, we implement 100 random circuits for each depth. 8192 measurement samples are collected for each circuit. For both DFE and uXEB experiments, we use the standard error of the mean across different circuits as the error bar. The decay rate and its standard error are computed from the data points and error bars via standard least squares fitting. For the simultaneous RB estimator, we use half the difference between the two RB experiments as the standard error.}
  \label{fig:rcs5qubit}
\end{figure*}

\vspace{1.5mm}
\noindent\textbf{Result 3: experiments on IBM Quantum hardware.} We implement RCS benchmarking via experiments on IBM Quantum hardware~\cite{ibmquantum} with up to 20 superconducting qubits in one dimension. On these devices, $\mathrm{CNOT}$ is the only 2-qubit gate available, and arbitrary single-qubit gates are easy to implement, which have error rates that are roughly 2 orders of magnitude smaller than $\mathrm{CNOT}$. Therefore, in RCS benchmarking we are effectively measuring the total amount of quantum noise in a layer of $\mathrm{CNOT}$ gates. Fig.~\ref{fig:rcs5qubit} shows experiment results on a 5-qubit device, see Appendix~\ref{sec:experiments} for more details and larger experiments. 

Here we perform three types of experiments: simultaneous RB, RCS with direct fidelity estimation, and RCS with cross entropy. In simultaneous RB~\cite{Gambetta2012characterization}, the main idea is to perform different RB sequences in parallel instead of performing RB on one pair of qubits while all other qubits are idle. A similar experiment was performed in Google's experiment~\cite{arute2019quantum} where linear cross entropy was used as the post processing method instead of standard RB. We implement simultaneous RB on each of the three patterns shown in Fig.~\ref{fig:rcs5qubit}a. The numbers underlying single qubit boxes represent the error rates of two X90 pulses ($\sqrt{X}$), which can represent the Pauli error rate of a Haar random single qubit gate. The numbers underlying two qubit boxes represent the Pauli error rate of $\mathrm{CNOT}$ gates. The results demonstrate some crosstalk behavior. For example, notice that in the third pattern in Fig.~\ref{fig:rcs5qubit}a, there is a large (1.10\%) error rate on qubit 0, which is not present in the single qubit simultaneous RB. This suggests that a $\mathrm{CNOT}$ gate on qubit 1 and 2 can introduce a large error on qubit 0 due to crosstalk. Interestingly, in this experiment a $\mathrm{CNOT}$ gate on qubit 2 and 3 did not introduce additional errors on qubit 4.

Next we show results of RCS benchmarking with direct fidelity estimation and cross entropy. Note that DFE is not scalable due to the exponential sample complexity, and is implemented here mainly to verify our theoretical predictions. The results are shown in Fig.~\ref{fig:rcs5qubit}b, where all curves are exponential decays with roughly the same decay rate. The curves have different intercepts because DFE and uXEB experiments have different SPAM errors due to the additional overhead of DFE. From the curve fitting results, we can see that $\lambda_{\mathrm{DFE}}$ and $\lambda_{\mathrm{uXEB}}$ agrees with each other within the standard error. This confirms our theoretical results on the exponential decay of fidelity, and also verifies the validity of cross entropy as an efficient fidelity estimator. Also, note that uXEB is much more sample efficient than DFE, where the error bars for DFE are larger even when we collect 6 times the amount of samples in uXEB. As a reference, we implement simultaneous RB both before and after the RCS experiments, which can also be used to evaluate the error drift during the experiment period. In Fig.~\ref{fig:rcs5qubit}b we present heuristic fidelity estimators defined by multiplying individual gate fidelities measured by simultaneous RB experiments. Note that the simultaneous RB estimator gives a slightly larger prediction for the effective noise rate (also see below). 

\vspace{1.5mm}
\noindent\textbf{Application to diagnosing crosstalk.} Diagnosing and reducing crosstalk errors is a central step towards achieving fault-tolerance. These errors can be characterized as two types~\cite{Sarovar2020detectingcrosstalk}: the first is non-locality, where a correlated noise channel acts non-locally on multiple gates in the same layer; the second is dependence, where the noise channel on some gate depends on the other gates being implemented simultaneously. The second type can be demonstrated by comparing simultaneous RB results (such as in Fig.~\ref{fig:rcs5qubit}a). We show RCS benchmarking can provide information about the first type of crosstalk in a layer of arbitrary two-qubit gates.

In Google's experiment~\cite{arute2019quantum}, the second type of crosstalk is clearly present, as the average two-qubit gate error increases from 0.36\% as measured by individual RB to 0.62\% as measured by simultaneous RB. Regarding the first type of crosstalk, Google observed in their paper that their experimental linear cross entropy was consistent with a simple uncorrelated noise model $\hat{F}_{\mathrm{sRB}}=\prod_{i=1}^m(1-e_i)$, where $m$ is the number of gates and $e_i$ is the error rate of the $i$th gate measured by simultaneous RB. A similar behavior also happens in our experiment in Fig.~\ref{fig:rcs5qubit}b. Based on these experimental results, they claimed this coincidence indicated that the noise in their device was uncorrelated, i.e. the first type of crosstalk was not significant. Our results on RCS benchmarking provides formal evidence to support such a conclusion. This can be understood as follows. Imagine that a correlated (or high weight) Pauli error occurs in the device that acts on multiple gates, then it will be captured by multiple 2-qubit RB sequences. For example, if the error happens with 1\% chance and is supported on gate $i$ and gate $j$, then it will contribute 1\% to both $e_i$ and $e_j$. On the other hand, our result suggests that this error will only contribute 1\% in the effective noise rate $\lambda$. Therefore correlated errors across multiple gates will be counted multiple times by $\hat{F}_{\mathrm{sRB}}$ but only one time by the linear cross entropy and fidelity, so an experiment with significant correlated noise will demonstrate a deviation between $\hat{F}_{\mathrm{sRB}}$ and linear cross entropy. In Fig.~\ref{fig:sim_rb_overestimate} we show a concrete simulated experiment with correlated noise to demonstrate such a deviation. This provides formal evidence that correlated noise was not significant in Google's experiment.

\begin{figure}[t]
    \centering
    \includegraphics[width=0.5\linewidth]{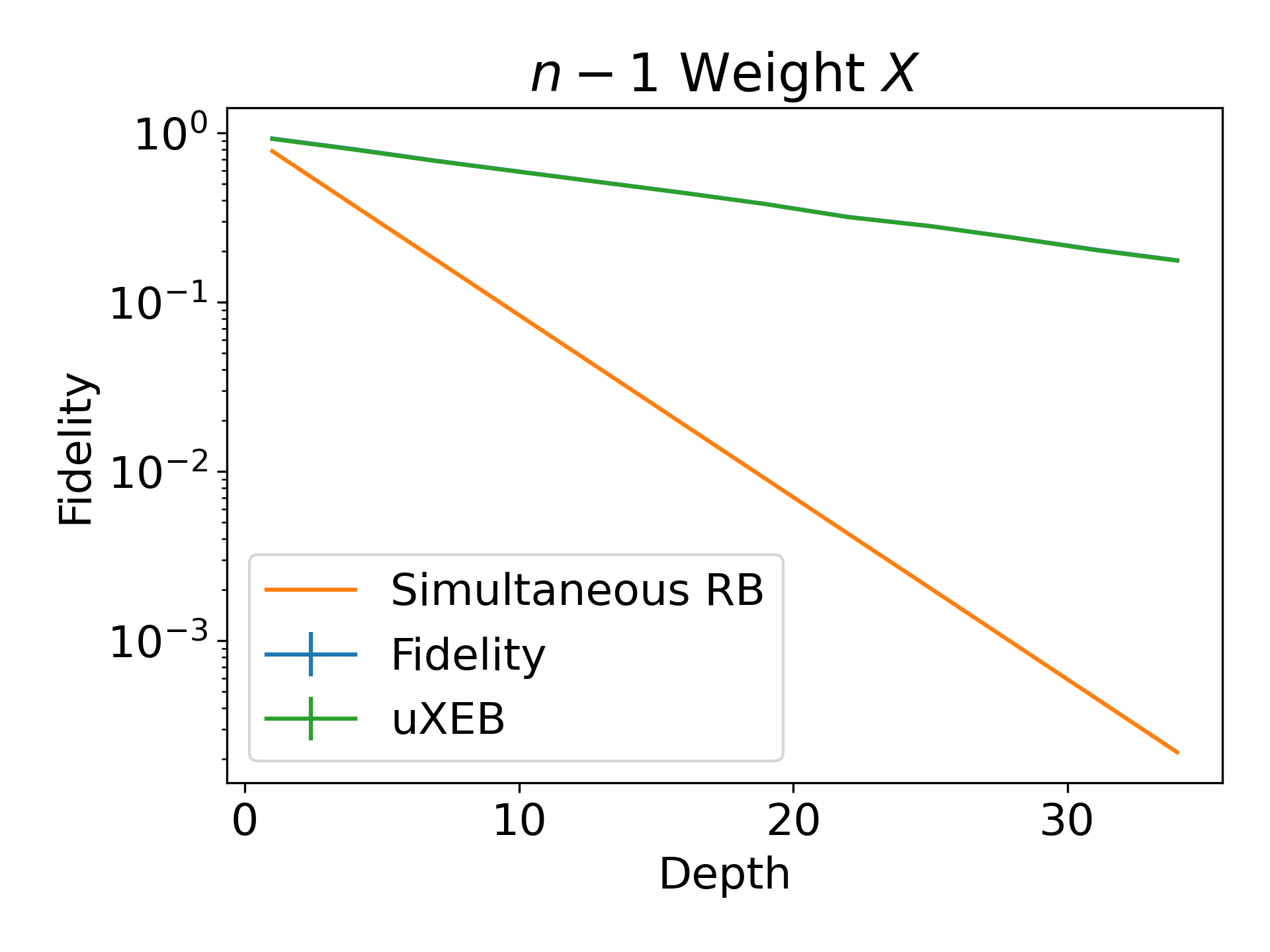}
    \caption{Simultaneous RB greatly overestimates noise when high-weight Pauli errors (the noise model in Fig.~\ref{fig:xeb_mcwf_simulations}d) are used. Here fidelity and uXEB curves overlap and are indistinguishable on the scale of the plot, and error bars are too small to be seen.}
    \label{fig:sim_rb_overestimate}
\end{figure}

\vspace{1.5mm}
\noindent\textbf{Discussion.} We show that random circuit sampling is a powerful tool for benchmarking quantum noise, which can be viewed as a practical application of sampling-based quantum supremacy experiments. As larger scale quantum devices are being built, an important task is to develop efficient benchmarking protocols that can jointly test the performance of all of the qubits. Several holistic benchmarking protocols are proposed for this purpose, including quantum volume~\cite{Cross2019validating}, accreditation protocols~\cite{Ferracin_2019,ferracin2021experimental} and others~\cite{BlumeKohout2020volumetricframework,mills2020applicationmotivated,dong2020random,proctor2020measuring,otten2019recovering}. RCS benchmarking can also be understood in this context, where the effective noise rate characterizes the global noise strength when all two-qubit couplings are turned on, which can help predict the fidelity of running large scale circuits as well as inform the design of error correction codes.

While we have demonstrated the feasibility of RCS benchmarking for $\sim 20$ qubits, there are two main challenges when considering running RCS benchmarking in a larger scale. First, as shown in our main result, a necessary condition for the correctness of RCS benchmarking is the effective noise rate $\lambda$ upper bouned by a small constant, that is, the effective noise rate per qubit scales as $O(1/n)$. This can be achieved if gate errors decrease as the number of qubits increases in hardware development. Second, the computation in RCS benchmarking becomes intractable when the number of qubits exceeds classical simulability, as computing the linear cross entropy (as well as other fidelity estimators) requires exact simulation of random quantum circuits, which makes the computation steps in RCS benchmarking inefficient. 

Here we discuss two potential ways to overcome the computation barrier of RCS benchmarking in order to scale to 50+ qubits, assuming the effective noise rate is sufficiently small. First, for RCS benchmarking with generic gate sets, we can design variants of the fidelity estimation procedure which requires simulating groups of correlated amplitudes, where tractable tensor network simulation algorithms exist~\cite{pan2021simulating}. Second, when the native 2-qubit gate is a Clifford gate, such as the $\mathrm{CNOT}$ gate in IBM's hardware platform, the computation barrier can be avoided by using random single qubit Clifford gates in between $\mathrm{CNOT}$ layers, where the output probabilities are easy to compute. We expect the exponential decay of fidelity to still hold in this case, as the analysis in our main result only involves second moments while Clifford circuits can generate unitary 3-design. 

\begin{acknowledgments}
We thank Yuri Alexeev, Yimu Bao, Adam Bouland, Soonwon Choi, Fred Chong, Steve Flammia, Pranav Gokhale, Zeph Landau, Richard Rines, Martin Suchara, Ryan Wu for helpful discussions. We would like to especially thank Umesh Vazirani for many helpful comments and suggestions. Y.L. would like to thank the IBM Quantum team for helpful discussions. Y.L. was supported by DOE NQISRC QSA grant \#FP00010905, Vannevar Bush faculty fellowship N00014-17-1-3025, MURI Grant FA9550-18-1-0161 and NSF award DMR-1747426. Part of the experiments were performed when Y.L. was a research intern at IBM. B.F. and R.B. acknowledge support from AFOSR (YIP number FA9550-18-1-0148 and FA9550-21-1-0008). B.F. additionally acknowledges support from the National Science Foundation under Grant CCF-2044923 (CAREER).  L.J. acknowledges support from the ARO (W911NF-18-1-0020, W911NF-18-1-0212), ARO MURI (W911NF-16-1-0349), AFOSR MURI (FA9550-19-1-0399), NSF (EFMA-1640959, OMA-1936118, EEC-1941583), NTT Research, and the Packard Foundation (2013-39273). This research used the Savio computational cluster resource provided by the Berkeley Research Computing program at the University of California, Berkeley (supported by the UC Berkeley Chancellor, Vice Chancellor for Research, and Chief Information Officer).

Y.L. and M.O. contributed equally to this work.
\end{acknowledgments}

\bibliography{ref}
\appendix
\tableofcontents
\section{Overview of RCS benchmarking}
\label{sec:rcsoverview}
In this section, we give an overview of the RCS benchmarking protocol after introducing basic notations, and then briefly introduce the results which we develop in the rest of the paper, including theory of RCS fidelity decay, fidelity and variance estimation. We also discuss the relationship between RCS and other benchmarking protocols.

\subsection{Setup and notations}

Fig.~\ref{fig:rcsbenchmarkingapp} shows the ensemble of random quantum circuits used throughout the paper. The system of qubits considered in our theoretical results and experiments are in one dimension, as shown in the figure, and we expect our results to be generalizable to higher dimensional lattices, where a similar alternating circuit architecture can be used such as in Google's experiment, as well as more general connectivity graphs. In the theory model of RCS (Fig.~\ref{fig:rcsbenchmarkingapp}a), we consider circuits which consist of random 2-qubit gates drawn independently from the Haar measure on $\mathbb{U}(4)$. This model is used for the analysis of fidelity decay in section~\ref{sec:rcstheory}. In practice, as random 2-qubit gates are hard to implement, we consider RCS with layers of fixed 2-qubit gates with random single qubit gates in between (Fig.~\ref{fig:rcsbenchmarkingapp}b). Our numerical simulation and experiments suggest that this model also creates an exponential decay in the same way as the theory model. The architecture in Fig.~\ref{fig:rcsbenchmarkingapp}b is suitable for implementation on current quantum platforms, which usually optimize for a fixed 2-qubit gate. For example, our experiments on IBM Quantum hardware use CNOT gates with random single qubit gates drawn from the Haar measure on $\mathbb{U}(2)$.

Let $\mathrm{RQC}(n,d)$ denote the ensemble of random quantum circuits with $n$ qubits and depth $d$ as shown in Fig.~\ref{fig:rcsbenchmarkingapp}. Here we define depth as the number of layers of 2-qubit gates, and both Fig.~\ref{fig:rcsbenchmarkingapp}a and \ref{fig:rcsbenchmarkingapp}b correspond to $n=5$ and $d=4$. An ideal implementation of a random circuit $C\sim\mathrm{RQC}(n,d)$ creates a pure state $\ket{\psi}=C\ket{0^n}$, while due to noise the experimental implementation corresponds to a mixed state $\rho$, and the fidelity of the circuit $C$ is defined as 
\begin{equation}
  F=\expval{\rho}{\psi}=\bra{0^n}C^\dag \rho C\ket{0^n},
\end{equation}
which is a random variable that depends on $C$. The average fidelity is then given by $\E_{C\sim\mathrm{RQC}(n,d)}F$, and we drop the subscript when unnecessary.

When assuming a gate-independent and Markovian noise channel between the layers, we show in section~\ref{sec:rcstheory} that $\E F\approx e^{-\lambda d}$ for shallow depth circuits using the model in Fig.~\ref{fig:rcsbenchmarkingapp}a, where $\lambda$ is the total amount of Pauli noise for each layer, which we define as the effective noise rate (ENR). More specifically, consider an $n$-qubit noise channel $\mc N$ which can be uniquely specified as
\begin{equation}
  \mc N(\rho)=\sum_{\alpha,\beta\in\{0,1,2,3\}^n}\chi_{\alpha\beta}\sigma_{\alpha}\rho \sigma_{\beta},
\end{equation}
where $(\chi_{\alpha\beta})$ is a positive semi-definite matrix known as the process matrix, and $\sigma_{\alpha}\in\{I,X,Y,Z\}^{\otimes n}$ is a $n$-qubit Pauli operator. The effective noise rate is given by the sum of diagonal elements of the process matrix which corresponds to non-zero Pauli errors,
\begin{equation}
  \mathrm{ENR}(\mc N)=\sum_{\alpha\in\{0,1,2,3\}^n\setminus\{0^n\}}\chi_{\alpha\alpha}.
\end{equation}
We also consider the effective noise rate per qubit (ENRq) defined as $\mathrm{ENRq}=\mathrm{ENR}/n$, and also denote these quantities by $\lambda$ and $\lambda_q$, respectively. When the noise channel is a Pauli channel (that is, $\chi$ is a diagonal matrix), the effective noise rate is simply the sum of probabilities of all non-zero Pauli operators.

In practice, however, noise is highly gate-dependent. For example, in today's quantum hardware the noise rate of 2-qubit gates are roughly two orders of magnitude higher than the noise rate of single qubit gates. In the practical implementation of RCS as in Fig.~\ref{fig:rcsbenchmarkingapp}b, as two qubit gates are fixed and only single qubit gates (with much smaller error rates) are random, RCS benchmarking can be effectively viewed as benchmarking the effective noise rate of the noise channel introduced by the layer of 2-qubit gates, or more precisely, the average ENR of the two alternating layers of 2-qubit gates. This noise rate extracted from RCS benchmarking captures all local and cross talk errors among all qubits.

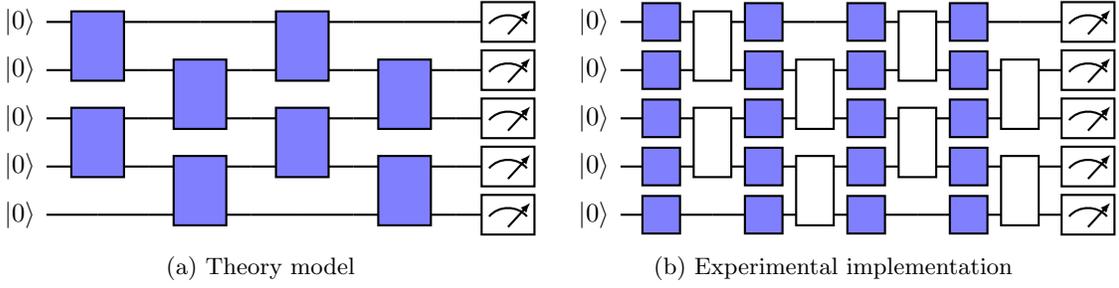
\begin{figure}[t]
  \tikzset{operator/.append style={fill=blue!50,minimum size=0.5cm}}
   \centering
   \subfloat[Theory model]{
    \centering
    \begin{quantikz}[column sep=0.33cm,row sep=0cm]
 \lstick{$\ket{0}$} & \gate[2][0.7cm]{} & \qw & \qw 		& \qw & \gate[2][0.7cm]{} & \qw & \qw		& \qw & \meter{} \\
 \lstick{$\ket{0}$} & 			& \qw & \gate[2][0.7cm]{} & \qw &			& \qw & \gate[2][0.7cm]{} & \qw & \meter{} \\
 \lstick{$\ket{0}$} & \gate[2][0.7cm]{} & \qw & 			& \qw & \gate[2][0.7cm]{} & \qw &			& \qw & \meter{} \\
 \lstick{$\ket{0}$} & 	 		& \qw & \gate[2][0.7cm]{} & \qw &			& \qw & \gate[2][0.7cm]{} & \qw & \meter{} \\
 \lstick{$\ket{0}$} & \qw 		& \qw & 			& \qw & \qw	    & \qw & 			& \qw & \meter{}
 \end{quantikz}}
 \subfloat[Experimental implementation]{
  \tikzset{operator/.append style={}}
    \centering
    \begin{quantikz}[column sep=0.4cm,row sep=0cm]
 \lstick{$\ket{0}$} & \gate[]{} & \gate[2,style={fill=white}]{} & \gate[]{} & \qw & \gate[]{} & \gate[2,style={fill=white}]{} &\gate[]{} & \qw & \meter{} \\
 \lstick{$\ket{0}$} & \gate[]{} & 			& \gate[]{} & \gate[2,style={fill=white}]{} & \gate[]{} &			& \gate[]{} & \gate[2,style={fill=white}]{}  & \meter{} \\
 \lstick{$\ket{0}$} & \gate[]{} & \gate[2,style={fill=white}]{} & \gate[]{} & 			& \gate[]{} & \gate[2,style={fill=white}]{} & \gate[]{} &			 & \meter{} \\
 \lstick{$\ket{0}$} & \gate[]{} & 	 		& \gate[]{} & \gate[2,style={fill=white}]{} & \gate[]{} &			& \gate[]{} & \gate[2,style={fill=white}]{}  & \meter{} \\
 \lstick{$\ket{0}$} & \gate[]{} & \qw 		& \gate[]{} & 			& \gate[]{} & \qw	    & \gate[]{} & 			 & \meter{}
 \end{quantikz}}\\
  \caption{RCS benchmarking with circuits consist of local random gates in an alternating architecture. (a) Theory model of RCS, where each two-qubit gate (blue box) is independently drawn from the Haar measure on $\mathbb{U}(4)$. (b) Experimental implementation of RCS, where only random single qubit gates are used (blue box) with fixed two-qubit gates (white box).}
  \label{fig:rcsbenchmarkingapp}
  \end{figure}
  
\begin{figure}[t]
  \begin{algorithm}[H]
    \caption{\raggedright RCS benchmarking (simplified)}\label{alg:rcsbenchmarkingapp}
    \raggedright\textbf{Input:} number of qubits $n$, maximum circuit depth $D$, number of circuits $L$\\
    \textbf{Output:} effective noise rate (ENR)
    \begin{algorithmic}[1]
    \For {$d=1\dots D$}
        \For {$i=1\dots L$}
        \State sample a random circuit $C_i\sim\mathrm{RQC}(n,d)$
        \State estimate the fidelity of $C_i$, denote as $\hat{F}_{d,i}$ \Comment{fidelity estimation}
        \EndFor
        \State $\hat{F}_d:=\frac{1}{L}\sum_{i=1}^L \hat{F}_{d,i}$
    \EndFor
    \State fit exponential decay $F=Ae^{-\lambda d}$ using data $\{\hat{F}_d\}_{d=1}^D$
    \State\textbf{Return} $\lambda$
    \end{algorithmic}
    \end{algorithm}
\end{figure}

\subsection{Fidelity estimation and variance}
\label{sec:fidelityestimation}
The core step of the RCS benchmarking protocol (Algorithm~\ref{alg:rcsbenchmarkingapp}) is fidelity estimation. Next we describe the fidelity estimation methods that we consider in RCS benchmarking and analyze the variance. Here we focus on sample complexity, which is the main bottleneck of the entire procedure for the scale that we consider ($\sim 20-30$ qubits). See section~\ref{sec:discussion} for discussions on computational complexity when considering RCS benchmarking in a scale that is beyond classical simulability.

For a small number of qubits, fidelity estimation can be done by running direct fidelity estimation (DFE) for each random circuit~\cite{flammia2011direct,dasilva2011practical}. Consider the Fourier expansion of the ideal state $\ketbra{\psi}=\frac{1}{2^{n/2}}\sum_\alpha \gamma_\alpha\sigma_\alpha$ with $\gamma_\alpha=\frac{1}{2^{n/2}}\Tr\left[\sigma_\alpha\ketbra{\psi}\right]$ and $\sum_\alpha\gamma_\alpha^2=1$. Let the noisy state from experiment be $\rho=\frac{1}{2^{n/2}}\sum_\alpha \gamma'_\alpha\sigma_\alpha$, then
\begin{equation}
    F=\expval{\rho}{\psi}=\sum_\alpha \gamma_\alpha\gamma'_\alpha=\E_{\alpha\sim \gamma_\alpha^2}\frac{\gamma'_\alpha}{\gamma_\alpha}.
\end{equation}
Therefore $F$ can be estimated by sampling a few Fourier coefficients from the distribution $\{\gamma_\alpha^2\}$, measuring them experimentally, and taking the empirical mean of $\gamma'_\alpha/\gamma_\alpha$. This procedure in general requires $O(2^n/\varepsilon^2)$ measurement samples in the worst case to obtain an estimate of $F$ within $\varepsilon$ additive error. In particular, if the circuit depth in RCS is sufficiently deep to generate a unitary 2-design, then the Fourier distribution of the output state is flat, which corresponds to the worst case in DFE. An interesting question is to study whether DFE can be improved for very low-depth random circuits.

Given the exponential sample complexity of DFE, we also consider sample efficient fidelity estimators based on cross entropy~\cite{Boixo2018Characterizing,Neill2018blueprint,arute2019quantum,rinott2020statistical}. In this work we justify the validity of cross entropy estimators by numerical simulation with different noise models and gate sets, and we leave the theoretical proof that cross entropy agrees with fidelity as important future work.

For a random circuit $C$ with output distribution $p_C(x)=\left|\mel{x}{C}{0^n}\right|^2$, the linear cross entropy estimator with $M$ samples $S=\{x_i\}_{i=1}^M$ is given by
\begin{equation}
    \hat{F}_{\mathrm{XEB}}(S;C)=\frac{2^n}{M}\sum_{i=1}^M p_C(x_i) -1.
\end{equation}
In experiments, after collecting the output samples, we perform exact classical simulation of the ideal circuit $C$ to compute the probabilities.
We also consider the unbiased linear cross entropy estimator~\cite{rinott2020statistical} defined as
\begin{equation}\label{eq:defunbiasedxebapp}
    \hat{F}_{\mathrm{uXEB}}(S;C)=\frac{\frac{2^n}{M}\sum_{i=1}^M p_C(x_i) -1}{2^n\sum_{x\in\{0,1\}^n}p_C(x)^2-1}.
\end{equation}
The term ``unbiased" can be understood as follows: when the samples $S$ come from the ideal distribution $p_C(x)$, we have $\E_S\hat{F}_{\mathrm{uXEB}}(S;C)=1$, while $\E_S\hat{F}_{\mathrm{XEB}}(S;C)$ can be exponentially large. Note that for random quantum circuits the denominator $2^n\sum_{x\in\{0,1\}^n}p_C(x)^2-1$ approaches 1 in log depth~\cite{barak2020spoofing,dalzell2020random}, and therefore the two estimators give the same value as depth increases. Different from the standard linear cross entropy, we need to classically simulate all $2^n$ output probabilities in order to compute the unbiased linear cross entropy estimator from experiment samples. In our experiments we use the unbiased linear cross entropy estimator by default, as it is more accurate at small constant depth. The main advantage of cross entropy estimators compared with DFE is that $O(1/\varepsilon^2)$ measurement samples suffice for estimating the fidelity of a random circuit within $\varepsilon$ additive error, which follows from the property that the output probabilities obey the Porter-Thomas distribution.

In the RCS benchmarking protocol (Algorithm~\ref{alg:rcsbenchmarkingapp}), an estimator for the average fidelity $\E F$ at depth $d$ is obtained by taking the empirical mean of cross entropy estimators of different random circuits independently drawn from $\mathrm{RQC}(n,d)$, given by $\frac{1}{L}\sum_{i=1}^L\hat{F}_{\mathrm{uXEB}}(S;C_i)$. The effective noise rate $\lambda$ is extracted by fitting the curve $\E F=Ae^{-\lambda d}$ using the estimators of $\E F$ with increasing depth. Similar to RB protocols, this fitting procedure decouples the decay rate $\lambda$ from SPAM errors, which is reflected in the depth-independent coefficient $A$.

In order to estimate the uncertainty of the benchmarking result $\lambda$, we need to estimate the variance of the estimator of $\E F$. The total variance is the sum of two parts: the variance of finite sampling for the fidelity estimation of each circuit, and the variance of fidelity across different circuits. In section~\ref{sec:rcsvariance} we give a theoretical model of the total variance,
\begin{equation}\label{eq:varianceofmean}
    \Var\left(\frac{1}{L}\sum_{i=1}^L\hat{F}_{\mathrm{uXEB}}(S;C_i) \right)=\frac{1}{L}O\left(\frac{1}{M}+\lambda^2\left(\E F\right)^2\right),
\end{equation}
which suggests that the strategy for choosing parameters (including the number of samples for each circuit $M$ and the number of circuits $L$) depends on the fidelity of the system. For a small number of qubits, it is necessary to choose a large $L$ due to the second term, and for a large number of qubits such as in Google's experiment, it suffices to choose a small $L$ and large $M$. In our numerical simulations and experiments on IBM Quantum hardware, we use the sample variance of $\hat{F}_{\mathrm{uXEB}}(S;C_i)$ across different circuits as the error bar, which is an unbiased estimator of the total variance.

The flexibility of RCS with respect to gate sets allows us to leverage the maximum amount of randomness that is efficiently implementable in the underlying hardware architecture in experimental implementations. In practice, the constants in Eq.~\eqref{eq:varianceofmean} depends on the gate set, and we observe that the constant can be decreased by increasing the randomness in the gate set, which decreases the sample complexity of the experiment. For example, an arbitrary single qubit gate can be implemented with two X90 pulses combined with phase control of microwave drive on the IBM Quantum hardware platform, therefore we use Haar random single qubit gates between CNOT layers in order to achieve the smallest variance.

\subsection{Relationship with other benchmarking protocols}
\label{sec:relationship}

In RCS benchmarking (Algorithm~\ref{alg:rcsbenchmarkingapp}), the (unbiased) linear cross entropy plays the role of fidelity estimator for random quantum circuits. The idea that (unbiased) linear cross entropy is a sample efficient fidelity estimator for random circuits was originally proposed by Google~\cite{Boixo2018Characterizing,arute2019quantum}. Ref.~\cite{arute2019quantum} also considered a form of unbiased linear cross entropy that is different from the one we use (Eq.~\eqref{eq:defunbiasedxebapp}). In Ref.~\cite{choi2021emergent}, a different form of unbiased linear cross entropy was used to estimate the fidelity of time-independent Hamiltonian evolution for Hamiltonians that lead to thermalization, where they develop an argument that connects the cross entropy estimator to the 2-design property of the projected state ensemble of a subsystem. Ref.~\cite{dalzell2021random} proved the exponential decay of linear cross entropy above log depth under i.i.d. noise models, which can be viewed as theoretical evidence for the agreement between linear cross entropy and fidelity. Ref.~\cite{gao2021limitations} also provides evidence that the effective noise rate $\lambda$ upper bounded by a small constant is necessary and sufficient for the agreement between linear cross entropy and fidelity, which is consistent with our results.

In this work we further justify the observation that cross entropy is a good fidelity estimator for random quantum circuits by performing numerical simulation with practically motivated noise models which capture amplitude and phase decay in superconducting qubits as well as correlated noise. Meanwhile, instead of focusing on estimating fidelity itself, our main result is to prove the exponential decay of fidelity under correlated noise and use this to extract the effective noise rate. 

In addition, in Google's experiment~\cite{arute2019quantum} it was shown that linear cross entropy can also be used as a post-processing method in benchmarking 2-qubit gates with a RB-like protocol. A detailed analysis of this method using global Haar random unitary gates was shown in~\cite{helsen2020general}. In this work, we focus on the application of linear cross entropy in benchmarking a non-trivial number of qubits, where global randomness is hard to implement and only local randomness is accessible. As a simple comparison, note that RCS benchmarking only requires two-qubit gate errors to be smaller than order $1/n$, while for standard RB with global Cliffords the gate errors must be smaller than order $1/n^2$.

Several scalable RB variants were proposed in order to overcome the scalability bottleneck of standard RB, including simultaneous RB~\cite{Gambetta2012characterization}, direct RB~\cite{Proctor2019direct}, and cycle benchmarking~\cite{Erhard2019characterizing}, among others (see Ref.~\cite{helsen2020general} for a comprehensive overview). A recent line of work uses a variant of simultaneous RB to obtain the full description of a gate-independent Pauli noise channel~\cite{Flammia2020efficient,Harper2020efficient,harper2020fast,flammia2021pauli}. Instead of using random global Cliffords, these works use random elements from small subgroups (subsets) of the global Clifford group, while it is unclear if they can be generalized to arbitrary gate sets without any group structure. 

On the one hand, RCS benchmarking can also be put into context of the general framework of RB. Consider \emph{two} layers in our random circuit architecture as a random element of a small subset of the global unitary group (note that this subset is not necessarily a group). This is similar to other ``subset RB" protocols~\cite{Brown2018randomized,Fran_a_2018,Proctor2019direct,Helsen2019new,Erhard2019characterizing} where the distribution of the element is only supported on a small subset of the global unitary/Clifford group. General conditions for subset RB to create exponential decays were also formulated~\cite{helsen2020general}. It is unclear if these previous analysis can be applied to RCS benchmarking which does not have a group structure.

On the other hand, RCS benchmarking is different from RB variants in two main aspects. First, unlike RCS, many scalable RB variants do not scramble across the entire system, and the state of the system remains in a tensor product of few qubit states. Therefore these RB variants create exponential decays that are different from RCS~\footnote{Note that an exception is direct RB~\cite{Proctor2019direct} which gives a heuristic argument of fidelity decay using random Clifford layers}. Second, RCS is very flexible with the gate set, while most RB variants assume Clifford gates with some cases extended to more general gates and finite groups~\cite{Magesan2012efficient,Carignan-Dugas2015characterizing,Cross2016scalable,Harper_2017,Fran_a_2018,Helsen2019new,Baldwin2020subspace,helsen2020general,helsen2020matchgate,Claes2021character}. These differences mainly come from a special property of random quantum circuits: while the analysis of RB and variants rely heavily on the group structure, the fast convergence to unitary designs for random quantum circuits only requires generic gate sets~\cite{Harrow2009random,Brandao2016,harrow2018approximate,haferkamp2020quantum,haferkamp2020improved}. This flexibility allows our RCS protocol to be implemented directly with the native gate set available on any hardware platform, including non-Clifford gates and gates with continuous parameters.

\section{Theory of RCS benchmarking}
\label{sec:rcstheory}

Next we develop the theory of fidelity decay and variance estimation in RCS benchmarking. Throughout this section we work in the theory model shown in Fig.~\ref{fig:rcsbenchmarkingapp}a, which assumes Haar random 2-qubit gates and a gate-independent noise channel acting on $n$ qubits. Our numerical simulation and experiments suggest that RCS benchmarking also works well in practice with less randomness and gate-dependent noise. 

\subsection{Reducing to Pauli noise}
\label{sec:reducetopaulinoise}
We start by showing that general noise channels can be reduced to Pauli noise channels, then develop theoretical and numerical results under Pauli noise. Consider an arbitrary $n$-qubit noise channel $\mc N(\rho)=\sum_{\alpha,\beta\in\{0,1,2,3\}^n}\chi_{\alpha\beta}\sigma_{\alpha}\rho \sigma_{\beta}$ with process matrix $(\chi_{\alpha\beta})$, we define $\mc N^{\mathrm{diag}}$ as the noise channel that has process matrix $\mathrm{diag}(\chi_{\alpha\beta})$. By definition, $\mc N^{\mathrm{diag}}$ is a Pauli noise channel that stochastically applies a Pauli operator $\sigma_\alpha$ with probability $\chi_{\alpha\alpha}$. In addition, $\mc N$ and $\mc N^{\mathrm{diag}}$ have the same effective noise rate which is given by
\begin{equation}
    \mathrm{ENR}(\mc N)=\mathrm{ENR}(\mc N^{\mathrm{diag}})=\sum_{\alpha\neq 0^n}\chi_{\alpha\alpha}.
\end{equation}

Consider two RCS benchmarking experiments with the same circuit architecture, where one has noise channel $\mc N$ and the other has $\mc N^{\mathrm{diag}}$. We show that these experiments have the same average fidelity.

\begin{theorem}\label{thm:reducetopauli}
The average fidelity of RCS benchmarking with noise channel $\mc N$ is equal to the average fidelity with noise channel $\mc N^{\mathrm{diag}}$. That is, without loss of generality we can assume that the underlying noise channel is Pauli noise.
\end{theorem}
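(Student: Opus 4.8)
The plan is to exploit the fact that the average fidelity is, through the $d$ noise insertions, a linear functional of the process matrix $\chi$, and to show that the Haar average over the random gates annihilates every off-diagonal entry $\chi_{\alpha\beta}$ with $\alpha\neq\beta$. Writing the ideal circuit as $C=U_d\cdots U_1$ and inserting $\mc N(\rho)=\sum_{\alpha\beta}\chi_{\alpha\beta}\sigma_\alpha\rho\sigma_\beta$ after each layer, I would first conjugate each noise channel by the ideal partial circuit and telescope the noisy fidelity into a bilinear ``two-replica'' form,
\begin{equation}
F=\sum_{\vec\alpha,\vec\beta}\Big(\prod_{k=1}^d\chi_{\alpha_k\beta_k}\Big)\,f(\vec\alpha)^*\,f(\vec\beta),
\end{equation}
where $\vec\alpha=(\alpha_1,\dots,\alpha_d)$ records one Pauli label per layer and $f(\vec\gamma)=\bra{0^n}U_1^\dag\sigma_{\gamma_1}U_2^\dag\sigma_{\gamma_2}\cdots U_d^\dag\sigma_{\gamma_d}U_d\cdots U_1\ket{0^n}$ is a time-ordered amplitude in which each gate $U_m$ appears exactly twice. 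Taking the circuit average then reduces the theorem to the single claim that $\E_C\!\left[f(\vec\alpha)^*f(\vec\beta)\right]$ vanishes whenever $\vec\alpha\neq\vec\beta$: the surviving terms $\vec\alpha=\vec\beta$ involve only the diagonal entries $\chi_{\alpha_k\alpha_k}$, and hence give exactly the value one would compute for $\mc N^{\mathrm{diag}}$.

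The engine driving the cancellation is a standard second-moment identity for a Haar-random two-qubit gate $V\in\mathbb U(4)$. Restricting to the Pauli labels on $V$'s support, $\E_V[\,V^\dag\sigma_a V\otimes V^\dag\sigma_b V\,]$ is a linear combination of the identity and the swap operator whose coefficients depend only on $\Tr\sigma_a$, $\Tr\sigma_b$, and $\Tr(\sigma_a\sigma_b)$. Since distinct Paulis are orthogonal and traceless, both coefficients vanish unless $a=b$, and therefore the associated twirl superoperator $X\mapsto\E_V[\,V^\dag\sigma_a X\,V^\dag\sigma_b V\,]$ (with the understanding that $V$ surrounds $X$) is identically zero for $a\neq b$. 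This is precisely the statement that a random gate Pauli-twirls the noise acting on its support, destroying any coherence between two distinct Pauli errors.

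I would then carry out the circuit average gate by gate, propagating these constraints through the replicas. Because the alternating architecture touches every qubit by some gate (for $d\ge2$ every qubit lies in the support of a gate within two adjacent layers), any site at which $\vec\alpha$ and $\vec\beta$ disagree eventually feeds a mismatched pair $a\neq b$ into a two-qubit gate, and the identity above forces that contribution to vanish. This establishes $\E_C[f(\vec\alpha)^*f(\vec\beta)]=0$ unless $\vec\alpha=\vec\beta$, which yields
\begin{equation}
\E_C F=\sum_{\vec\alpha}\Big(\prod_{k=1}^d\chi_{\alpha_k\alpha_k}\Big)\,\E_C\!\left[\,\abs{f(\vec\alpha)}^2\,\right],
\end{equation}
a quantity manifestly identical for $\mc N$ and $\mc N^{\mathrm{diag}}$, which is the theorem.

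The hard part will be the bookkeeping of the last step. Each gate appears four times in $f(\vec\alpha)^*f(\vec\beta)$ (twice per replica), so the individual gate averages are genuine fourth-moment (2-design) computations; moreover, since a single layer only twirls two qubits at a time, neighbouring noise channels become entangled through shared gates as soon as one peels layers, so one cannot diagonalize them one at a time. The cleanest route is to keep the global two-replica representation above and prove the vanishing of $\E_C[f(\vec\alpha)^*f(\vec\beta)]$ by a single induction over the layers, tracking the Pauli label on each wire as a classical spin (the same mapping used for the fidelity-decay analysis), where the coverage property of the architecture guarantees that a genuine mismatch $\vec\alpha\neq\vec\beta$ is always exposed to a twirling gate. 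This coverage hypothesis is genuinely needed: for a bare identity circuit on a single qubit one finds $F=\chi_{II}+\chi_{IZ}+\chi_{ZI}+\chi_{ZZ}$, whose off-diagonal terms $\chi_{IZ},\chi_{ZI}$ do not cancel, so some scrambling by the gates is essential.
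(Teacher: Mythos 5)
Your proposal is correct and follows essentially the same route as the paper: expand the fidelity multilinearly in the process-matrix entries and kill every cross term $\chi_{\alpha\beta}$ with $\alpha\neq\beta$ by the second-moment (2-design) average of a two-qubit gate adjacent to the mismatched Pauli pair, the vanishing coming from $\Tr(P_1)=\Tr(P_2)=\Tr(P_1P_2)=0$ for distinct Paulis — which is exactly the paper's ``vanishing local tensor'' in the spin-model mapping. The only difference is that you anticipate a layer-by-layer induction, whereas the paper's observation is that averaging over the single gate touching the mismatched site already annihilates the term with all other gates held fixed, so no propagation bookkeeping is needed.
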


Therefore to study the fidelity decay of RCS for general noise channels, it suffices to only consider Pauli noise. Intuitively, Theorem~\ref{thm:reducetopauli} holds because of linearity, where we consider diagonal terms $\sigma_\alpha\rho\sigma_\alpha$ and off-diagonal terms $\sigma_\alpha\rho\sigma_\beta$ ($\alpha\neq \beta$) separately, and show that the off-diagonal terms are ``killed" by the Haar random 2-qubit gates after averaging. Details of the proof are presented in Section~\ref{app:proofdetail}. Although RCS benchmarking is insensitive to off-diagonal terms, this is in general not a limitation, as there exist techniques such as randomized compiling~\cite{Wallman2016noise} that can convert the underlying noise channel to Pauli noise for Clifford+T circuits.

\subsection{Fidelity decay}
\label{sec:fidelitydecay}
\begin{figure*}[t]
  \tikzset{operator/.append style={fill=blue!50,minimum width=0.8cm}}
  \centering
  \begin{equation*}
    \E\left|
      \begin{adjustbox}{width=0.4\textwidth}
      \begin{quantikz}[column sep=0.4cm,row sep=0.4cm]
        \vdots &\\
        \lstick{$\bra{0}$} & \gate[2,label style=white]{G_1} & \qw 		     & \ \ldots\ \qw & \gate[2]{} & \qw                             & \qw		                       & \qw                                  & \gate[2]{} & \ \ldots\ \qw & \qw        & \gate[2,label style=white]{G_1^\dag} &\qw\rstick{$\ket{0}$}\\
        \lstick{$\bra{0}$} & 			                           & \gate[2]{}  & \ \ldots\ \qw &			      & \gate[2,label style=white]{}    & \qw                          & \gate[2,label style=white]{}         &            & \ \ldots\ \qw & \gate[2]{} &                                      &\qw\rstick{$\ket{0}$}\\
        \lstick{$\bra{0}$} & \gate[2,label style=white]{G_2} & 			       & \ \ldots\ \qw & \gate[2]{} &			                            & \qw                          &                                      & \gate[2]{} & \ \ldots\ \qw &            & \gate[2,label style=white]{G_2^\dag} &\qw\rstick{$\ket{0}$}\\
        \lstick{$\bra{0}$} & 	 		                           & \gate[2]{}  & \ \ldots\ \qw &			      & \gate[2,label style=white]{G_m} & \gate[style={fill=white}]{X} & \gate[2,label style=white]{G_m^\dag} &            & \ \ldots\ \qw & \gate[2]{} &                                      &\qw\rstick{$\ket{0}$}\\
        \lstick{$\bra{0}$} & \qw 		                         & 			       & \ \ldots\ \qw & \qw	      & 			                          & \qw                          &                                      & \qw        & \ \ldots\ \qw &            & \qw                                  &\qw\rstick{$\ket{0}$}\\
        \vdots & 
      \end{quantikz}
    \end{adjustbox}
      \right|^2=\frac{4}{15}\mc Z\left(
      \vcenter{\hbox{\includegraphics[height=3cm]{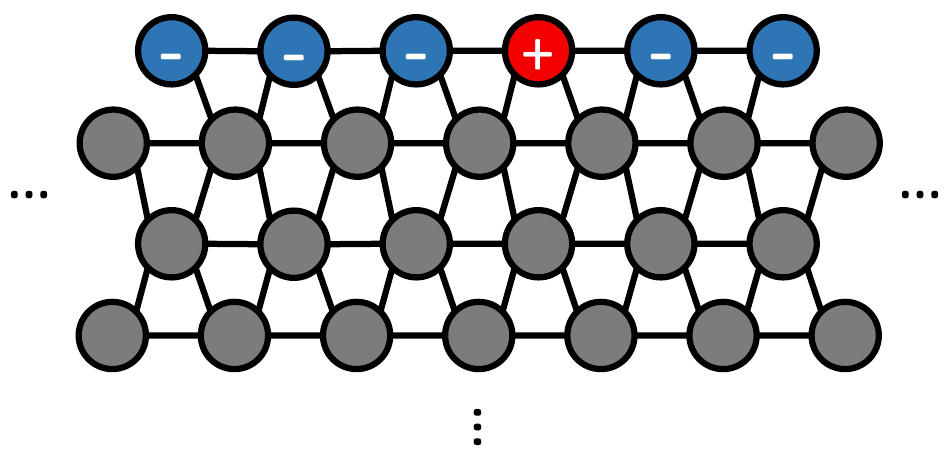}}}
      \right)-\frac{1}{15}
  \end{equation*}
\caption{Mapping random quantum circuits to a statistical mechanical model. LHS: a tensor network diagram for $\E\left|\braket{\psi_l}{\psi}\right|^2$, where an error happens at depth $l$ after gate $G_m$. RHS: this expectation value equals to the partition function of a classical statistical mechanical spin model. Here, each spin corresponds to a 2-qubit gate. On the top boundary, the $+$ spin (red) corresponds to the gate where the error happens ($G_m$), and the other gates at depth $l$ correspond to $-$ spins (blue). All the other spins (grey) can be either $+$ or $-$, and the partition function is the sum of weights of all possible configurations of the grey spins.
}
  \label{fig:rcsspinmodel}
\end{figure*}

Next we develop an argument showing that the average fidelity in RCS benchmarking decays as $\E F\approx e^{-\lambda d}$, where $\lambda$ is the effective noise rate of a global Pauli noise channel acting on all qubits after each layer of two-qubit gates, provided that $\lambda$ is a small constant.

Throughout the theoretical analysis we consider a 1D system of qubits with periodic boundary condition (i.e. they are placed on a ring). For simplicity, below we first present a proof sketch where we consider single qubit Pauli-$X$ noise channel acting on each qubit after each layer of gates, and then discuss generalizations afterwards. More detailed proofs as well as the extensions are presented in Section~\ref{app:proofdetail}. 

Consider single qubit Pauli-$X$ noise channel with Pauli error rate $\varepsilon$ defined as
\begin{equation}
    \mc E(\rho)=(1-\varepsilon)\rho + \varepsilon X\rho X.
\end{equation}
It is easy to see that the global noise channel $\mc N=\mc E^{\otimes n}$ has effective noise rate $\lambda=1-(1-\varepsilon)^n\approx n\varepsilon$. We can view each Pauli-$X$ noise channel as a stochastic process: with probability $\varepsilon$ an $X$ error is applied. We can then write the output density matrix of a $n$-qubit depth-$d$ noisy random circuit as a weighted sum of all possible error patterns,
\begin{equation}
    \begin{split}
        \rho&=(1-\varepsilon)^{nd}\ketbra{\psi}+\sum_{i=1}^{nd}\varepsilon (1-\varepsilon)^{nd-1}\ketbra{\psi_i}\\
    &+\sum_{i=1}^{nd-1}\sum_{j=i+1}^{nd}\varepsilon^2(1-\varepsilon)^{nd-2}\ketbra{\psi_{ij}}+\cdots
    \end{split}
\end{equation}
where $\ketbra{\psi}$ denotes the ideal output state, $\ketbra{\psi_i}$ denotes the ideal output state with an $X$ error at location $i$, etc. Each term in the above sum denotes the density matrix with a fixed number of errors happened in all possible ($nd$) locations. For example, the first term denotes the state with no error and is weighted by the corresponding probability $(1-\varepsilon)^{nd}$. Similarly the other terms denote the state with $1,2,\dots$ errors. We can therefore write the average fidelity as
\begin{equation}
\begin{split}
    \E F&=\E\expval{\rho}{\psi}\\
    &=(1-\varepsilon)^{nd}+\sum_{i=1}^{nd}\varepsilon (1-\varepsilon)^{nd-1}\E\left|\braket{\psi_i}{\psi}\right|^2\\
    &+\sum_{i=1}^{nd-1}\sum_{j=i+1}^{nd}\varepsilon^2(1-\varepsilon)^{nd-2}\E\left|\braket{\psi_{ij}}{\psi}\right|^2+\cdots\\
    &:=F_0+\E F_1+\sum_{k\geq 2}\E F_k.
\end{split}
\end{equation}
Note that $F_0\approx e^{-\varepsilon n d}$, our goal is therefore to prove that $\E F_1+\sum_{k\geq 2}\E F_k$ is small compared with $F_0$.

Next we make a first order approximation by ignoring the term $\sum_{k\geq 2}\E F_k$ and focus on $\E F_1$. Intuitively, the contribution to fidelity should decrease with the number of errors, provided that noise rate is sufficiently small. We verify the validity of this first order approximation via extensive numerical simulations in the next subsection and Section~\ref{app:proofdetail} and \ref{app:numerical}.

Next we focus on proving that $\E F_1$ is small compared with $F_0$. Formally speaking, this is a necessary condition to our main goal $\E F\approx F_0$, as all the higher order terms are positive. First, to simplify $\E F_1$, notice that by assuming a periodic boundary condition, at a fixed depth the specific qubit where the error happens does not matter. We can simplify $\E F_1$ as
\begin{equation}
    \E F_1=n\varepsilon (1-\varepsilon)^{nd-1}\sum_{l=1}^{d}\E\left|\braket{\psi_l}{\psi}\right|^2,
\end{equation}
simplifying the sum and bringing an extra factor $n$, where $\ket{\psi_l}$ denotes the state with an $X$ error at depth $l$ at the first qubit. The problem is then reduced to bounding the sum $\sum_{l=1}^d\E\left|\braket{\psi_l}{\psi}\right|^2$. See LHS of Fig.~\ref{fig:rcsspinmodel} for a demonstration of each term $\E\left|\braket{\psi_l}{\psi}\right|^2$.

Second, note that this sum is at least a constant, $\sum_{l=1}^d\E\left|\braket{\psi_l}{\psi}\right|^2=\Omega(1)$. This is simply because all terms are positive, and in the first term where an error happens at depth 1, most gates cancel with the conjugate except one 2-qubit gate, and 
\begin{equation}
    \E\left|\braket{\psi_1}{\psi}\right|^2=\E_{U\sim\mathbb{U}(4)}|\bra{00}U^\dag X U\ket{00}|^2=\frac{1}{5}.
\end{equation}
Our main result proves a tight upper bound $\sum_{l=1}^d\E\left|\braket{\psi_l}{\psi}\right|^2=O(1)$. This implies that 
\begin{equation}
    \E F_1/F_0=O(n\varepsilon)=O(\lambda).
\end{equation}

Third, note that all of the above arguments can be directly extended to general Pauli noise channels, where the only difference is that $\ket{\psi_l}$ has a general Pauli error at depth $l$ instead of a single qubit Pauli. We extend our rigorous analysis of $\sum_{l=1}^d\E\left|\braket{\psi_l}{\psi}\right|^2$ up to 3-local errors. While this captures most error sources in quantum device, we expect the same result to hold for general Pauli errors with arbitrary weight and locality, as supported by numerical evidence shown in Section~\ref{app:proofdetail} and \ref{app:numerical}.

\begin{theorem}\label{thm:fidelitydecayapp}
For random quantum circuits in 1D with Haar random 2-qubit gates and 3-local noise channel with effective noise rate $\lambda$, the average fidelity is given by
\begin{equation}
    e^{-\lambda d}\leq \E F\leq e^{-\lambda d}(1+K\lambda)
\end{equation}
up to a first-order approximation in $\lambda$. Here $K$ is a universal constant, and we assume $d\ll 2^n$.
\end{theorem}

Our results suggest that the fidelity decay $\E F\approx e^{-\lambda d}$ is a good approximation when $\lambda$ is small, that is, when the effective noise rate per qubit scales like $1/n$. This requirement can be satisfied by the error rates in current quantum hardware. In addition, it is easy to see that Theorem~\ref{thm:fidelitydecayapp} does not hold when circuit depth $d\to\infty$, as in this case $\E F\to\frac{1}{2^n}$ under depolarizing noise while the bounds goes to 0. Here our results mainly focus on low-depth random quantum circuits, which corresponds to the setting that is experimentally implementable.

Our results may seem surprising when comparing with other known results about random quantum circuits. In particular, many properties about random quantum circuits, such as convergence to unitary $t$-designs, are known to hold only above a certain depth~\cite{Brandao2016}, while our results hold already at shallow depth without any threshold requirement. This is partly because our fine grained analysis of $\E\left|\braket{\psi_l}{\psi}\right|^2$ works for any $l=1,\dots,d$, and in particular we show that 
\begin{equation}\label{eq:expdecay}
    \E\left|\braket{\psi_l}{\psi}\right|^2\leq e^{-\Delta l}+\frac{1}{2^n}
\end{equation}
for some constant $\Delta>0$. This suggests that for any $l$, random quantum circuits scramble the error at depth $l$ exponentially fast.

It is easy to see that Eq.~\eqref{eq:expdecay} implies Theorem~\ref{thm:fidelitydecayapp}. Next we give a simplified proof of Eq.~\eqref{eq:expdecay} for single qubit errors; see Section~\ref{app:proofdetail} for details and extensions to higher weight errors. First, note that in $\E\left|\braket{\psi_l}{\psi}\right|^2$ the gates that are applied after the error are canceled with the conjugate. Write the state as $\ket{\psi_l}=C_2 X C_1\ket{0^n}$ and $\ket{\psi}=C_2 C_1\ket{0^n}$ with unitary operators $C_1$ and $C_2$, then $C_2$ is canceled out and we have
\begin{equation}
    \E\left|\braket{\psi_l}{\psi}\right|^2=\E_{C_1\sim\mathrm{RQC}(n,l)}\left|\bra{0^n}C_1^\dag X C_1\ket{0^n}\right|^2,
\end{equation}
therefore Eq.~\eqref{eq:expdecay} is a property about depth-$l$ random quantum circuits.

Second, we evaluate this expectation (LHS of Fig.~\ref{fig:rcsspinmodel}) by taking the expectation of each 2-qubit gate, resulting in a classical statistical mechanical spin model (RHS of Fig.~\ref{fig:rcsspinmodel}). This technique of mapping to classical spin models has been widely used in recent study of random quantum circuits, see e.g.~\cite{Nahum2018operator,Zhou2019emergent,hunterjones2019unitary,bao2020theory,barak2020spoofing,dalzell2020random}. Here, as the expectation $\E\left|\braket{\psi_l}{\psi}\right|^2$ is a second moment property, that is, by linearity of expectation, each 2-qubit gate appears as a rank-2 projector $\E_{U\sim\mathbb{U}(4)}\left[U^{\otimes 2} \otimes U^{*\otimes 2}\right]$, we can represent each 2-qubit gate with a classical spin with 2 degrees of freedom. As a result, the expectation value is related to the partition function of a classical spin model (RHS of Fig.~\ref{fig:rcsspinmodel}), which lives on a triangular lattice with three-body interactions. The weights of the interactions are given in Fig.~\ref{fig:spinmodeldetails}a, and boundary conditions are presented in Fig.~\ref{fig:rcsspinmodel}. See Section~\ref{app:proofdetail} for detailed derivations.

\begin{figure}[t]
   \centering
   \subfloat[Weights]{
    \centering
    \includegraphics[width=0.4\linewidth]{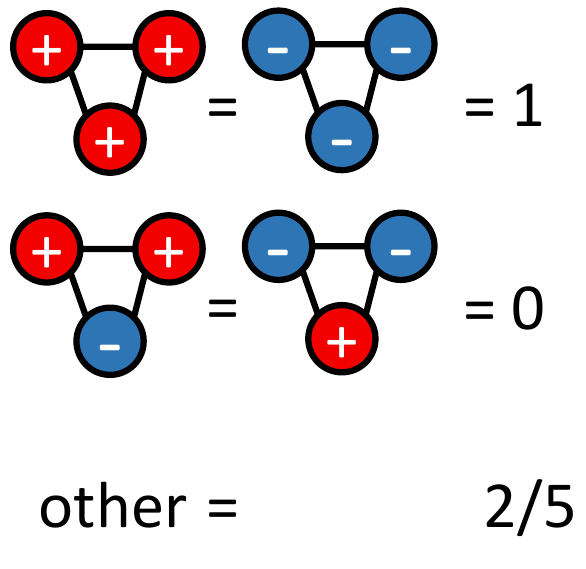}
    }
 \subfloat[Domain wall]{
 \centering
 \includegraphics[width=0.3\linewidth]{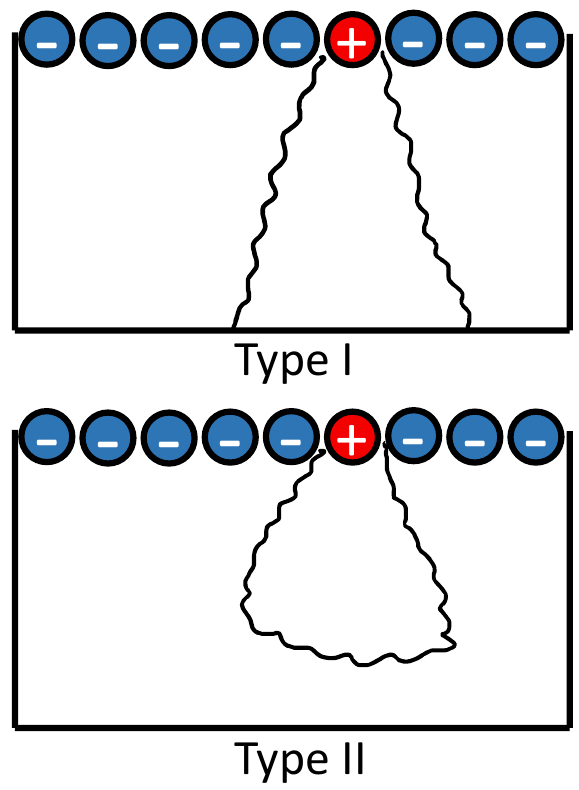}
 }
  \caption{Details of the classical spin model. (a) Weights of the three-body interaction. (b) A non-zero weight configuration corresponds to the configuration of two domain walls, which either intersect and annihilate (Type II) or never intersect (Type I).}
  \label{fig:spinmodeldetails}
\end{figure}

Finally, we prove Eq.~\eqref{eq:expdecay} by showing an analytical bound for the partition function of the spin model. The proof follows from a domain wall (boundary between clusters of $+$ and $-$ spins) argument. Note that from the constraints given by the weights (Fig.~\ref{fig:spinmodeldetails}a), if two spins at the top of a triangle have the same sign, then the spin at the bottom also has the same sign. Therefore the spin configuration that has a non-zero weight must correspond to two domain walls due to the top boundary condition. The domain wall configuration has two cases (Fig.~\ref{fig:spinmodeldetails}b): they either intersect and annihilate (Type II) or never intersect (Type I). We can therefore evaluate the expectation as
\begin{equation}
    \E\left|\braket{\psi_l}{\psi}\right|^2=\frac{4}{15}\left(\mc Z_1+\mc Z_2\right)-\frac{1}{15},
\end{equation}
where $\mc Z_i$ denotes the sum of weights of domain wall configuration of Type $i$. By a simple counting argument, $\mc Z_1\leq (4/5)^{2(l-1)}$, and we also show that $\frac{4}{15}\mc Z_2-\frac{1}{15}\leq \frac{1}{2^n}$, which concludes the proof.

\subsection{Fidelity estimation}
Our main results (Theorem~\ref{thm:reducetopauli} and \ref{thm:fidelitydecayapp}) show rigorous evidence that $\E F\approx e^{-\lambda d}$ for noisy random circuits, 
where $\lambda$ is the effective noise rate for the global noise channel. The main idea of RCS benchmarking is to efficiently estimate $\E F$ for varying depth, and then extract $\lambda$ by fitting an exponential decay curve. Next we provide a detailed investigation of fidelity estimators and develop numerical simulation techniques to verify their correctness.

Let $C$ be a random circuit. The goal is to estimate the fidelity of an unknown quantum state $\rho$ prepared in experiment, with respect to the pure state $\ket{\psi}=C\ket{0^n}$. As discussed partly in section~\ref{sec:fidelityestimation}, estimating the fidelity of an unknown quantum state is in general a hard task, which either requires exponential copies of $\rho$ via direct fidelity estimation~\cite{flammia2011direct,dasilva2011practical}, or requires additional circuit depth overhead that is unrealistic with current hardware~\cite{Huang2020predicting}.

Motivated by the insight developed by Google~\cite{Boixo2018Characterizing,arute2019quantum} that the special property of random quantum circuits allows fidelity to be efficiently estimated from output samples via cross entropy, we consider fidelity estimators of the following general form,
\begin{equation}\label{eq:generalfidelityestimator}
    \expval{\rho}{\psi}\approx \hat{F}(S;C)
\end{equation}
where $S=\{x_1,\dots,x_M\}$ are samples from the output distribution of the noisy circuit, i.e. computational basis measurement results of $\rho$. For example, the linear cross entropy $\hat{F}_{\mathrm{XEB}}(S;C)=\frac{2^n}{M}\sum_{i=1}^M p_C(x_i) -1$ has the above form which is related to the sum of probabilities of output samples. Several unbiased and non-linear variants of cross entropy were also proposed~\cite{arute2019quantum,Boixo2018Characterizing,rinott2020statistical,choi2021emergent}, and we use the unbiased version defined in Eq.~\eqref{eq:defunbiasedxebapp} which we find has the best performance. In general these fidelity estimators are sample efficient, requiring only $M=O(1/\varepsilon^2)$ measurement samples to achieve $\varepsilon$ additive accuracy (see section~\ref{sec:rcsvariance}), and their correctness is based on heuristic arguments and numerical simulation. An important future direction is to provide rigorous justifications as well as develop other efficient fidelity estimators.

Next we show numerical simulation results for noisy random circuits to verify our results on the exponential decay of fidelity, and also show that cross entropy estimators agree well with the fidelity. To accomplish this, we model the system as perfect gates
followed by evolution for one time unit under noisy channels~\cite{otten2019accounting} using the Lindblad master equation~\cite{otten-prb-2015},
\begin{equation}
    \frac{\mathrm{d} \rho}{\mathrm{d} t}
  = \sum_i \gamma_i D[J_i](\rho),
\end{equation}
where the sum is over different noise channels, $D[J_i](\rho) = J_i \rho J_i^\dagger  - \frac{1}{2}(J_i^\dagger J_i \rho + \rho J_i^\dagger J_i)$ is a Lindblad superoperator for generic collapse operator $J_i$, and $\gamma_i$ is a parameter that controls the noise strength.

We implement the evolution of this open quantum system in the open-source simulator, QuaC~\cite{QuaC:17}. Importantly, for the numerical simulation of 20 noisy qubits, the naive 
density matrix simulation is inefficient. We instead use the Monte Carlo wave function (MCWF) method~\cite{plenio1998quantum}, which simulates random ``quantum jumps",
rather than the full dynamics of the density matrix, to reduce the total computational cost of the simulations. We simulate 1D rings of $n$ qubits with
periodic boundary conditions. Full details of the computational method can be found
in Section~\ref{app:numerical}. 

Using the MCWF technique, we simulate a variety of noise models, as summarized in Table~I in main text. These noise models are: 
\begin{enumerate}
    \item $T_1$ and $T_\phi$, which includes single qubit amplitude decay and pure dephasing, and represent the primary noise sources in superconducting qubits~\cite{devoret2013superconducting};
    \item i.i.d. single qubit Pauli-$X$ noise, which models single qubit bit-flip;
    \item nearest-neighbor correlated $XX$ noise, where a Pauli-$XX$ noise channel is applied to all neighboring qubit pairs, which models two-body incoherent coupling;
    \item $n-1$ - weight Pauli-$X$ noise, where a Pauli-$X^{\otimes n-1}$ noise channel is applied to all subsets of size $n-1$, which is an artificial noise model used to test RCS benchmarking with extremely high-weight noise.
\end{enumerate}

Note that the identity of the Pauli operators in the noise models (whether it's $X$, $Y$ or $Z$) does not matter due to averaging over random circuits. The effective noise rate of each of these noise channels is related to the noise strength $\gamma$ in the Lindblad superoperator by a constant factor, while the constant differs for each noise model (see Section~\ref{app:numerical} for details). We manually adjust the coefficients $\gamma_i$ such that the effective noise rate for all noise models in Table~I of main text are equal to $\lambda=n\gamma$, where $\gamma$ is a parameter we control. We simulate 1D rings of $n=20$ qubits, averaging over
100 random circuits consisting of layers of two-qubit Haar-random unitaries and use 400 noise trajectories for each circuit at each depth. 
We fit the uXEB curves from depths 20 to 50.
Here the fidelity and cross entropy for each circuit is calculated by averaging over the stochastic noise trajectories, which is more efficient than simulating individual measurement samples (see Section~\ref{app:numerical} for more details). 

Fig.~2 and Table I in main text shows the results of these simulations, which include the exponential decay curves of fidelity and the unbiased linear cross entropy, and error bars correspond to the standard error of the mean across different circuits which are too small to be seen on the plot. Note that the unbiased linear cross entropy estimates the true fidelity very well in all noise models. In Fig.~\ref{fig:other_mcwf_simulations} in Section~\ref{app:numerical} we also plot several other fidelity estimators using the same data, where they are far from true fidelity until they converge at around depth 15. As discussed in section~\ref{sec:fidelityestimation}, this is because the unbiased version corrects an additional normalization factor which converges to 1 very quickly. Additional simulation results with other system sizes, fidelity estimators, noise rates and gate sets can be found in Section~\ref{app:numerical}.

For comparison, we also consider simultaneous RB~\cite{Gambetta2012characterization}, an alternative benchmarking method that can also be used to heuristically estimate the fidelity. Here, two-qubit RB sequences are simultaneously executed for all neighboring qubit pairs, and a Pauli error rate $e_i$ can be extracted for all two qubit couplings from the RB decay curve. See section~\ref{sec:simurb} for a concrete example. Then, the fidelity of a random circuit with $m$ 2-qubit gates can be estimated by

\begin{equation}\label{eq:simurbestimator}
    \hat{F}_{\mathrm{sRB}}=\prod_{i=1}^m (1-e_i).
\end{equation}
First, note that $\hat{F}_{\mathrm{sRB}}$ is an accurate fidelity estimator for random quantum circuits when the noise sources are limited to single qubits, as the error rates can be estimated from the RB decay, and the estimator $\hat{F}_{\mathrm{sRB}}$ agrees with $\E F$ due to our Theorem~\ref{thm:fidelitydecayapp}. Second, when there is crosstalk, intuitively simultaneous RB can capture some correlated noise. For example, suppose when qubit pair 1 is turned on, it generates an error on a neighboring qubit pair 2. Since the RB sequences are executed simultaneously, this error can be captured by the RB sequence on qubit pair 2. However, this intuition can fail when we have high weight correlated errors, as the error could be over counted by multiple RB sequences (see ``Application to diagnosing crosstalk" and Fig.~4 in main text). 

\subsection{Virtual experiment for extracting correlated noise}

\begin{figure}
    \centering
    \includegraphics[width=0.5\linewidth]{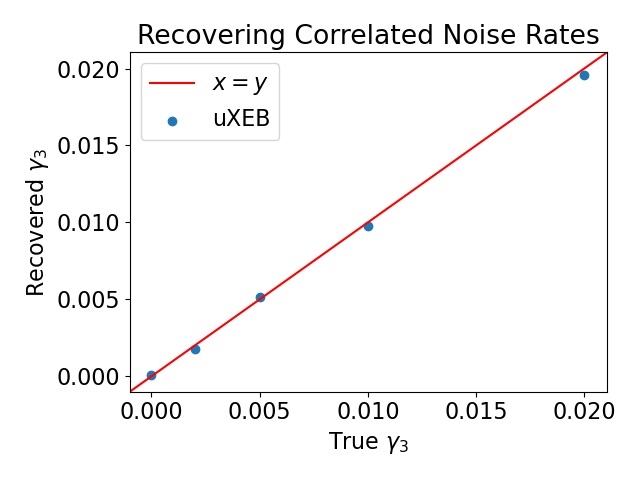}
    \caption{The amount of correlated noise, $\gamma_3$, can be readily extracted by combining amplitude decay, Ramsey, and RCS experiments.}
    \label{fig:correlated_error_model}
\end{figure}

Next, we demonstrate a virtual experiment, where we show that RCS benchmarking can be used to extract the amount of correlated noise in conjunction with other benchmarking methods. Here, we assume that the underlying noise model is known to be a combination of $T_1$, $T_\phi$, and a correlated $ZZ$ noise channel, but the values of the individual noise rates is unknown. For simplicity, we assume that all noise channels of the same type (e.g., all $T_1$ processes) have the same noise rate across different qubits, but the noise rates can differ between types ($T_1$, $T_\phi$, and the correlated $ZZ$ noise channels can have different strengths). This noise
model is described by the Lindblad
\begin{equation}\label{eq:ext_noise_model}
    \mathcal{L}(\rho) = \sum_{i=1}^n \left(\gamma_1 D(\sigma_i)[\rho] + \gamma_2 D(\sigma_i^\dagger \sigma_i)[\rho] + \gamma_3 D(Z_i Z_{i+1})[\rho]\right).
\end{equation}
Because we know the form of the noise channel, we can write down an explicit equation
for the total effective noise rate in terms of the amplitude decay rate $\gamma_1=\frac{1}{T_1}$,
the pure dephasing rate $\gamma_2 = \frac{1}{T_\phi}$ times, as well as the correlated noise rate, $\gamma_3$,
\begin{equation}\label{eq:extractcorr2}
    \lambda = n \Big(\frac{\gamma_1}{2} + \frac{\gamma_2}{4} + \gamma_3\Big), 
\end{equation}
where the prefactors for $\gamma_1$ and $\gamma_2$ are calculated in Section~\ref{app:numerical}.
The $T_1$ time can be measured via a simple amplitude decay experiment, giving an equation
\begin{equation}\label{eq:extractcorr1}
    \Gamma_1 = \gamma_1,
\end{equation}
The $T_\phi$ time 
can similarly be calculated by a Ramsey technique~\cite{ramsey1950molecular}; however, for this noise model,
the $T_\phi$ time will not be recovered accurately due to the additional correlated dephasing caused
by the incoherent $ZZ$ coupling. The observed decay in the Ramsey experiment is a combination
of the $T_1$, $T_\phi$, and $ZZ$ terms,
\begin{equation}\label{eq:extractcorr3}
    \Gamma_2 = \gamma_1 + \gamma_2 + 8 \gamma_3,
\end{equation}
where the prefactor 8 comes from arguments about the relative strengths of noise channels; see Section~\ref{app:numerical}.
These two standard single qubit benchmarking techniques can be combined with the ENR over the whole system provided by RCS.
This gives three equations (from the relaxation decay experiment~\eqref{eq:extractcorr1}, the Ramsey experiment~\eqref{eq:extractcorr3}, and the RCS experiment~\eqref{eq:extractcorr2}) and three
unknowns ($\gamma_1$, $\gamma_2$ and $\gamma_3$), and can be readily solved to extract the correlated noise rate,
\begin{equation}\label{eq:extractcorr4}
\gamma_3 = \frac{\Gamma_1}{4} + \frac{\Gamma_2}{4} - \frac{\lambda}{n},
\end{equation}
which directly combines the experimentally measured $\Gamma_1$, $\Gamma_2$, and $\lambda$.
We simulate the noise model of Eq.~\eqref{eq:ext_noise_model} using $n=10$ qubits
with a ring geometry with $\gamma_1=0.01$, $\gamma_2=0.02$, and $\gamma_3=0.02\alpha$, with
$\alpha \in [0.0, 0.1, 0.25, 0.5, 1.0]$. We extract the total ENR by using two-qubit Haar random circuits,
fitting the resulting uXEB curve from depths 12 to 40, except in the case of $\alpha=1.0$, where 
we fit from 12 to 22, due to the large total ENR of 0.4.
The result of performing this extraction is shown in Figure~\ref{fig:correlated_error_model}.

\subsection{Variance analysis}
\label{sec:rcsvariance}

We provide an analysis of the variance of fidelity estimators in RCS benchmarking. In particular, this can help to decide how many random circuits to implement and how many measurement samples to collect in a RCS benchmarking experiment. Our analysis is a generalization of the statistical analysis of Google's quantum supremacy experiment~\cite{Boixo2018Characterizing,arute2019quantum,rinott2020statistical} and provides a more accurate model for small scale RCS experiments.

In the following we focus on providing a theoretical model for the variance of cross entropy estimators. In experiments, once we have an estimate for the mean and variance of cross entropy estimators of different circuit depth, we can use standard least squares fitting to extract the effective noise rate $\lambda$ as well as its standard error.

Recall the definition of the unbiased linear cross entropy
\begin{equation}
    \hat{F}_{\mathrm{uXEB}}(S;C)=\frac{\frac{2^n}{M}\sum_{i=1}^M p_C(x_i) -1}{2^n\sum_{x\in\{0,1\}^n}p_C(x)^2-1}
\end{equation}
which has two sources of randomness: the random circuit $C$ and the measurement samples $S=\{x_1,\dots,x_M\}$. By the law of total variance, we have
\begin{equation}
\begin{split}
    \Var_{C,S}(\hat{F}_{\mathrm{uXEB}})&=\E_C\left[\Var_S\left(\hat{F}_{\mathrm{uXEB}}|C\right)\right]+\Var_C\left(\E_S\left[\hat{F}_{\mathrm{uXEB}}|C\right]\right).
\end{split}
\end{equation}
This can be understood as the following: the total variance of $\hat{F}_{\mathrm{uXEB}}$ is the sum of the variance across different measurement samples and the variance across different random circuits.

In the previous analysis~\cite{Boixo2018Characterizing,arute2019quantum,rinott2020statistical}, the second term is ignored and only the first term is considered. In particular, they showed that the first term scales like $\left(1+2\E F-(\E F)^2\right)/M$ where $M$ is the number of samples collected for each circuit. This can be derived by assuming that the output distribution of the noisy circuit is a linear combination of an ideal Porter-Thomas distribution with the uniform distribution. When $\E F\ll 1$ this can be well approximated by $1/M$.

For the second term, we assume that $\hat{F}_{\mathrm{uXEB}}$ is an unbiased estimator of the fidelity of the random circuit, and $\E_S\left[\hat{F}_{\mathrm{uXEB}}|C\right]=F_C$ equals to the fidelity of $C$. This implies that
\begin{equation}
    \Var_C\left(\E_S\left[\hat{F}_{\mathrm{uXEB}}|C\right]\right)=\Var_C\left(F_C\right),
\end{equation}
which is the variance of fidelity across different random circuits. This term was ignored based on a concentration of measure argument~\cite{arute2019quantum}, where they argued that for a large number of qubits the fidelity is the same across different random circuits. However, it is unclear how this term scales compared with the first term $1/M$, especially for a small number of qubits. In the following we drop the subscript $C$ as in the previous subsection, and provide an analysis for $\Var(F)$.

For simplicity we work with the same model as in section~\ref{sec:fidelitydecay}, where we consider single qubit Pauli-$X$ noise and assume a first-order approximation. Then the fidelity can be written as 
\begin{equation}
    F\approx (1-\varepsilon)^{nd}+n\varepsilon (1-\varepsilon)^{nd-1}\sum_{l=1}^{d}A_l
\end{equation}
where $A_l:=\left|\braket{\psi_l}{\psi}\right|^2$. Then the variance can be approximated by
\begin{equation}
\begin{split}
    \Var(F)&\approx \left(n\varepsilon\right)^2 (1-\varepsilon)^{2nd-2}\Var\left(\sum_{l=1}^{d}A_l\right)\\
    &\approx \lambda^2\left(\E F\right)^2\sum_{k,l=1}^d\left(\E\left[A_k A_l\right]-\E[A_k]\E[A_l]\right).
\end{split}
\end{equation}
Here we have used the fact that $\lambda\approx n\varepsilon$ and $\E F\approx e^{-\lambda d}$. Recall we have proven in section~\ref{sec:fidelitydecay} that $\E A_l\leq e^{-\Delta l}+1/2^n$ decays exponentially with $l$, and we therefore expect that each term in the above sum also decays exponentially, and the sum can be bounded by a constant,
\begin{equation}
    \Var(F)=O\left(\lambda^2\left(\E F\right)^2\right).
\end{equation}
We leave the rigorous proof of this statement to future work. Note that the same proof technique for the analysis of $\E A_l$ can be applied here; however, the variance contains a fourth moment property $\E\left[A_k A_l\right]$, which corresponds to a more complicated spin model with negative weights, making it difficult to analytically bound the partition function.

Combining both terms, we obtain a theoretical model for the total variance
\begin{equation}\label{eq:var_model}
    \Var_{C,S}(\hat{F}_{\mathrm{uXEB}})=O\left(\frac{1}{M}+\lambda^2\left(\E F\right)^2\right).
\end{equation}
In the next section, we show that this model is well supported by experiment data on IBM Quantum hardware. Additional numerical simulation results verifying our conjecture that $\sum_{k,l=1}^d\left(\E\left[A_k A_l\right]-\E[A_k]\E[A_l]\right)$ can be upper bounded by a constant are presented in Section~\ref{app:numerical}.

Finally, recall that in RCS benchmarking, $L$ random circuits are implemented for a given depth, and $M$ measurement samples are collected for each circuit. The empirical mean of the $L$ cross entropy estimators is used to estimate $\E F$ at the given depth. The variance of this estimator is then given by $1/L\cdot O\left(1/M+\lambda^2\left(\E F\right)^2\right)$. Assuming $M$ is given, then the number of circuits required to estimate $\E F$ within $\varepsilon$ additive error is $L=O\left(1/M\varepsilon^2+\lambda^2\left(\E F\right)^2/\varepsilon^2\right)$. This suggests that the strategy for choosing parameters for a RCS benchmarking experiment depends on the system size. Suppose that the effective noise rate per qubit $\lambda/n$ is fixed, then $\E F\approx e^{-\lambda d}$ decays exponentially with the number of qubits. Therefore, for a large number of qubits, the variance across different circuits $\Var(F)=O\left(\lambda^2\left(\E F\right)^2\right)$ is exponentially small, and a small $L$ with a large number of samples $M$ is sufficient for obtaining a small total variance. For example, in Google's experiment with 53 qubits only 10 random circuits are implemented. However, for a small number of qubits, the variance can be large even if we collect infinite number of samples for few random circuits, due to the second term. Therefore a large number of random circuits are needed, while the number of samples for each circuit can be smaller. For example, in our 5-qubit RCS benchmarking experiment shown below, we implement 100 random circuits and collect 8192 samples for each circuit for a given depth. These observations suggest that our generalized variance model is necessary, especially for characterizing RCS benchmarking experiments for a small number of qubits.

\begin{figure*}[t]
    \centering
   \subfloat[$n=10$]{\includegraphics[width=0.45\linewidth]{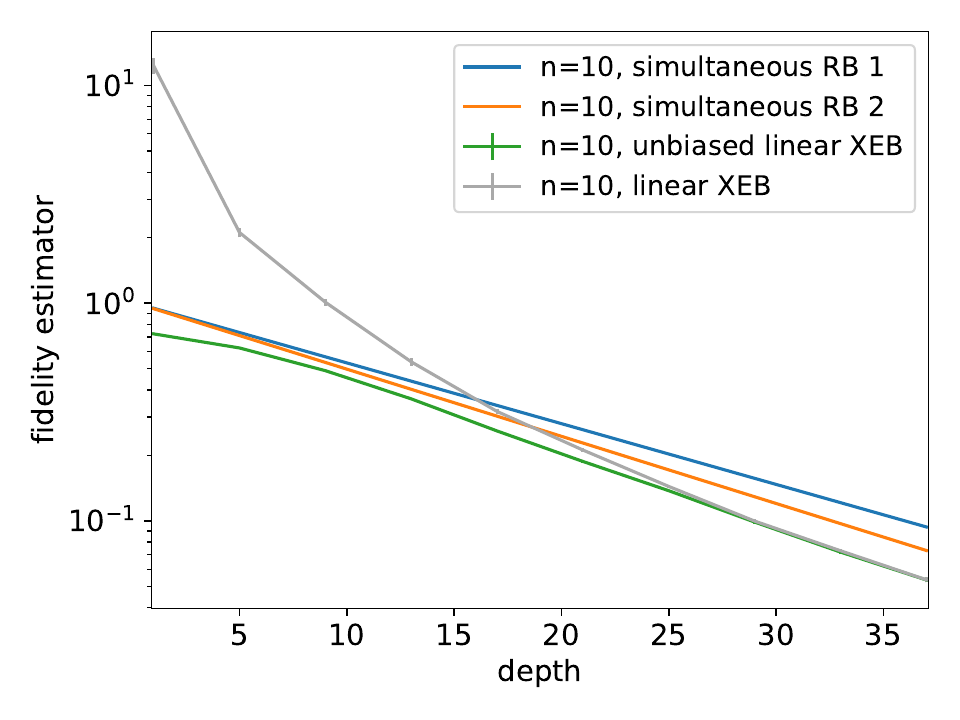}}
   \subfloat[$n=10$]{\includegraphics[width=0.45\linewidth]{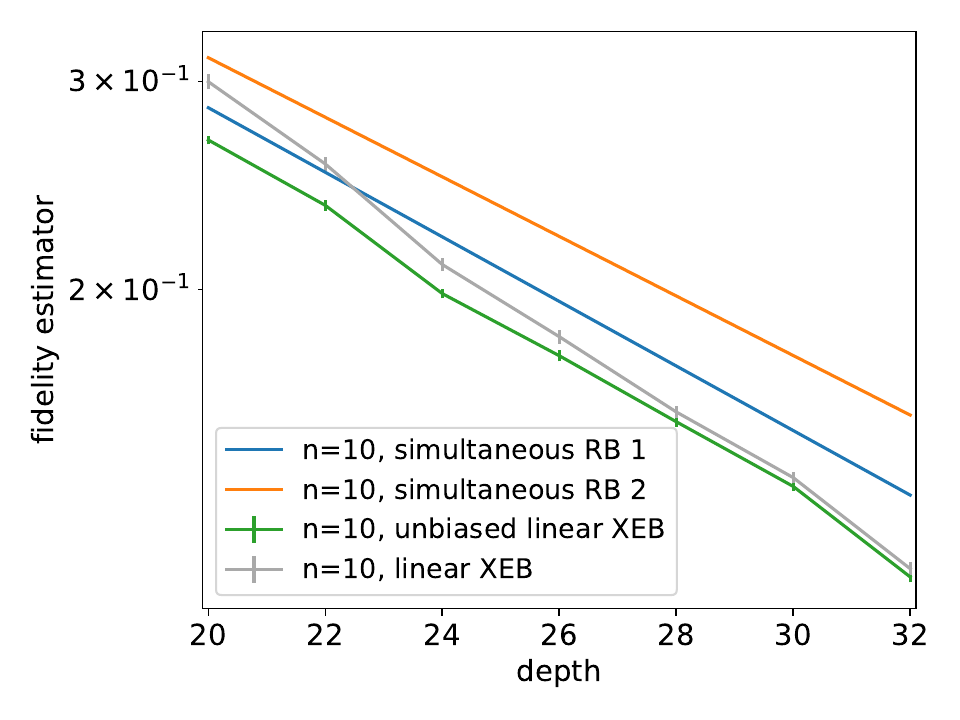}}\\
   \subfloat[$n=20$]{\includegraphics[width=0.45\linewidth]{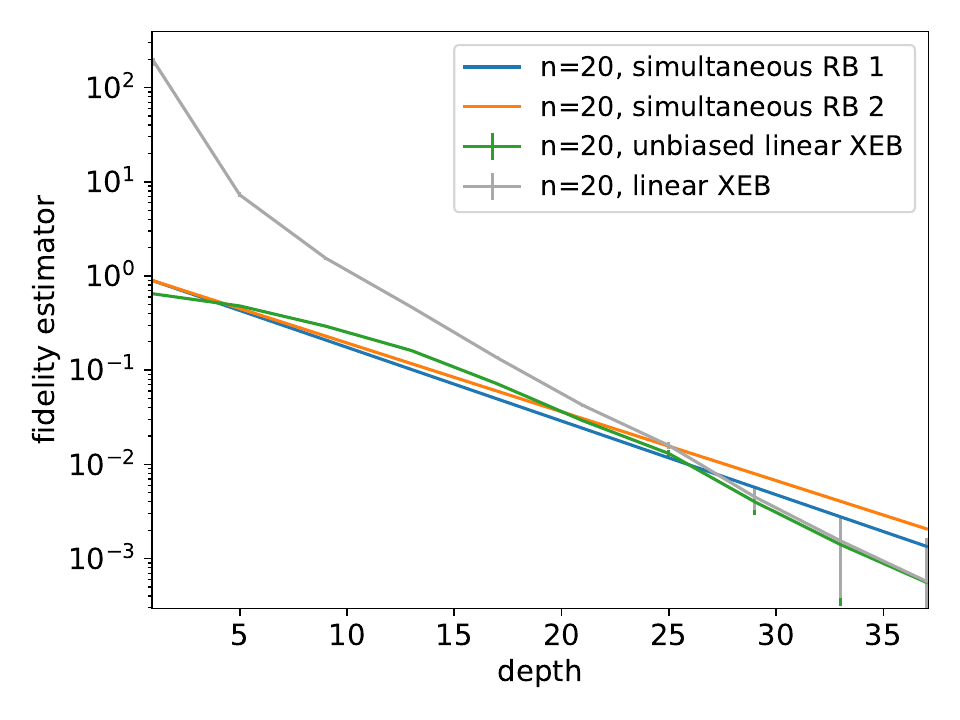}}
   \subfloat[$n=20$]{\includegraphics[width=0.45\linewidth]{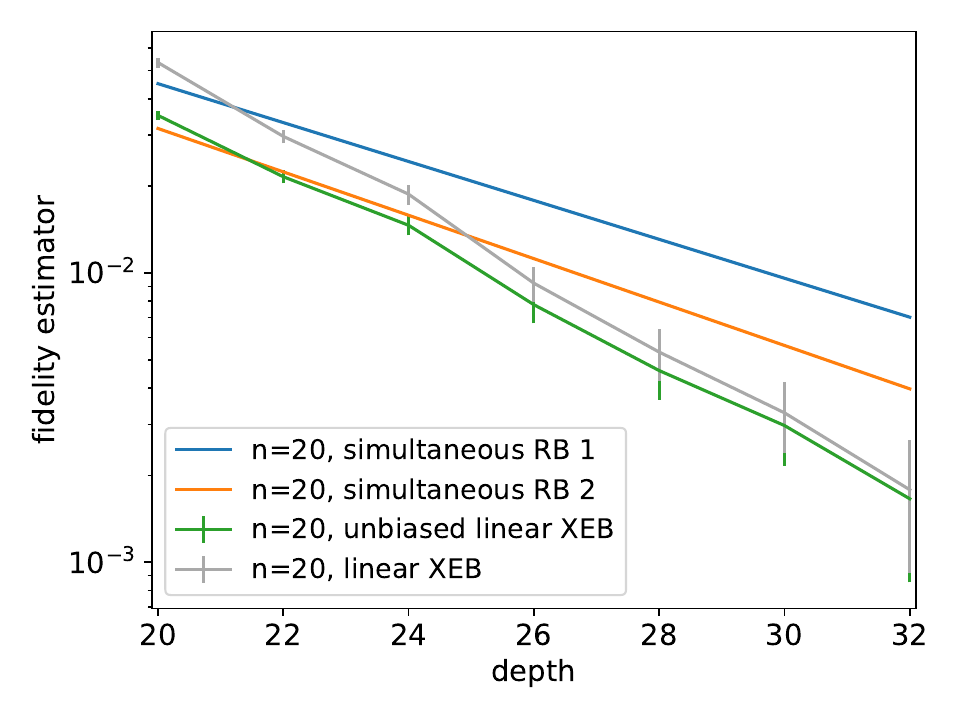}}
  \caption{Experimental implementation of RCS benchmarking on \texttt{ibmq\_mumbai} with 10 and 20 qubits. (a)(c) RCS benchmarking with long depth range (1-37) on 10 and 20 qubits, respectively. The linear cross entropy starts from a large value and converges to the unbiased one at depth 20-25. (b) RCS benchmarking with short depth range (20-32) on 10 qubits. Curve fitting results are $\lambda_{\mathrm{uXEB}}=6.9(1)\%$ and $\lambda_{\mathrm{sRB}}=6.0(2)\%$. (d) RCS benchmarking with short depth range (20-32) on 20 qubits. Curve fitting results are $\lambda_{\mathrm{uXEB}}=24.4(1)\%$ and $\lambda_{\mathrm{sRB}}=16.4(9)\%$.}
  \label{fig:rcs10and20qubit}
\end{figure*}

\begin{figure*}[t]
    \centering
   \subfloat[Cross entropy]{\includegraphics[width=0.5\linewidth]{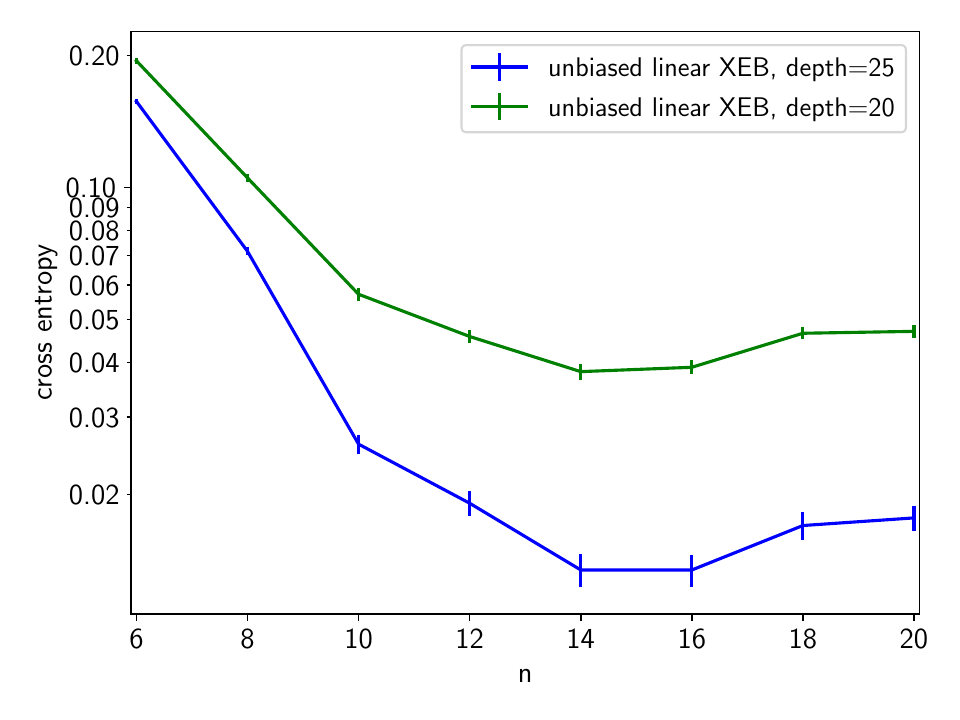}}
 \subfloat[Linear fit]{\includegraphics[width=0.5\linewidth]{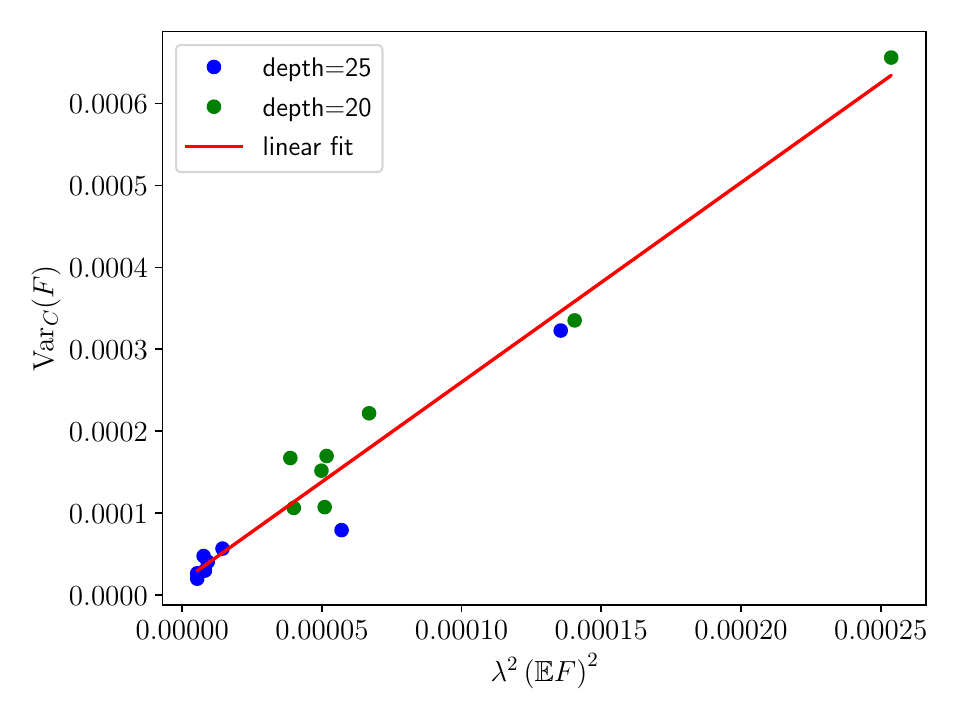}}
  \caption{Measuring cross entropy on \texttt{ibmq\_montreal} with up to 20 qubits and varying depth. (a) Cross entropy as a function of the number of qubits, which does not obey a simple exponential decay. (b) Using these data points, we can verify our variance model $\Var_C(F)=O\left(\lambda^2\left(\E F\right)^2\right)$. A linear dependence can be observed even with noisy estimates for both sides. The results of the linear fit are $\mathrm{slope}=2.4(1)$, $r=0.982$.}
  \label{fig:rcsvariance20qubit}
\end{figure*}

\section{Experiments on IBM Quantum hardware}
\label{sec:experiments}

In the following we show experiment results of RCS benchmarking on IBM's superconducting qubits. We implement RCS benchmarking protocols on Qiskit platform~\cite{Qiskit} and access the devices through cloud~\cite{ibmquantum}. All of our experiments are performed on a 1D system of qubits, where we use a 5-qubit system \texttt{ibmq\_athens} as well as 1D subsets of 27-qubit systems \texttt{ibmq\_montreal} and \texttt{ibmq\_mumbai} with up to 20 qubits. On these devices, $\mathrm{CNOT}$ is the only 2-qubit gate available, and arbitrary single-qubit gates are easy to implement, which have error rates that are roughly 2 orders of magnitude smaller than $\mathrm{CNOT}$. Therefore, in RCS benchmarking we are effectively measuring the total amount of quantum noise of a layer of $\mathrm{CNOT}$ gates, where due to crosstalk this is larger than the sum of individual gate errors measured from individual RB. Details of the gate set and architecture can be found in Section~\ref{app:experiment}.

Similar to our numerical simulation results presented in section~\ref{sec:rcstheory}, we perform three types of experiments: simultaneous RB, RCS with direct fidelity estimation, and RCS with cross entropy. These experimental results support our theoretical arguments for both fidelity decay and variance estimation.

By default, all error rates reported in this paper are represented in Pauli error rates, which is a constant factor larger than the error rates based on average fidelity that is used in RB.

\subsection{Simultaneous RB}
\label{sec:simurb}

In simultaneous RB~\cite{Gambetta2012characterization}, the main idea is to perform RB simultaneously with the gate pattern that is used in applications. That is, instead of performing RB on one pair of qubits while all other qubits are idle, different RB sequences are executed in parallel, which can capture correlated errors between different qubit pairs. A similar experiment was performed in Google's experiment~\cite{arute2019quantum} where linear cross entropy was used as the post processing method instead of standard RB.

As shown in Fig.~\ref{fig:rcsbenchmarkingapp}b, in RCS benchmarking we consider three patterns for applying gates: one layer of single qubit gates, one layer of $\mathrm{CNOT}$ starting with qubit 0, and one layer of $\mathrm{CNOT}$ starting with qubit 1. We implement simultaneous RB on each of these three patterns, and the results are shown in Fig.~3a of main text. Here each row shows one of the three patterns, and a standard Clifford RB sequence is applied to each of the green boxes. After fitting the RB curve to an exponential decay, we obtain gate error rates shown below the green boxes. The numbers underlying single qubit boxes represent the error rates of two X90 pulses ($\sqrt{X}$), which can represent the Pauli error rate of a Haar random single qubit gate. The numbers underlying two qubit boxes represent the Pauli error rate of $\mathrm{CNOT}$ gates.

\subsection{RCS benchmarking}

Next we show results of RCS benchmarking with direct fidelity estimation and cross entropy. Note that DFE is not scalable due to the exponential sample complexity, and is implemented here mainly to verify our theoretical predictions. As a reference, we implement simultaneous RB both before and after the RCS experiments, which can also be used to evaluate the error drift during the experiment period.

We implement RCS benchmarking on a 5-qubit device \texttt{ibmq\_athens} where direct fidelity estimation is tractable. Following our variance model, a large number of random circuits are implemented in order to achieve small total variance. We select 8 depth values ranging from 1 to 29. By default, we take the maximum amount of measurement samples allowed (8192) for all circuits submitted to the hardware platform. For direct fidelity estimation, we implement 30 random circuits for each depth, and estimate the fidelity of each circuit by measuring 20 Fourier coefficients according to the sampling procedure described in section~\ref{sec:fidelityestimation}. That is, 600 circuits are implemented for each depth. For cross entropy, we implement 100 random circuits for each depth.

The results of RCS benchmarking are shown in Fig.~3b of main text, where all curves are exponential decays with roughly the same decay rate. Note that for both DFE and uXEB experiments, the sample variance across different circuits is an unbiased estimator of the total variance. Therefore we use the standard error of the mean across different circuits as the error bar. The decay rate and its standard error can be computed from the data points and error bars in Fig.~3b of main text via standard least squares fitting. For the simultaneous RB estimator, we use half the difference between the two RB experiments as the standard error.

From the curve fitting results, we can see that $\lambda_{\mathrm{DFE}}$ and $\lambda_{\mathrm{uXEB}}$ agrees with each other within the standard error. This confirms our theoretical results on the exponential decay of fidelity, and also verifies the validity of cross entropy as an efficient fidelity estimator. Also, note that uXEB is much more sample efficient than DFE, where the error bars for DFE are larger even when we collect 6 times the amount of samples in uXEB. In addition, note that the simultaneous RB estimator gives a slightly larger prediction for the effective noise rate. As we have discussed before, this could come from overestimating high weight correlated errors.

Similar RCS benchmarking results for 10 and 20 qubits are presented in Fig.~\ref{fig:rcs10and20qubit}, where we do not implement DFE as it requires too many samples. Detailed parameter settings for these experiments are given in Section~\ref{app:experiment}. We perform two types of experiments: long range with depth 1-37 (Fig.~\ref{fig:rcs10and20qubit}a and c) for demonstrating the overall behavior, and short range with depth 20-32 (Fig.~\ref{fig:rcs10and20qubit}b and d) for fitting the curve. From the long range results, we can observe that the linear cross entropy starts from a large value at low depth, and converges to the unbiased version at depth around 20-25. Also, the unbiased linear cross entropy has a small ``bump" at low depth, which is more evident for larger system size. We can observe similar behavior in our numerical simulations with 20 qubits (Fig.~2 of main text). Therefore, to obtain accurate predictions for the effective noise rate, we measure cross entropy for a short depth range starting from 20, and fit the unbiased linear cross entropy using these data points. Interestingly, in both 10 and 20 qubit results, the effective noise rate given by RCS benchmarking with the unbiased linear cross entropy is larger than the simultaneous RB predictions. Assuming our depth fitting range is deep enough so that uXEB correctly estimates fidelity (which is the case in our simulations), this suggests that some error sources were captured by RCS but not by simultaneous RB. These errors should not have come from high weight correlated noise, as in this case simultaneous RB will overestimate instead of underestimate. We leave the identification of these error sources as future work. One possible explanation for these additional errors captured by RCS benchmarking is that in RCS all CNOT gates are running at the same time in each layer, which is not the case in sRB which has independent Clifford sequences. The simultaneous CNOT gates in RCS could introduce additional couplings, leading to a larger effective noise rate. In addition, note that the effective noise rate for 20 qubits is much larger than twice the effective noise rate for 10 qubits on the same device, which suggests that the errors are highly non-uniformly distributed across the device.

Finally, taking the result $\lambda_{\mathrm{uXEB}}=3.08(3)\%$ from Fig.~3b of main text as an example, we can compute the effective noise rate per qubit $\lambda_q=\lambda/n=0.62(1)\%$. This number represents the average quality of the quantum system, which we can understand as the error rate of ``half" of a 2-qubit gate plus a single qubit gate when the gates are implemented in parallel. As a comparison, we can compute from the curve fitting data of Google's experiment~\cite{arute2019quantum,zlokapa2020boundaries} that $\lambda_q=0.43(4)\%$. Recall that our Theorem~\ref{thm:fidelitydecayapp} suggests that a smaller $\lambda_q$ means that a larger number of qubits can be benchmarked together via RCS benchmarking, although the specific constants are unclear. Google's experiment can be interpreted as evidence that $\lambda_q=0.43\%$ is small enough to implement RCS benchmarking on 53 qubits, with their gate set and architecture. We expect that a smaller error rate is needed to benchmark the same number of qubits with 1D connectivity compared with 2D. This is because in some cases random circuits with 2D connectivity are known to scramble faster than 1D~\cite{harrow2018approximate}, and therefore we expect 2D circuits to have a smaller constant $K$ in our Theorem~\ref{thm:fidelitydecayapp}. 

\subsection{Cross entropy with increasing number of qubits}
Next we present results on cross entropy with increasing number of qubits. This experiment shows how cross entropy changes with more qubits added to the subset, and also gives data points that allow us to verify our variance model. We start with a 1D subset of 6 qubits on a 27-qubit device \texttt{ibmq\_montreal}, and add more qubits to the subset until it forms a 1D chain of 20 qubits.

Fig.~\ref{fig:rcsvariance20qubit}a shows the cross entropy results for depth 20 and 25. Each data point in this figure is collected by implementing 100 random circuits. The two curves correspond to different depths have consistent shapes, and the error bars for the blue curve is larger than the error bars in the green curve, which qualitatively agrees with our prediction. Interestingly, different from the results shown in Google's experiment~\cite{arute2019quantum}, here the cross entropy does not decay as a simple exponential function with the number of qubits. This could potentially suggest that adding more qubits to the subset can dramatically change the error pattern, indicating complicated error correlations among the qubits. In particular, as more qubits are added to the subset, the pattern of edge effects and dangling qubits also changes, which can affect the noise of the subset of qubits being benchmarked. For example, in Fig.~\ref{fig:rcsvariance20qubit}a the two system sizes with lowest cross entropy ($n=14,16$) correspond to having additional dangling qubits (qubit 20 and 13 as shown in Fig.~\ref{fig:ibmq_devices}c) at the edge.

The data points collected in Fig.~\ref{fig:rcsvariance20qubit}a can be used to verify our variance model. Recall that our variance model $ \Var_{C,S}(\hat{F}_{\mathrm{uXEB}})=\E_C\left[\Var_S\left(\hat{F}_{\mathrm{uXEB}}|C\right)\right]+\Var_C\left(\E_S\left[\hat{F}_{\mathrm{uXEB}}|C\right]\right)=O\left(1/M+\lambda^2\left(\E F\right)^2\right)$ (Eq.~\eqref{eq:var_model}) states that the total variance equals the variance across samples and the variance across circuits, and we would like to verify our model for the second term $\Var_C(F)=O\left(\lambda^2\left(\E F\right)^2\right)$. For this purpose, we use a rough estimate for both sides from experiment data, and test the linear dependence. For the right hand side, we use $\lambda^2\left(\E F\right)^2\approx (\E F\log \E F)^2/d^2$ and substitute $\E F$ with the empirical mean of $\hat{F}_{\mathrm{uXEB}}$. For the left hand side, we use the sample variance of $\hat{F}_{\mathrm{uXEB}}$ across different circuits minus the average sample variance across the measurement outcome probabilities for each circuit. This value is an unbiased estimator for $\Var_C(F)$. Even though the estimates for both sides are very noisy, we can still observe a linear dependence shown in Fig.~\ref{fig:rcsvariance20qubit}b. This verifies our model $\Var_C(F)=O\left(\lambda^2\left(\E F\right)^2\right)$, which suggests that $\Var_C(F)$ does not directly depend on system size or depth, while the constant depends on circuit architecture and gate set.

\section{Additional discussions}
\label{sec:discussion}

In this paper we develop RCS benchmarking, which allows sample efficient benchmarking of the total amount of quantum noise of a many-body system using only shallow circuits and local randomness. Unlike RB and variants where the fidelity decay comes from the group structure, RCS benchmarking uses the scrambling properties of random quantum circuits, which can be directly implemented with the native gate set on any hardware platform without additional compilation. While this paper considers qubits with 1D connectivity, we expect our results to be naturally generalized to more general connectivity graphs and higher local Hilbert space dimensions.

An interesting future direction is to design variants of RCS benchmarking such that more information about the noise channel can be extracted. For example, recent work~\cite{kim2021hardwareefficient} showed that improved post-processing techniques can be used to classify incoherent vs coherent noise. In addition, we can consider RCS benchmarking with a growing number of qubits (such as in Fig.~\ref{fig:rcsvariance20qubit}) and extract noise correlations from the data. Also, similar to simultaneous RB, we can also consider simultaneous RCS where different subsets of the qubits are benchmarked at the same time. This allows more flexible subset choices than simultaneous RB, and also provides a possible way to avoid the computation barrier. Finally, it is interesting to consider if the learning algorithms for Pauli noise channels~\cite{Flammia2020efficient,Harper2020efficient,harper2020fast,flammia2021pauli} can be extended to general gate sets, using ideas from RCS benchmarking.

\section{Detailed proofs}
\label{app:proofdetail}

\subsection{Mapping random quantum circuits to a classical spin model}
We first describe the basic idea in our analysis of RCS benchmarking which uses the technique that maps random quantum circuits to a classical spin model. This technique has been used to study several other properties of random quantum circuits, see e.g.~\cite{Nahum2018operator,Zhou2019emergent,hunterjones2019unitary,bao2020theory,barak2020spoofing,dalzell2020random}.

In general our analysis of RCS benchmarking concerns the following object
\begin{equation}\label{eq:secondmomentproperty}
    \E_{C\sim\mathrm{RQC}(n,d)}\left[\text{A second moment property of $C$}\right].
\end{equation}
Here the second moment property can be understood as the following: for each individual gate $U_i$ in $C$, this property is a linear function in the second moment of $U_i$, which is $U_i^{\otimes 2}\otimes U_i^{*\otimes 2}$, where $U_i^*$ is the complex conjugate of $U_i$. By linearity of expectation, we have
\begin{equation}
    \begin{split}
        \E_{C=U_m\cdots U_1\sim\mathrm{RQC}(n,d)}\left[\text{A second moment property of $C$}\right]&=\E_{U_m\sim\mathbb{U}(4)}\cdots \E_{U_1\sim\mathbb{U}(4)}\left[\text{A second moment property of $C$}\right]\\
    &=\text{A linear function of} \E_{U_i\sim\mathbb{U}(4)}\left[U_i^{\otimes 2}\otimes U_i^{*\otimes 2}\right].
    \end{split}
\end{equation}
After taking the expectation, the above equation can be viewed as a tensor network, where the basic element in the above object is the following local tensor (re-ordered for convenience):
\begin{equation}\label{eq:localtensor}
    \E_{U\sim\mathbb{U}(4)}\left[U\otimes U^*\otimes U\otimes U^*\right]=\E_{U\sim\mathbb{U}(4)}\Bigg[\vcenter{\hbox{\includegraphics[height=3cm]{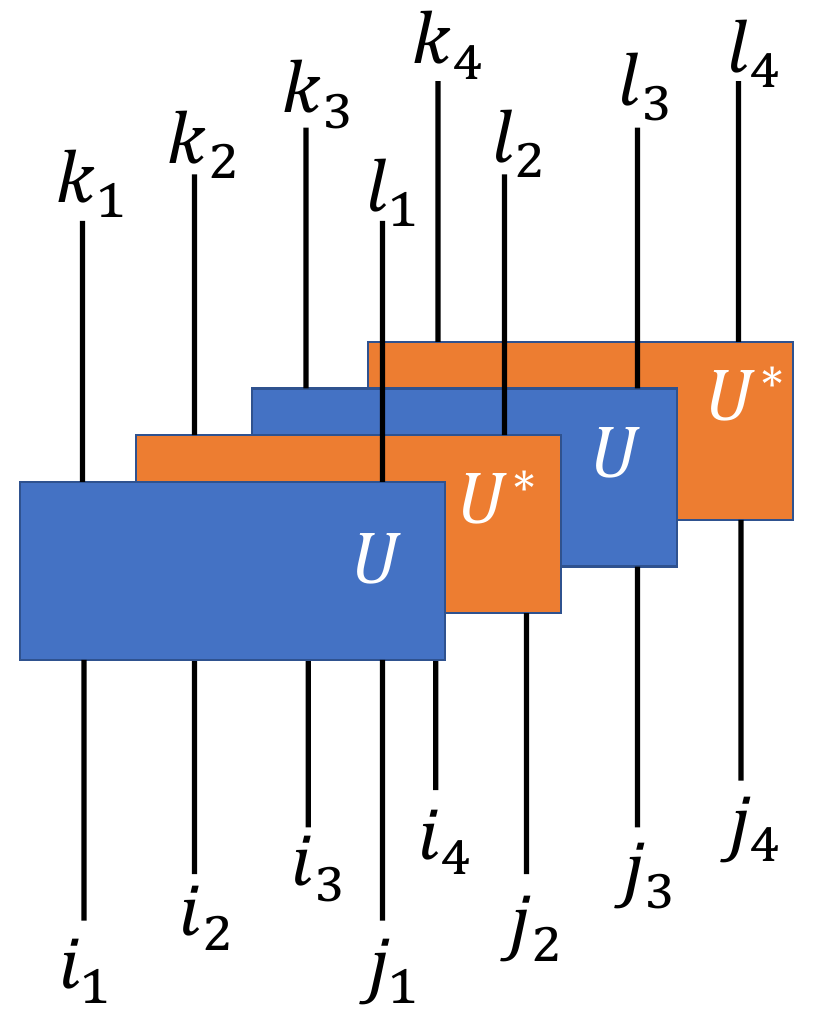}}}\Bigg]=\sum_{\tau,\sigma\in\{-1,1\}}w(\tau,\sigma)\cdot \vcenter{\hbox{\includegraphics[height=2cm]{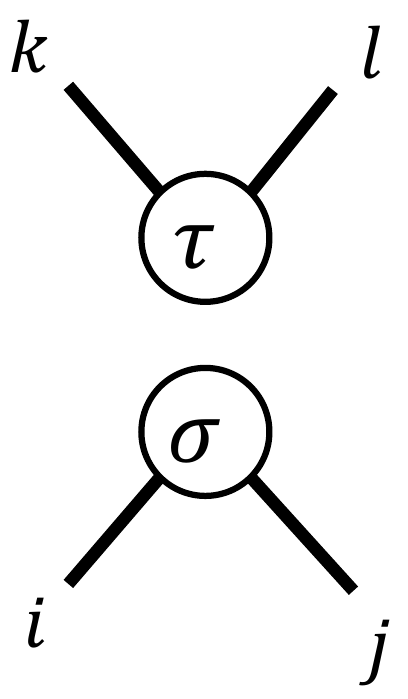}}}
\end{equation}
where $\tau,\sigma$ are classical variables taking values in $\{-1,1\}$, and the white tensors depend on their values. $w(\tau,\sigma)$ are additional weights given by
\begin{nalign}
    w(\tau,\sigma)&=\frac{1}{12}\delta_{\tau\sigma}-\frac{1}{60}\\
    &=\begin{cases}\frac{1}{15},&\tau=\sigma,\\-\frac{1}{60},&\tau\neq\sigma.\end{cases}
\end{nalign}
The white $\sigma/\tau$ tensors have the following form:
\begin{itemize}
    \item $\sigma/\tau=+1$: $\vcenter{\hbox{\includegraphics[height=1cm]{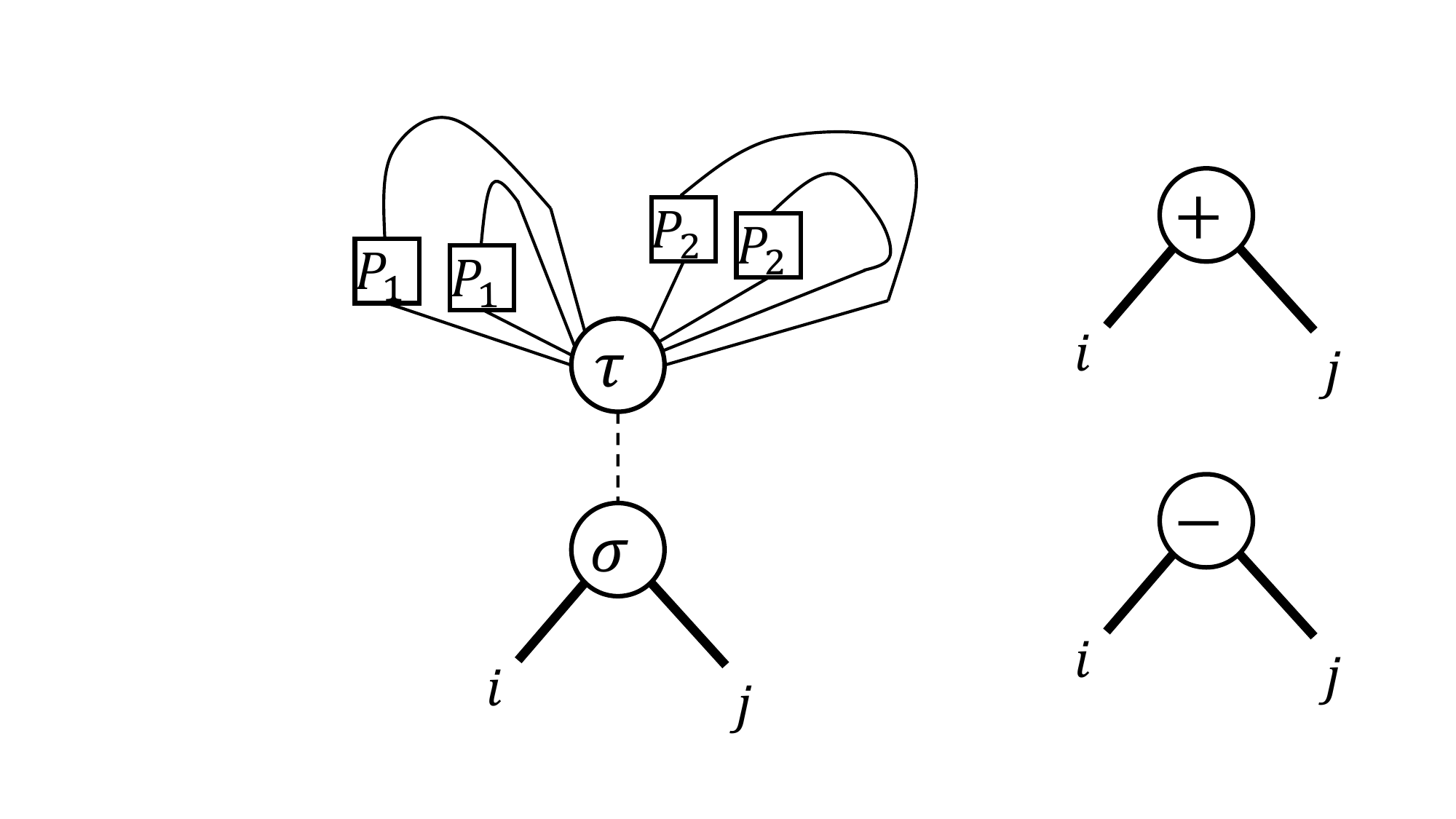}}}=\delta_{i_1 i_2}\cdot \delta_{i_3 i_4}\cdot \delta_{j_1 j_2}\cdot \delta_{j_3 j_4}$,
    \item $\sigma/\tau=-1$: $\vcenter{\hbox{\includegraphics[height=1cm]{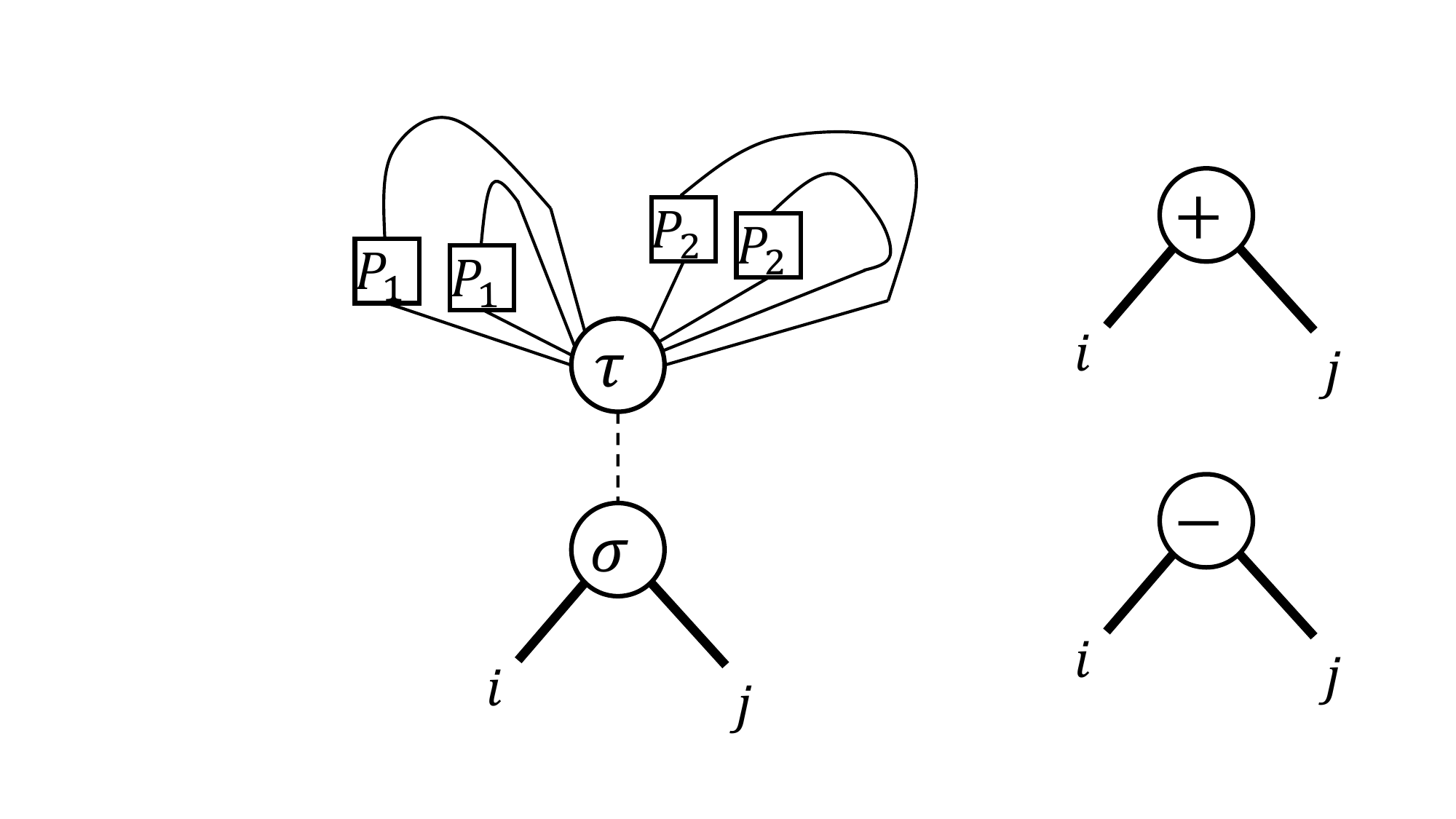}}}=\delta_{i_1 i_4}\cdot \delta_{i_2 i_3}\cdot \delta_{j_1 j_4}\cdot \delta_{j_2 j_3}$.
\end{itemize}

After replacing all two-qubit gates with the above tensor, we have mapped any second moment property of random quantum circuits to a classical spin model with the $\tau,\sigma$ variables. The original quantity in Eq.~\eqref{eq:secondmomentproperty} corresponds to the partition function of this spin model. For each specific assignment of $\tau,\sigma$, Eq.~\eqref{eq:localtensor} is a separable tensor, and the value corresponding to the assignment is easy to calculate. However, this spin model has negative weights, making it hard to directly analyze in this form.

\subsection{Proof of Theorem~\ref{thm:reducetopauli}}
Let $C\sim\mathrm{RQC}(n,d)$ be a random circuit, $\ket{\psi}=C\ket{0^n}$ be the ideal output state, and $\rho$ be the output state of the noisy circuit. It is easy to see that the fidelity
\begin{equation}
    F=\expval{\rho}{\psi}=\Tr\left[\rho\cdot \ketbra{\psi}\right]
\end{equation}
is a second moment property, because trace is a linear operation and each two-qubit gate $U_i$ appears in the fidelity as the form $U_i^{\otimes 2}\otimes U_i^{*\otimes 2}$.

Let $\mc N(\rho)=\sum_{\alpha,\beta\in\{0,1,2,3\}^n}\chi_{\alpha\beta}\sigma_{\alpha}\rho \sigma_{\beta}$ be an arbitrary noise channel which acts on all qubits after each layer of gates. Due to the linearity of quantum channels, we only need to prove that the off-diagonal terms $\sigma_\alpha \rho \sigma_\beta$ ($\alpha\neq\beta$) have 0 contribution in the calculation of $\E F$, which implies that the average fidelity with $\mc N$ is equal to the average fidelity of $\mc N^{\mathrm{diag}}$, which proves Theorem 1.

Let $\sigma_\alpha\neq\sigma_\beta\in\{I,X,Y,Z\}^{\otimes n}$ be two distinct Pauli operators. They have to differ by at least one location, and we denote the single qubit Pauli operator as $P_1\neq P_2\in\{I,X,Y,Z\}$. Without loss of generality, we assume $P_1$ and $P_2$ act on the $l_1,l_2$ index of a two-qubit gate and the $i_1,i_2$ index of another two-qubit gate. We perform the mapping described in the previous subsection which maps the average fidelity to the partition function of a classical spin model. The relevant local tensor for $P_1$ and $P_2$ is given by 
\begin{equation}
    \vcenter{\hbox{\includegraphics[height=3cm]{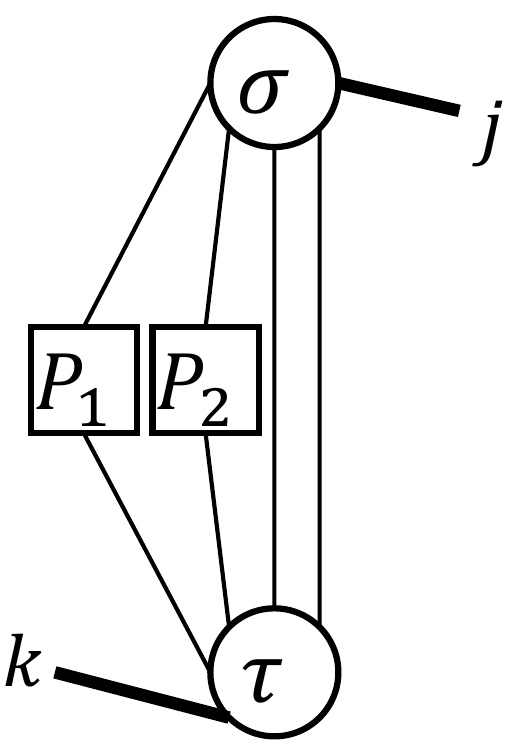}}}
\end{equation}
and it is easy to see that this tensor always equals to 0 no matter what value $\sigma$ and $\tau$ takes, which makes the average fidelity equal to 0 under the off-diagonal term. 

\subsection{Analysis of average fidelity}
Next we analyze $\E F$ under Pauli noise channels. As discussed above, $\E F$ can be directly mapped to the partition function of a classical spin model with $2m$ spins, where $m$ is the number of two-qubit gates. However, as this spin model has negative weights due to the noise channel, it appears difficult to directly analyze its partition function.

Our first idea is to separate the effect of noise from random quantum circuits via an expansion in the noise rate. As described in section~\ref{sec:fidelitydecay}, we first focus on i.i.d. single qubit Pauli-$X$ noise and later address more general cases. Following section~\ref{sec:fidelitydecay} we write the average fidelity as
\begin{equation}\label{eq:appfidelityexpansion}
\begin{split}
    \E F&=(1-\varepsilon)^{nd}+\sum_{i=1}^{nd}\varepsilon (1-\varepsilon)^{nd-1}\E\left|\braket{\psi_i}{\psi}\right|^2+\sum_{i=1}^{nd-1}\sum_{j=i+1}^{nd}\varepsilon^2(1-\varepsilon)^{nd-2}\E\left|\braket{\psi_{ij}}{\psi}\right|^2+\cdots\\
    &:=F_0+\E F_1+\sum_{k\geq 2}\E F_k.
\end{split}
\end{equation}
Here $\ket{\psi_i}$ denotes the ideal state with an $X$ error at location $i$. We can understand $\E F_k$ as the contribution to average fidelity with $k$ errors in the circuit, which has an $\varepsilon^k$ factor. Therefore we expect the higher order contributions to be small compared with $\E F_1$, when $\varepsilon$ is sufficiently small.

\begin{figure}[t]
    \centering
    \includegraphics[width=10cm]{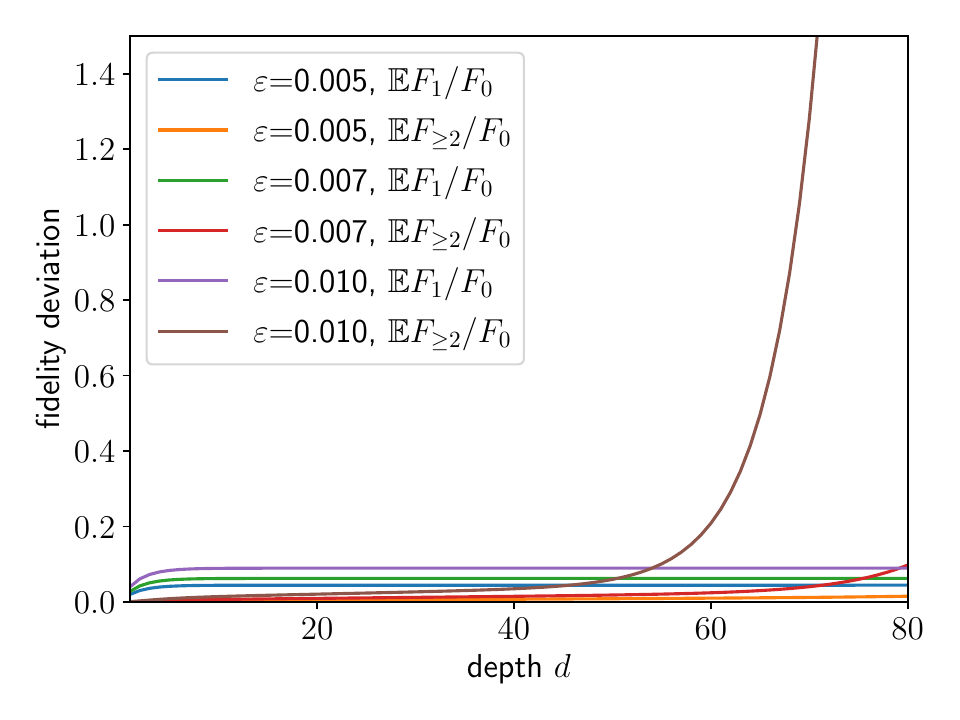}
    \caption{Simulation of the first and higher order terms in average fidelity. Here we consider $n=20$ qubits on a 1D ring and i.i.d. Pauli-$X$ noise with effective noise rate per qubit $\varepsilon$.}
    \label{fig:fidelityhigherorder}
\end{figure}

To verify this intuition, we present numerical simulation results shown in Fig.~\ref{fig:fidelityhigherorder}. Here the simulation is performed by simulating the partition function of the corresponding spin model and therefore has no error bar. For $n=20$ qubits with $\varepsilon\leq 0.01$ Pauli-$X$ noise per qubit, we plot the ratio $\E F_1/F_0$ as well as higher order terms $\E F_{\geq 2}/F_0$, where $\E F_{\geq 2}=\sum_{k\geq 2}\E F_k$. From the simulation results we can observe the following:
\begin{itemize}
    \item First, as we will rigorously prove next, $\E F_1/F_0$ converges to a constant that is proportional to $\varepsilon$ as depth increases.
    \item Second, note that $\E F_{\geq 2}/F_0$ is much smaller compared with $\E F_1/F_0$ for shallow depth circuits. This verifies our intuition to throw away higher order terms in our analysis of RCS benchmarking experiments, as in experiments we are limited to implementing shallow depth circuits.
    \item Third, note that $\E F_{\geq 2}/F_0\to\infty$ as $d\to\infty$.
\end{itemize}
Next we provide a simple argument to explain the third point. As $d\to\infty$, we have $F_0=(1-\varepsilon)^{nd}\to 0$, and $\E F_1\leq O(F_0)\to 0$. However, we know that $\E F\to\frac{1}{2^n}$ as the state converges to the maximally mixed state on average. Taking the limit $d\to\infty$ on both sides of Eq.~\eqref{eq:appfidelityexpansion}, we have $\E F_{\geq 2}\to\frac{1}{2^n}$, and therefore $\E F_{\geq 2}/F_0\to\infty$. This observation suggests a potential improvement to RCS benchmarking. In particular, we can consider other fitting schemes, such as fitting an exponential decay with $\E F-\frac{1}{2^n}$ instead of $\E F$, which might extend the depth range for which RCS benchmarking gives accurate results.

Next we proceed by analyzing the first order term $\E F_1$ and prove Theorem~\ref{thm:fidelitydecayapp}. Following the proof sketch in section~\ref{sec:fidelitydecay}, we focus on proving
\begin{equation}\label{eq:appPauliobservable}
    \E\left|\braket{\psi_l}{\psi}\right|^2=\E_{C\sim\mathrm{RQC}(n,l)}\left|\bra{0^n}C^\dag \sigma_p C\ket{0^n}\right|^2\leq e^{-\Delta l}+\frac{1}{2^n},
\end{equation}
where $\sigma_p\in\{I,X,Y,Z\}^{\otimes n}$ is a Pauli operator. We will formally prove Eq.~\eqref{eq:appPauliobservable} for arbitrary 3-local Pauli errors, and show numerical evidence that shows this holds in general.

It is easy to see that Eq.~\eqref{eq:appPauliobservable} implies that $\E F_1/F_0=O(\lambda)$ where $\lambda$ is the effective noise rate for the global noise channel. For example, for i.i.d. single qubit Pauli-$X$ noise channel where $\lambda\approx n\varepsilon$, Eq.~\eqref{eq:appPauliobservable} with $\sigma_p=X\otimes I^{\otimes n-1}$ implies that
\begin{equation}
    \begin{split}
        \E F_1/F_0&=\frac{n\varepsilon}{1-\varepsilon}\sum_{l=1}^{d}\E\left|\braket{\psi_l}{\psi}\right|^2\\
    &\leq \frac{n\varepsilon}{1-\varepsilon}\sum_{l=1}^{d}\left(e^{-\Delta l}+\frac{1}{2^n}\right)\\
    &=O(n\varepsilon)
    \end{split}
\end{equation}
when $d\leq 2^n$. We can similarly prove $\E F_1/F_0=O(\lambda)$ for general noise channels, as long as Eq.~\eqref{eq:appPauliobservable} still holds, where we replace the single qubit Pauli-$X$ with the Pauli operator in the noise channel.

To complete our analysis on the first order term $\E F_1$, it remains to prove the upper bound on $\E_{C\sim\mathrm{RQC}(n,l)}\left|\bra{0^n}C^\dag \sigma_p C\ket{0^n}\right|^2$. First note that this is a second moment property of ideal depth-$l$ random quantum circuits, as
\begin{equation}\label{eq:apppauliobs}
    \E_{C\sim\mathrm{RQC}(n,l)}\left|\bra{0^n}C^\dag \sigma_p C\ket{0^n}\right|^2=\E_{C\sim \mathrm{RQC}(n,l)}\Tr\left[\sigma_p C\ketbra{0^n}{0^n}C^\dag\sigma_p\cdot  C\ketbra{0^n}{0^n}C^\dag\right].
\end{equation}
We therefore map this quantity to the partition function of a classical spin model, where the bulk of the spin model is a hexagonal lattice consists of local tensors given in Eq.~\eqref{eq:localtensor}. 
    
This was first developed by~\cite{Nahum2018operator}, where it was noted that this spin model can be further simplified by first summing over the $\tau$ variables, and as a result eliminates the negative weights in $w(\tau,\sigma)$. After performing this summation over $\tau$ variables, the spin model corresponds to a triangular lattice of the $\sigma$ variables shown in Fig.~\ref{fig:rcsspinmodel}, and the bulk of the weights corresponding to the three-body interaction are given in Fig.~\ref{fig:spinmodeldetails}. We refer to Refs.~\cite{Nahum2018operator,Zhou2019emergent,hunterjones2019unitary,bao2020theory,barak2020spoofing,dalzell2020random} for the detailed calculations.

Next we derive the boundary conditions in order to fully specify the spin model for Eq.~\eqref{eq:apppauliobs}. The bottom boundary corresponds to open boundary conditions for all $\sigma$ variables, as the input is fixed to be 0. At the top boundary, for each two-qubit gate there is a Pauli operator $P_1\otimes P_2\in\{I,X,Y,Z\}^{\otimes 2}$ acting on it. Summing over the $\tau$ variables for each two-qubit gate at the top, we derive the following boundary condition:

\begin{equation}
    \vcenter{\hbox{\includegraphics[height=3cm]{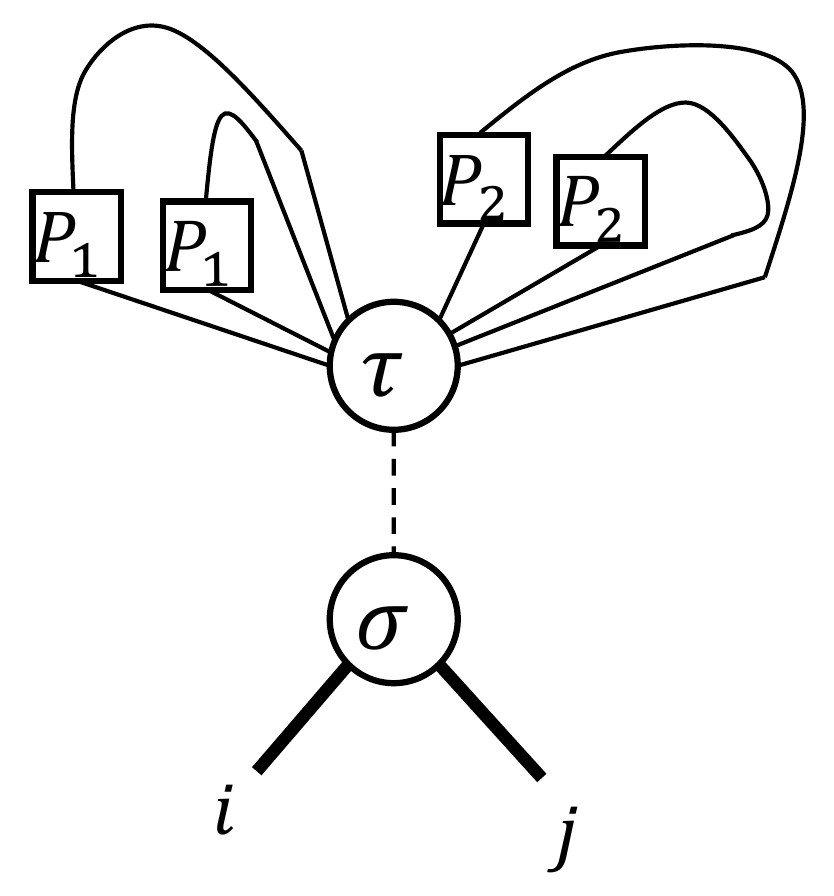}}}=
    \begin{cases}
    \vcenter{\hbox{\includegraphics[height=1cm]{figs/fig_minus_spin.pdf}}},& P_1=P_2=I,\\
    \frac{4}{15}\cdot\vcenter{\hbox{\includegraphics[height=1cm]{figs/fig_plus_spin.pdf}}},& P_1,P_2\neq I,I,\,\,\,\,\sigma=\oplus,\\
    -\frac{1}{15}\cdot\vcenter{\hbox{\includegraphics[height=1cm]{figs/fig_minus_spin.pdf}}},& P_1,P_2\neq I,I,\,\,\,\,\sigma=\ominus.\\
    \end{cases}
\end{equation}
Here, the dashed line denotes the $w(\tau,\sigma)$ interaction. This can be summarized as follows:
\begin{enumerate}
    \item When there is no error, the two-qubit gate corresponds to a fixed $\ominus$ spin.
    \item When there is one or two Pauli errors, the $\sigma$ spin has an additional weight $\frac{4}{15}$ when it equals to $\oplus$, and $-\frac{1}{15}$ when it equals to $\ominus$.
\end{enumerate}

Now we are ready to calculate the expectation value in Eq.~\eqref{eq:apppauliobs}. Define $\mc Z(n,l;b)$ as the partition function of the spin model which corresponds to $n$-qubit, depth-$l$ random quantum circuits, where the top boundary condition is given by $b\in\{\oplus,\ominus\}^{n/2}$. We start with single qubit Pauli noise, where $\sigma_p=X\otimes I^{\otimes n-1}$. In this case, there is one spin at the top which has the second boundary condition, while all other spins are fixed to be $\ominus$. Therefore 
\begin{equation}
    \begin{split}
        \E_{C\sim\mathrm{RQC}(n,l)}\left|\bra{0^n}C^\dag X C\ket{0^n}\right|^2&=\frac{4}{15}\mc Z(n,l;\cdots \ominus\oplus\ominus\cdots)-\frac{1}{15}\mc Z(n,l;\cdots \ominus\ominus\ominus\cdots)\\
        &=\frac{4}{15}\mc Z(n,l;\cdots \ominus\oplus\ominus\cdots)-\frac{1}{15},
    \end{split}
\end{equation}
where in the second line we use the observation that the partition function equals to 1 when all spins at the top boundary are equal to $\ominus$. This proves the equation in Fig.~\ref{fig:rcsspinmodel}.

Following the argument in section~\ref{sec:fidelitydecay} and demonstrated in Fig.~\ref{fig:spinmodeldetails}, we can divide the partition function into two parts,
\begin{equation}
    \mc Z(n,l;\cdots \ominus\oplus\ominus\cdots)=\mc Z_1(n,l;\cdots \ominus\oplus\ominus\cdots)+\mc Z_2(n,l;\cdots \ominus\oplus\ominus\cdots)
\end{equation}
where $\mc Z_i$ denotes the sum of weights of domain wall configuration of type $i$ shown in Fig.~\ref{fig:spinmodeldetails}b. First, note that the analysis of $\mc Z_1$ is simple. Each of the two domain walls has length $l-1$ and weight $\left(\frac{2}{5}\right)^{l-1}$. For each domain wall, the number of possible configurations is at most $2^{l-1}$. Thus, the contribution of two domain walls is at most
\begin{equation}
    \mc Z_1(n,l;\cdots \ominus\oplus\ominus\cdots)\leq \left(\frac{4}{5}\right)^{2(l-1)}.
\end{equation}
In particular, this implies that $\mc Z_1(n,\infty;\cdots \ominus\oplus\ominus\cdots)=0$. Next, we use the following useful fact: when $l\to\infty$, depth-$l$ random quantum circuits converge to an exact unitary 2-design~\cite{Brandao2016}, which implies that
\begin{equation}
    \begin{split}
        \lim_{l\to\infty}\E_{C\sim\mathrm{RQC}(n,l)}\left|\bra{0^n}C^\dag \sigma_p C\ket{0^n}\right|^2=\E_{C\sim\mathbb{U}(2^n)}\left|\bra{0^n}C^\dag \sigma_p C\ket{0^n}\right|^2=\frac{1}{2^n+1},
    \end{split}
\end{equation}
for any non-zero $\sigma_p\in\{I,X,Y,Z\}^{\otimes n}$. See e.g.~\cite{Harrow2009random} for a proof of the second equality. Therefore,
\begin{equation}
    \begin{split}
        \frac{1}{2^n+1}&=\lim_{l\to\infty}\E_{C\sim\mathrm{RQC}(n,l)}\left|\bra{0^n}C^\dag X C\ket{0^n}\right|^2\\
        &=\frac{4}{15}\left(\mc Z_1(n,\infty;\cdots \ominus\oplus\ominus\cdots)+\mc Z_2(n,\infty;\cdots \ominus\oplus\ominus\cdots)\right)-\frac{1}{15}\\
        &=\frac{4}{15}\mc Z_2(n,\infty;\cdots \ominus\oplus\ominus\cdots)-\frac{1}{15}.
    \end{split}
\end{equation}
Combining the above facts, we have
\begin{equation}
\begin{split}
    \E_{C\sim\mathrm{RQC}(n,l)}\left|\bra{0^n}C^\dag X C\ket{0^n}\right|^2&=\frac{4}{15}\left(\mc Z_1(n,l;\cdots \ominus\oplus\ominus\cdots)+\mc Z_2(n,l;\cdots \ominus\oplus\ominus\cdots)\right)-\frac{1}{15}\\
    &\leq \frac{4}{15}\left(\frac{4}{5}\right)^{2(l-1)}+\frac{4}{15}\mc Z_2(n,l;\cdots \ominus\oplus\ominus\cdots)-\frac{1}{15}\\
    &\leq \frac{4}{15}\left(\frac{4}{5}\right)^{2(l-1)}+\frac{4}{15}\mc Z_2(n,\infty;\cdots \ominus\oplus\ominus\cdots)-\frac{1}{15}\\
    &=\frac{4}{15}\left(\frac{4}{5}\right)^{2(l-1)}+\frac{1}{2^n+1}.
\end{split}
\end{equation}
Here, the third line follows from the simple observation that $\mc Z_2(n,l;\cdots \ominus\oplus\ominus\cdots)$ is monotonically non-decreasing with respect to $l$. This proves Eq.~\eqref{eq:appPauliobservable} for arbitrary 1-local errors.

Next we extend the above argument to arbitrary 3-local errors, of the form $X\otimes X\otimes X\otimes I_{\mathrm{else}}$, $X\otimes I\otimes Z\otimes I_{\mathrm{else}}$, etc. The 2-local case either reduces to 1-local (when the two errors act on the same two-qubit gate) or 3-local (when the two errors act on different two-qubit gates). For arbitrary 3-local errors, the top boundary has two neighboring spins with the second boundary condition, and all other spins are fixed to be $\ominus$. After calculating the boundary conditions, we have
\begin{equation}
    \E_{C\sim\mathrm{RQC}(n,l)}\left|\bra{0^n}C^\dag \sigma_{\text{3-local}} C\ket{0^n}\right|^2=\frac{16}{225} \mc Z(n,l;\cdots\ominus\oplus\oplus\ominus\cdots)-\frac{8}{225}\mc Z(n,l;\cdots\ominus\oplus\ominus\cdots)+\frac{1}{225}.
\end{equation}
To relate the first two terms, we derive the following recursive formula for the partition function:
\begin{equation}
    \mc Z(n,l;\cdots\ominus\oplus\ominus\cdots)=\frac{4}{25}\mc Z(n,l-1;\cdots\ominus\oplus\oplus\ominus\cdots)+\frac{8}{25}\mc Z(n,l-1;\cdots\ominus\oplus\ominus\cdots)+\frac{4}{25}.
\end{equation}
Combining both equations above, we have
\begin{equation}
\begin{split}
    &\E_{C\sim\mathrm{RQC}(n,l)}\left|\bra{0^n}C^\dag \sigma_{\text{3-local}} C\ket{0^n}\right|^2\\
    &=\frac{4}{9} \mc Z(n,l+1;\cdots\ominus\oplus\ominus\cdots)-\frac{8}{45}\mc Z(n,l;\cdots\ominus\oplus\ominus\cdots)-\frac{1}{15}\\
    &\leq O\left(e^{-\Delta l}\right) + \frac{4}{9} \mc Z_2(n,l+1;\cdots\ominus\oplus\ominus\cdots)-\frac{8}{45}\mc Z_2(n,l;\cdots\ominus\oplus\ominus\cdots)-\frac{1}{15}.
\end{split}
\end{equation}
Here in the inequality we have combined the exponentially decaying $\mc Z_1$ terms. Now, notice that the difference between $\mc Z_2(n,l+1;\cdots\ominus\oplus\ominus\cdots)$ and $\mc Z_2(n,l;\cdots\ominus\oplus\ominus\cdots)$ can be bounded: it corresponds to domain walls with depth $l$, and therefore has weight that is exponentially small in $l$, which gives
\begin{nalign}
    &\E_{C\sim\mathrm{RQC}(n,l)}\left|\bra{0^n}C^\dag \sigma_{\text{3-local}} C\ket{0^n}\right|^2\\
    &\leq O\left(e^{-\Delta l}\right) + \frac{4}{9} \mc Z_2(n,l+1;\cdots\ominus\oplus\ominus\cdots)-\frac{8}{45}\mc Z_2(n,l;\cdots\ominus\oplus\ominus\cdots)-\frac{1}{15}\\
    &\leq O\left(e^{-\Delta l}\right) + \frac{4}{9} \left(\mc Z_2(n,l;\cdots\ominus\oplus\ominus\cdots)+O\left(e^{-\Delta l}\right)\right)-\frac{8}{45}\mc Z_2(n,l;\cdots\ominus\oplus\ominus\cdots)-\frac{1}{15}\\
    &=O(e^{-\Delta l}) + \frac{4}{15}\mc Z_2(n,l;\cdots\ominus\oplus\ominus\cdots)-\frac{1}{15}\\
    &\leq O(e^{-\Delta l}) + \frac{4}{15}\mc Z_2(n,\infty;\cdots\ominus\oplus\ominus\cdots)-\frac{1}{15}\\
    &= O(e^{-\Delta l}) + \frac{1}{2^n+1}.
\end{nalign}
This proves Eq.~\eqref{eq:appPauliobservable} for arbitrary 3-local errors, and concludes the proof of Theorem~\ref{thm:fidelitydecayapp}.

\begin{figure}[t]
    \centering
    \includegraphics[width=10cm]{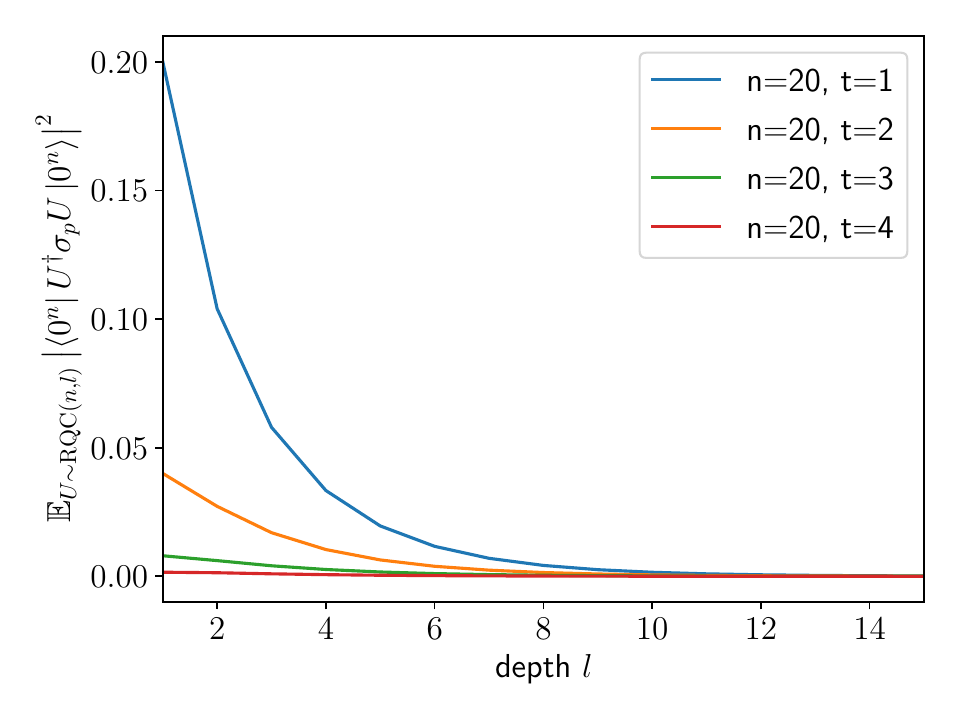}
    \caption{Simulation of the second moment of Pauli observable for depth-$l$ random quantum circuits on $n=20$ qubits, which corresponds to a classical spin model with depth $l$ and width 10. Here $t$ denotes the number of consecutive spins that has the second boundary condition.}
    \label{fig:pauliobservabledecay}
\end{figure}

Finally we present numerical simulation results which verify our proof and also show the correctness of Eq.~\eqref{eq:appPauliobservable} for higher weight errors. In Fig.~\ref{fig:pauliobservabledecay} we show simulation results for depth-$l$ random quantum circuits on 20 qubits. Here we simulate the spin model of width 10, with $t$ consecutive spins at the top which have the second boundary condition. $t$ corresponds to the locality of the Pauli error, as for a $k$-local Pauli operator we have $\frac{k}{2}\leq t\leq \frac{k}{2}+1$. From Fig.~\ref{fig:pauliobservabledecay} we can observe exponential decays as $l$ increases, which eventually converge to $\frac{1}{2^n+1}$. Interestingly, the curve also decreases as $t$ increases. We have already proven the exponential decay for $t=1$ (blue curve). The simulation results suggest that Eq.~\eqref{eq:appPauliobservable} can actually be strengthened for higher weight errors. Here we conjecture that 
\begin{equation}
    \E_{C\sim\mathrm{RQC}(n,l)}\left|\bra{0^n}C^\dag \sigma_{k\text{-local}} C\ket{0^n}\right|^2\leq e^{-O(k)} e^{-\Delta l}+\frac{1}{2^n},
\end{equation}
which follows intuitively because of the additional factor of $\sim (4/15)^t$ due to the second boundary condition.

\subsection{Alternative proof idea}
Here we also present an alternative proof idea for analyzing $\E_{C\sim\mathrm{RQC}(n,l)}\left|\bra{0^n}C^\dag \sigma_p C\ket{0^n}\right|^2$ using the result of~\cite{Brandao2016}. This proof idea works for arbitrary Pauli observables but has an additional $2^n$ factor due to converting between different norms, and therefore does not satisfy our purpose for proving Eq.~\eqref{eq:appPauliobservable}. It is interesting to see whether this idea can be improved to directly prove Eq.~\eqref{eq:appPauliobservable} with arbitrary locality.

First, recall the following result from~\cite{Brandao2016},
\begin{equation}\label{eq:bhh16}
    \left\|\E_{C\sim\mathrm{RQC}(n,l)}C^{\otimes t}\otimes(C^*)^{\otimes t}-\E_{C\sim\mathbb{U}(2^n)}C^{\otimes t}\otimes(C^*)^{\otimes t}\right\|_{2\to 2}\leq e^{-\Delta l},
\end{equation}
where $\mathrm{RQC}(n,l)$ denotes depth-$l$ random quantum circuits on $n$ qubits, $\|\cdot\|_{2\to 2}$ denotes the super-operator 2 norm, and $\Delta$ is a constant that depends on $t$. When $t=2$, $\E [C^{\otimes 2}\otimes(C^*)^{\otimes 2}]$ can be understood as a quantum channel that acts on $2n$ qubits. As we have a fixed input $\ketbra{0^n}^{\otimes 2}$, Eq.~\eqref{eq:bhh16} implies that the output states are close in 2 norm, which combined with Lemma~\ref{lemma:2normobservable} gives
\begin{equation}
    \E_{C\sim\mathrm{RQC}(n,l)}\left|\bra{0^n}C^\dag \sigma_p C\ket{0^n}\right|^2\leq 2^n e^{-\Delta l}+\frac{1}{2^n+1},
\end{equation}
which has an additional dimension factor $2^n$ due to the conversion between 1 and 2 norm. However, our numerical simulation results in Fig.~\ref{fig:pauliobservabledecay} suggest that this additional $2^n$ factor can be replaced by $O(1)$, for single qubit as well as high weight Pauli errors.

\begin{lemma}\label{lemma:2normobservable}
For a Hermitian observable $W$ and $n$-qubit quantum states $\rho,\sigma$, we have 
\begin{equation}
    \Tr[W(\rho-\sigma)]\leq 2^{n/2}\|W\|_\infty\cdot \|\rho-\sigma\|_2,
\end{equation}
where $\|\cdot\|_p$ denotes the Schatten $p$-norm.
\end{lemma}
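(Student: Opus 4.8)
The plan is to treat this as a routine Schatten-norm duality estimate, where the dimension factor enters only through a conversion between two different Schatten norms. Write $\Delta := \rho - \sigma$, which is Hermitian since $\rho$ and $\sigma$ are, so that $\Tr[W\Delta]$ is real. The strategy is to first control $\Tr[W\Delta]$ by a single inner-product inequality, and then ``pay'' the factor $2^{n/2}$ to trade one Schatten norm for another on the $2^n$-dimensional state space.

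Concretely, I would proceed in two steps. First, apply the Cauchy--Schwarz inequality for the Hilbert--Schmidt inner product $\langle A,B\rangle=\Tr[A^\dagger B]$; since $W$ is Hermitian, $\langle W,\Delta\rangle=\Tr[W\Delta]$, so
\begin{equation}
    \Tr[W\Delta]\leq\left|\Tr[W\Delta]\right|\leq\|W\|_2\,\|\Delta\|_2,
\end{equation}
where $\|A\|_2=\sqrt{\Tr[A^\dagger A]}$ is the Schatten $2$-norm. Second, convert $\|W\|_2$ into the operator norm $\|W\|_\infty$ using the dimension: $W$ acts on a $2^n$-dimensional space, hence has at most $2^n$ singular values $s_i(W)$, each bounded by $\|W\|_\infty$, giving
\begin{equation}
    \|W\|_2=\Big(\sum_i s_i(W)^2\Big)^{1/2}\leq\big(2^n\|W\|_\infty^2\big)^{1/2}=2^{n/2}\|W\|_\infty.
\end{equation}
Combining the two displays yields exactly $\Tr[W\Delta]\leq 2^{n/2}\|W\|_\infty\,\|\Delta\|_2$, which is the claim.

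An equally short alternative is to invoke H\"older's inequality for Schatten norms with exponents $(\infty,1)$, obtaining $\Tr[W\Delta]\leq\|W\|_\infty\,\|\Delta\|_1$, and then apply the norm comparison $\|\Delta\|_1\leq 2^{n/2}\|\Delta\|_2$ (again just Cauchy--Schwarz on the $2^n$ singular values of $\Delta$); this produces the identical bound. There is no genuine obstacle to the lemma itself, as every step is elementary. The only point worth flagging is that the factor $2^{n/2}$ is inherent to this style of argument and cannot be removed within it: it is saturated, for example, by taking $W=\operatorname{sign}(\Delta)$ together with a traceless $\Delta$ of flat spectrum (equal-magnitude eigenvalues of alternating sign), in which case $\|\Delta\|_1=2^{n/2}\|\Delta\|_2$. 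This is precisely why feeding this lemma into the operator-norm bound of Eq.~\eqref{eq:bhh16} costs an overall $2^n$ after squaring, and hence why the alternative proof idea falls short of the tight estimate in Eq.~\eqref{eq:appPauliobservable}.
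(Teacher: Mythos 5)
Your proof is correct, and your ``alternative'' route---H\"older with exponents $(\infty,1)$ followed by $\|\Delta\|_1\leq 2^{n/2}\|\Delta\|_2$ via Cauchy--Schwarz on the eigenvalues---is exactly the paper's own proof; your primary route merely shifts the dimension factor onto $W$ instead of $\Delta$, which is an equivalent elementary argument. Your remarks on saturation and on why the resulting $2^n$ factor makes this approach fall short of Eq.~\eqref{eq:appPauliobservable} are also consistent with the paper's discussion.
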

\begin{proof}
Let $\lambda_1\geq\cdots\geq\lambda_{2^n}$ denote the absolute eigenvalues of $\rho-\sigma$. Then
\begin{nalign}
    \Tr[W(\rho-\sigma)]&\leq \|W\|_\infty(\lambda_1+\cdots+\lambda_{2^n})\\
    &\leq 2^{\frac{n}{2}} \|W\|_\infty(\lambda_1^2+\cdots+\lambda_{2^n}^2)^{\frac{1}{2}}\\
    &=2^{\frac{n}{2}} \|W\|_\infty\cdot \|\rho-\sigma\|_2.
\end{nalign}
\end{proof}

\section{Additional numerical simulation results}
\label{app:numerical}

\subsection{Computing the effective noise rate}
\begin{table*}[t]
  \begin{tabular}{ |l|c|c|c|c| }
    \hline
    & Name & Lindblad Superoperator & Description & ENR\\
    \hline
     \multirow{3}{*}{\makecell[c]{Single qubit\\ noise}}  &  $T_1$ & $\gamma D[\sigma]$ & Amplitude Decay & $\gamma/2$\\
       & $T_\phi$ & $\gamma D[\sigma^\dagger\sigma]$ & Dephasing & $\gamma/4$ \\
       & Pauli-$X$ & $\gamma D[X]$ & Pauli-$X$ & $\gamma$ \\\hline
      \multirow{3}{*}{\makecell[c]{Correlated\\ noise}}  & Corr-$T_1$ & $\gamma D[\sigma_i \sigma_j]$ &  Amplitude Decay & $\gamma/4$ \\
        & Corr-$T_\phi$ & $\gamma D[\sigma^\dagger_i\sigma_i \sigma^\dagger_j\sigma_j]$ & Dephasing & $3\gamma/16$ \\
        & Corr-Pauli & $\gamma D[\sigma_p]$ &  Arbitrary Pauli noise &$\gamma$\\
    \hline
  \end{tabular}
  \caption{Effective noise rate for various noise models, including those studied in this paper. Here $\sigma=\ketbra{0}{1}$, $\sigma_p\in\{I,X,Y,Z\}^{\otimes n}$. $\gamma$ is the noise strength used in simulating the Lindblad evolution, and ENR stands for effective noise rate.}
  \label{tab:noisechannelENR}
\end{table*}

In this section we show how to compute the effective noise rate given the description of a noise model, where the results are given in Table~\ref{tab:noisechannelENR}. These results are used to control the total effective noise rate in our numerical simulations. For example, in the first noise model of Table~\ref{tab:noise_table_app}, the total effective noise rate is $\gamma/2+2\times \gamma/4=\gamma$.

Next we show how to calculate the numbers in Table~\ref{tab:noisechannelENR} with an example. Consider the $T_\phi$ noise model with Lindblad operator $\gamma D[\ketbra{1}]$. The evolution of the density matrix is given by
\begin{equation}
    \frac{\mathrm{d} \rho}{\mathrm{d} t}
  =\gamma \left(\ketbra{1}\rho\ketbra{1}-\frac{1}{2}\ketbra{1}\rho-\frac{1}{2}\rho\ketbra{1}\right).
\end{equation}
In our simulation the evolution continues for 1 time unit. Using a first order approximation, we can write the noise channel as
\begin{equation}
    \mc N(\rho)=\rho+\gamma \left(\ketbra{1}\rho\ketbra{1}-\frac{1}{2}\ketbra{1}\rho-\frac{1}{2}\rho\ketbra{1}\right).
\end{equation}
To represent the noise channel in Pauli basis, we use $\ketbra{1}=\frac{I-Z}{2}$ and get
\begin{equation}
\begin{split}
    \mc N(\rho)&=\rho+\frac{\gamma}{4}(Z\rho Z-\rho)\\
    &=\left(1-\frac{\gamma}{4}\right)\rho+\frac{\gamma}{4}Z\rho Z.
\end{split}
\end{equation}
Therefore the effective noise rate of $T_\phi$ is $\gamma/4$, and the other numbers can be calculated similarly.

\subsection{Simulation algorithm}
For a generic noise source, the incoherent dynamics of the density matrix between gates can be described by the master equation,
\begin{equation}
  \frac{\mathrm{d} \rho}{\mathrm{d} t}
  = \sum_l \gamma_l D[J_l](\rho),
\end{equation}
where $\rho$ is the density matrix, $D[J_i](\rho) = J_i \rho J_i^\dagger  - \frac{1}{2}(J_i^\dagger J_i \rho + \rho J_i^\dagger J_i)$ is a Lindblad superoperator for generic collapse operator $J_i$, 
and we use units where $\hbar=1$. A Lindblad superoperator can represent various Markovian noise sources, such as amplitude decay, dephasing, correlated noise, etc. Lindblad operators for the various noise
sources studied in this work are given in Table~\ref{tab:noisechannelENR}. We take the system Hamiltonian, $H$, to be zero and assume that gates are applied perfectly and instantaneously one time unit apart.

For 10 qubit simulations, we evolve the density matrix according to the Lindblad master equation.
For a large number of qubits (such as the 20 qubit runs of the main text), storing the full density matrix $\rho$ becomes unfeasible. Instead, we simulate the action of the 
noise sources using the Monte Carlo wave function (MCWF) method~\cite{plenio1998quantum}. We evolve under an effective, non-Hermitian Hamiltonian
\begin{equation}
H_{\mathrm{eff}} = \frac{-i}{2} \sum_l \gamma_l J_l^\dagger J_l.
\end{equation}
As the state evolves, it will gradually lose norm. A random number $p$ is drawn, 
and the state is evolved under $H_{\mathrm{eff}}$ until the norm falls below $p$. 
At this point a ``quantum jump" occurs. At this point, one of the noise channels is
randomly chosen. The probability
of the jump happening due to noise channel $l$ is given by
\begin{equation}
P_l = \frac{\gamma_l \langle \psi | J_l^\dagger J_l | \psi \rangle}{\sum_l \gamma_l \langle \psi | J_l^\dagger J_l | \psi \rangle}.
\end{equation}
Once a specific noise channel is chosen, the jump is applied according to
\begin{equation}
    |\psi \rangle = \frac{J_l |\psi\rangle}{\langle \psi | J_l^\dagger J_l | \psi \rangle}.
\end{equation}
A new random number, $p$, is drawn and the evolution then proceeds under the effective non-Hermitian Hamiltonian $H_{\mathrm{eff}}$
until the next jump occurs. This process is applied between gates until all gates have been applied. This 
process is repeated many times, where each time a pure state trajectory of the noisy circuit is generated. Due to linearity, the fidelity and fidelity estimators of a noisy circuit can be computed by averaging over the pure state trajectories.

\subsection{Additional simulation results}
\begin{figure}[t]
   \centering
   \subfloat[]{
    \centering
    \includegraphics[width=0.4\linewidth]{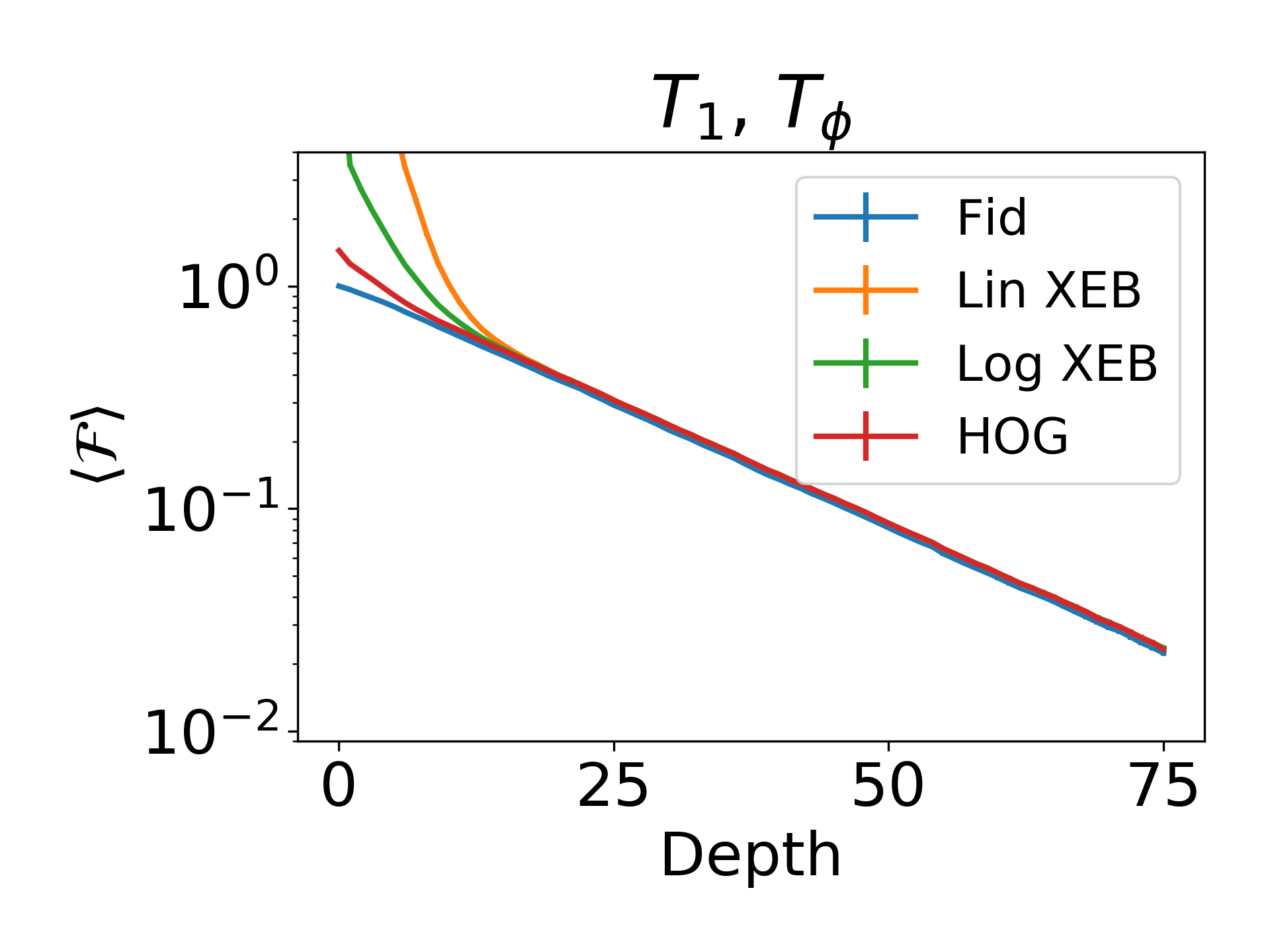}
    }
       \subfloat[]{
    \centering
    \includegraphics[width=0.4\linewidth]{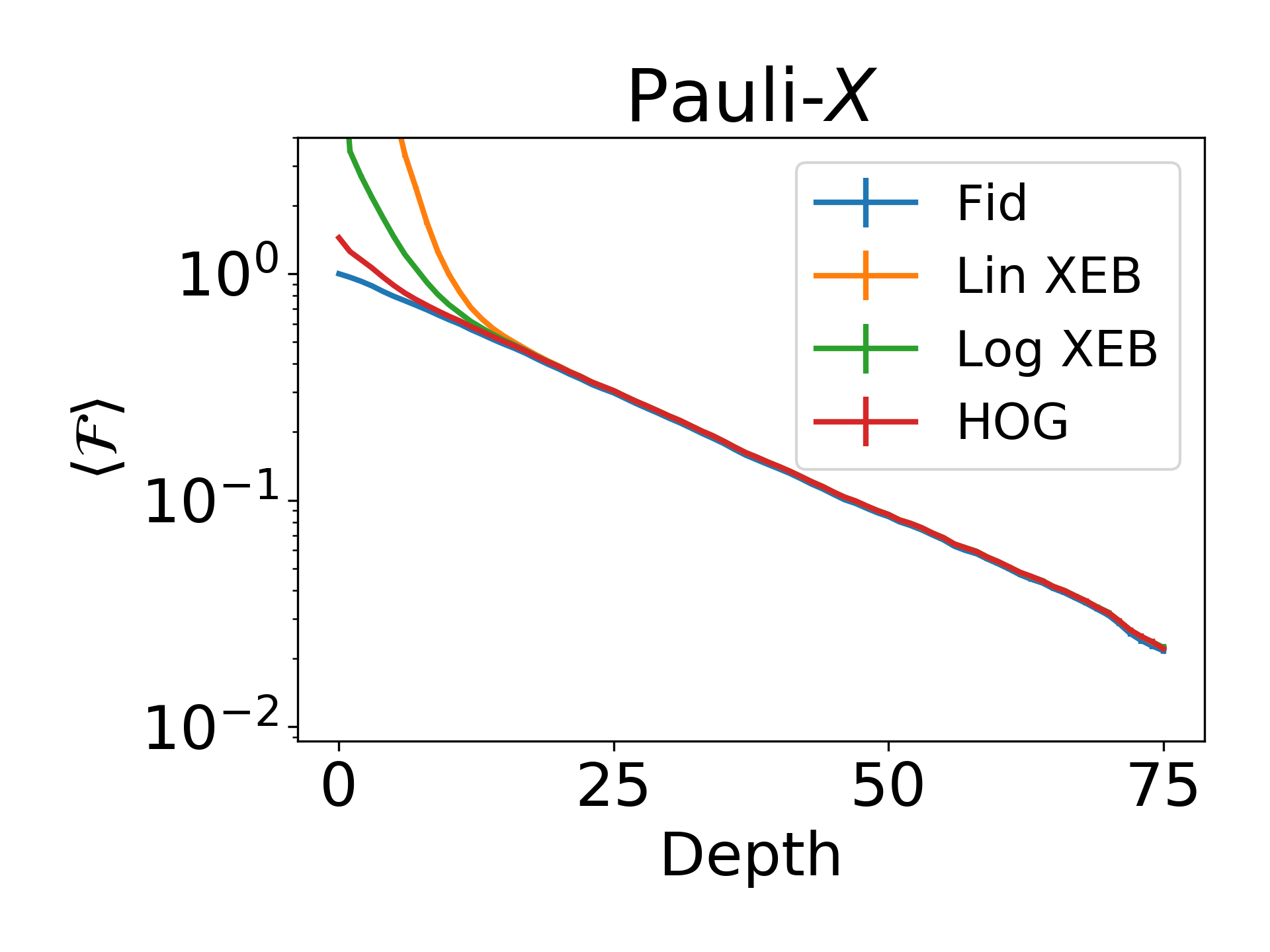}
    }
    \\
       \subfloat[]{
    \centering
    \includegraphics[width=0.4\linewidth]{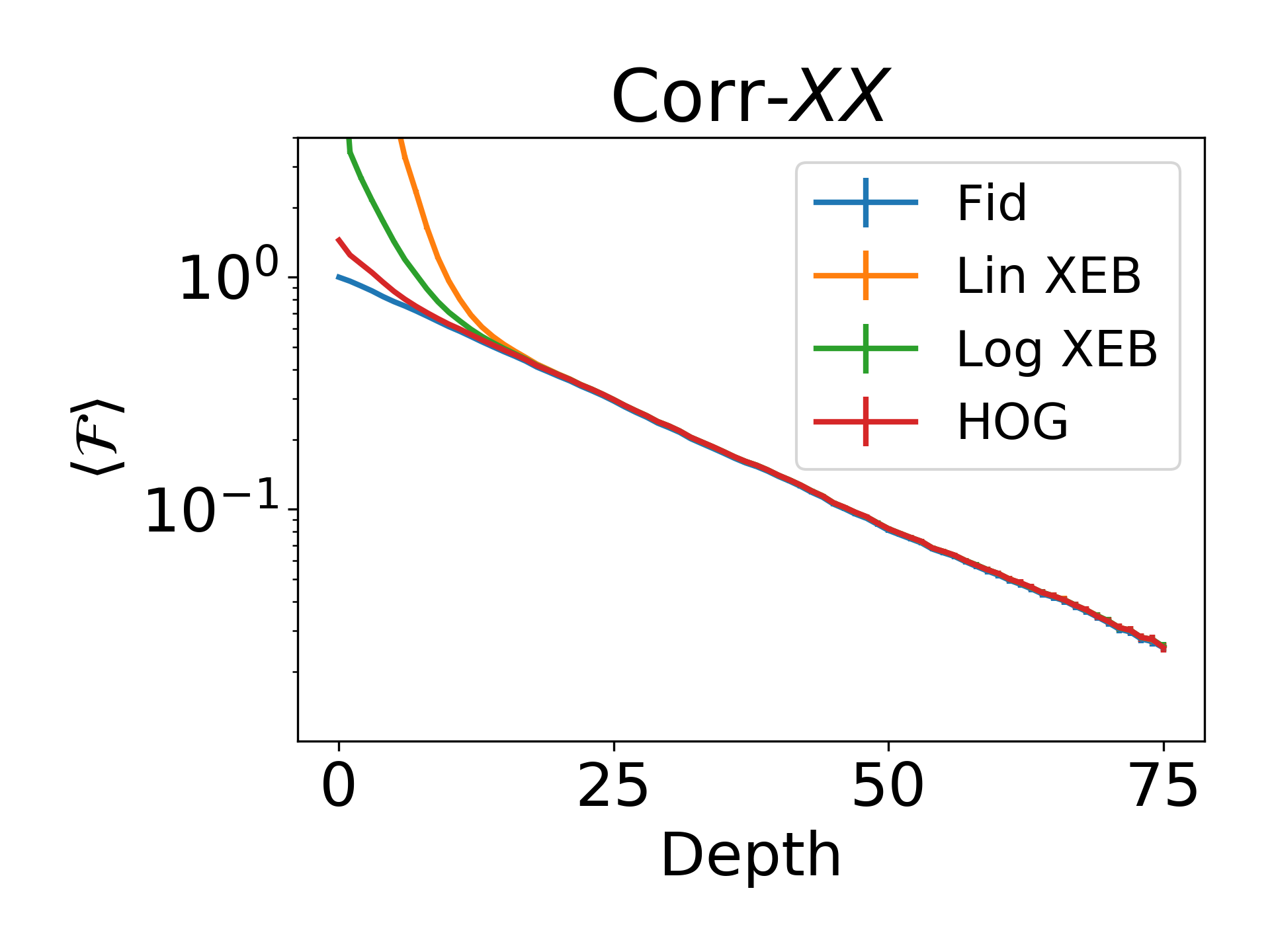}
    }
    \subfloat[]{
    \centering
    \includegraphics[width=0.4\linewidth]{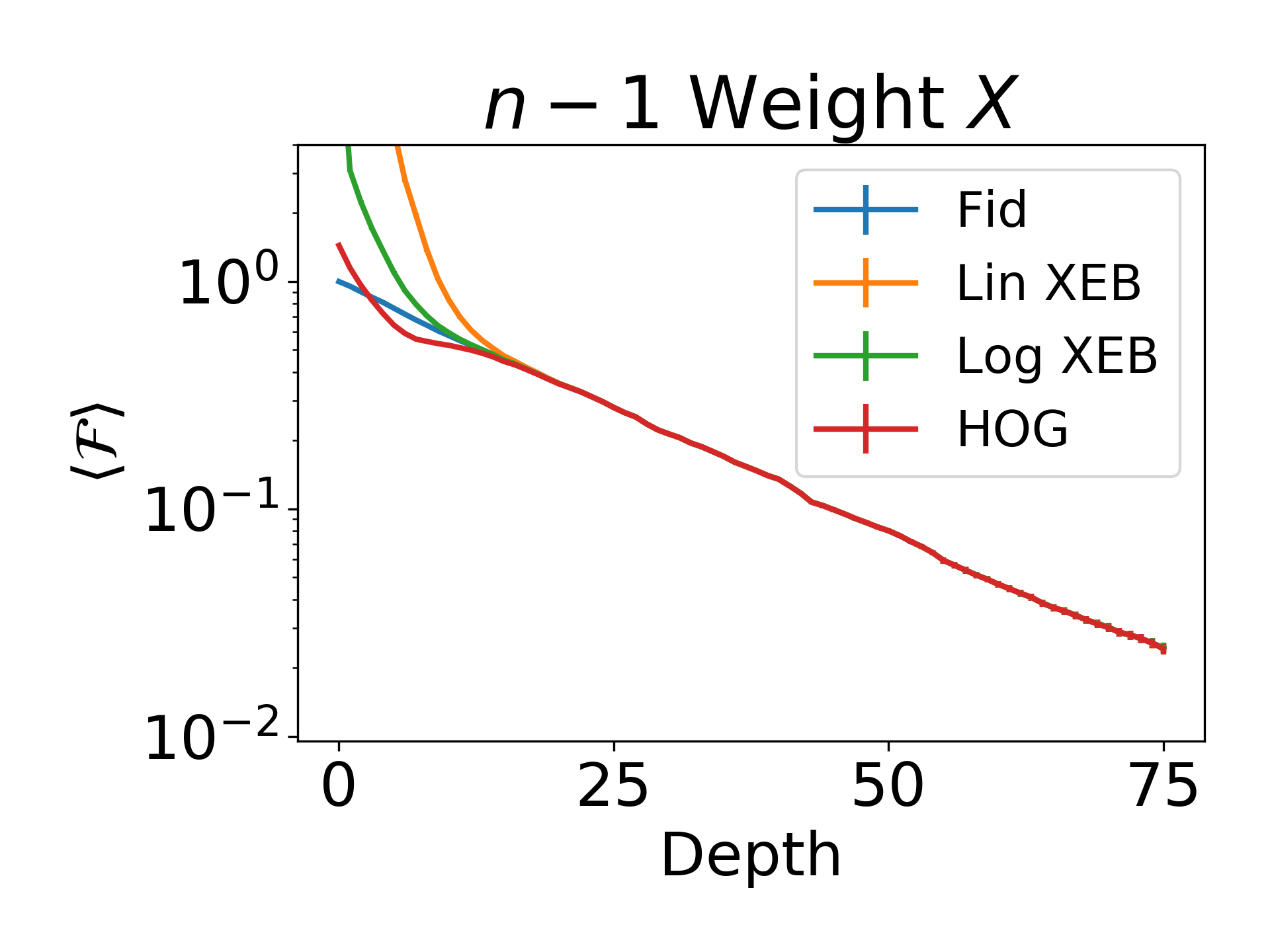}
    }

\caption{Numerical simulations of XEB, log XEB, and HOG fidelity using the Monte Carlo wave function (MCWF) method for various noise models. As opposed to unbiased XEB, these estimators only converge with sufficient depth. (a) Single qubit amplitude-decay and pure dephasing. (b) Nearest-neighbor correlated $XX$ noise. (c) Correlated $X$ noise with weight $n_q-1$ = 19.}
  \label{fig:other_mcwf_simulations}
\end{figure}

\begin{table}
  \begin{tabular}{ |l|c|l| }
    \hline
    Name & Formula \\
    \hline
    uXEB & $\hat{F}_{\mathrm{uXEB}} =  \E \frac{\sum_x D p(x) q(x) - 1}{D \sum_x p(x)^2 -1 }$\\
    XEB & $\hat{F}_{\mathrm{XEB}} = \E \sum_x D p(x) q(x) - 1$\\
    Log XEB & $\hat{F}_{\mathrm{log}} = \log D + \gamma + \E \sum_x q(x) \log(p(x))$\\
    HOG Fidelity & $\hat{F}_{\mathrm{HOG}} = \E(2 \sum_x q(x) 1[ p(x)\geq \frac{\log 2}{D}] - 1)/\log 2$\\
    \hline
  \end{tabular}
  \caption{Fidelity estimators studied in this paper. Since our numerical simulations have access to the full wavefunction, 
  these formulas are slightly different than those used when using experimental samples. $x\in\{0,1\}^n$ represents all possible bitstrings, 
  $D=2^n$ is the dimension of the Hilbert space, $p(x)$ is the ideal output probability,
    $q(x)$ is the output probability of the noisy circuit, $\gamma$ is Euler's
    constant, and $1[\cdot]$ is the indicator function. The expectation value is taken over both random circuits $C$ and, in the case of the MCWF solver, noise trajectories.}\label{tab:app_metrics_table}  
\end{table}

\begin{table*}[t]

  \begin{tabular}{ |c|c|c|c|c|c|c| }
    \hline
    Description & Lindblad &  $\lambda_{F}$ & $\lambda_{\mathrm{uXEB}}$ & $\lambda_{\mathrm{XEB}}$ & $\lambda_{\mathrm{LOG}}$ & $\lambda_{\mathrm{HOG}}$ \\
    \hline
    $T_1$, $T_\phi$ & $\gamma D[\sigma] + 2\gamma D[\sigma^\dagger \sigma]$ & 0.0511(2) & 0.0511(2) & 0.0511(2) & 0.0511(2) & 0.0510(2) \\\hline
Pauli-$X$ & $\gamma D[X]$ & 0.0508(2) & 0.0509(2) & 0.0509(2) & 0.0509(2) & 0.0508(2) \\\hline
Corr-$XX$ & $\gamma D[X_i X_{i+1}]$ & 0.0505(3) & 0.0505(3) & 0.0505(3) & 0.0505(3) & 0.0505(3) \\\hline
$n-1$ Weight $X$ & $\gamma D[\prod_{i\ne j} X_i]$ & 0.0506(3) & 0.0506(3) & 0.0506(3) & 0.0506(3) & 0.0506(3) \\\hline
  \end{tabular}
  \caption{Numerical simulation of RCS benchmarking using the Monte Carlo wave function (MCWF) technique. Here we simulate $n=20$ qubits with noise strength $\gamma=0.0025$, and $\sigma=\ketbra{0}{1}$. Each global noise model has a total ENR of $\lambda_{\mathrm{true}} = n\gamma = 0.05$ by design. $\lambda_F$ and $\lambda_{\mathrm{uXEB}}$ shows the simulated RCS benchmarking result, which corresponds to the decay rate of fidelity and unbiased linear cross entropy, respectively. $\lambda_{\mathrm{XEB}}$, $\lambda_{\mathrm{LOG}}$, and $\lambda_{\mathrm{HOG}}$ correspond to the alternative fidelity estimators of standard linear cross entropy, cross entropy, and a HOG-score based estimator.}\label{tab:noise_table_app}
\end{table*}

\begin{table*}[t]
  \begin{tabular}{ |c|c|c|c|c|c|c|c|c|c|c| }
    \hline
    \multirow{2}{*}{Noise rate}&\multirow{2}{*}{Description} & \multicolumn{2}{c|}{$n=10$, single qubit} & \multicolumn{2}{c|}{$n=10$, two qubit} & \multicolumn{2}{c|}{$n=20$, single qubit} & \multicolumn{2}{c|}{$n=20$, two qubit} \\
    \cline{3-10}
     &  &  $\lambda_{F}$ & $\lambda_{\mathrm{uXEB}}$&  $\lambda_{F}$ & $\lambda_{\mathrm{uXEB}}$&  $\lambda_{F}$ & $\lambda_{\mathrm{uXEB}}$&  $\lambda_{F}$ & $\lambda_{\mathrm{uXEB}}$ \\
    \hline
    \multirow{4}{*}{$\lambda=0.05$}& $T_1$, $T_\phi$ & 0.0491(2) & 0.0511(7) & 0.04978(4) & 0.04990(2) & 0.0495(4) & 0.0523(1) & 0.0511(2) & 0.0511(2) \\\cline{2-10}
 & Pauli-$X$ & 0.0488(6) & 0.0515(9) & 0.04970(4) & 0.04983(2) & 0.0482(4) & 0.0527(1) & 0.0508(2) & 0.0509(2) \\\cline{2-10}
 & Corr-$XX$ & 0.0494(2) & 0.0517(9) & 0.04974(2) & 0.04987(1) & 0.0494(4) & 0.0527(2) & 0.0505(3) & 0.0505(3) \\\cline{2-10}
 & $n-1$ Weight $X$ & 0.0500(2) & 0.0509(1) & 0.049761(1) & 0.049879(6) & 0.0499(3) & 0.0505(8) & 0.0506(3) & 0.0506(3) \\\hline
\multirow{4}{*}{$\lambda=0.1$}& $T_1$, $T_\phi$ & 0.0968(9) & 0.1029(2) & 0.09902(9) & 0.09964(1) & 0.0971(8) & 0.1127(3) & 0.0998(6) & 0.1000(6) \\\cline{2-10}
 & Pauli-$X$ & 0.0968(6) & 0.1049(4) & 0.09871(8) & 0.09929(4) & 0.0964(6) & 0.1112(2) & 0.1020(8) & 0.1020(7) \\\cline{2-10}
 & Corr-$XX$ & 0.0973(5) & 0.1042(2) & 0.09888(5) & 0.09946(2) & 0.0980(7) & 0.1163(3) & 0.0996(8) & 0.1000(8) \\\cline{2-10}
 & $n-1$ Weight $X$ & 0.100(1) & 0.099(1) & 0.098954(2) & 0.099512(1) & 0.0998(7) & 0.1013(3) & 0.0948(8) & 0.0948(8) \\\hline
\multirow{4}{*}{$\lambda=0.15$}& $T_1$, $T_\phi$ & 0.1436(4) & 0.1557(5) & 0.1471(1) & 0.1498(5) & 0.1435(8) & 0.1811(5) & 0.150(1) & 0.150(1) \\\cline{2-10}
 & Pauli-$X$ & 0.146(3) & 0.150(9) & 0.1464(1) & 0.1484(6) & 0.143(1) & 0.182(6) & 0.144(2) & 0.145(2) \\\cline{2-10}
 & Corr-$XX$ & 0.145(1) & 0.152(1) & 0.14677(7) & 0.14878(4) & 0.142(1) & 0.175(6) & 0.164(2) & 0.164(2) \\\cline{2-10}
 & $n-1$ Weight $X$ & 0.150(1) & 0.144(4) & 0.146874(4) & 0.148911(2) & 0.146(1) & 0.159(9) & 0.155(2) & 0.155(2) \\\hline
  \end{tabular}
  \caption{Additional numerical simulation results of RCS benchmarking. We simulate three effective noise rates with different noise models, two system sizes (10 and 20 qubits), and two gate sets (two-qubit Haar random gates, $\mathrm{CNOT}$+single qubit Haar random gates).}\label{tab:additionalsimulation}
\end{table*}

In addition to the unbiased linear cross entropy estimator, we also run three other fidelity estimators: estimators based on the standard linear cross entropy~\cite{arute2019quantum}, cross entropy~\cite{Boixo2018Characterizing},
and heavy-output-generation (HOG) score~\cite{aaronson2017complexity,arute2019quantum}.
All estimators are summarized in Table~\ref{tab:app_metrics_table}, and they take a slightly different form than the equations used in the main text 
due to the simulations having access to the full wavefunction for a single noise trajectory.
We use these estimators to study the $n=20$ linear ring and plot the results in Fig.~\ref{fig:other_mcwf_simulations} and summarize the results in Table~\ref{tab:noise_table_app}.
These simulations, as well as the $n=20$ results in the main text, were sampled over 100 circuits and 400 noise trajectories for each circuit at each given depth. All
estimators are fit from depths 20 to 50.

We also compare the results from both true fidelity and the unbiased linear cross entropy for $n=10$ and $n=20$ qubits with both two qubit Haar random unitaries and CNOT+single qubit Haar random unitaries at three different effective noise rates for our four different noise models. The results are summarized in Table~\ref{tab:additionalsimulation}. We find that the results are consistent over all variations.
We fit the $n=20$, two qubit Haar random results from depths 20 to 50. We fit the $n=10$, two qubit Haar random results from depths 10 to 25. 
We fit the $n=20$ and $n=10$ single qubit Haar random results from depths 20 to 34. 

\subsection{Variance analysis}
\begin{figure}[t]
\centering
\subfloat[two-qubit Haar]{\includegraphics[width=0.45\linewidth]{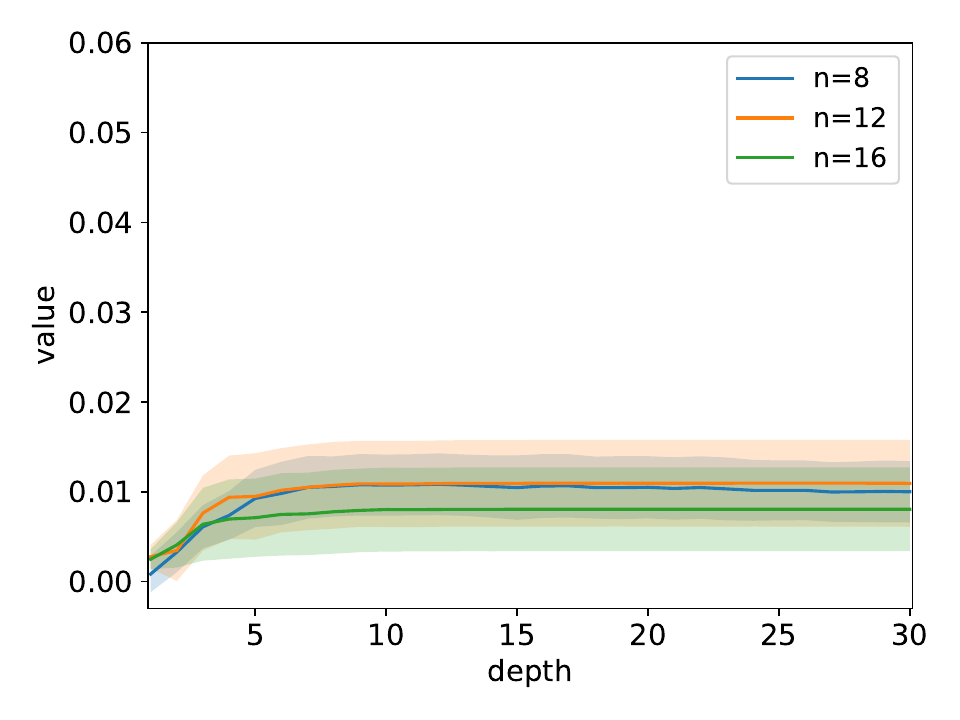}}
 \subfloat[$\mathrm{CNOT}$+single qubit Haar]{\includegraphics[width=0.45\linewidth]{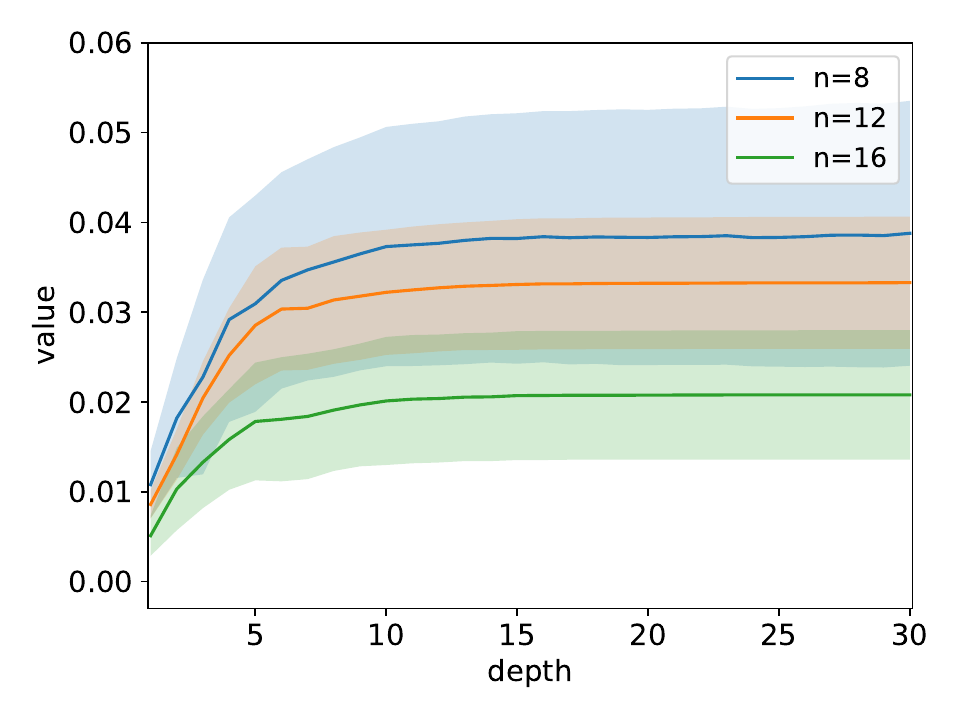}}
\caption{Numerical simulation of $\Var\left(\sum_{l=1}^{d}A_l\right)=\sum_{k,l=1}^d\left(\E[A_k A_l]-\E[A_k]\E[A_l]\right)$ which verifies our variance model $\Var(F)=O\left(\lambda^2 \left(\E F\right)^2\right)$. (a) Two-qubit Haar random gates. (b) $\mathrm{CNOT}$+single qubit Haar random gates. $\Var\left(\sum_{l=1}^{d}A_l\right)$ converges to a constant with both gate sets, while the constant for single qubit Haar random gates is larger.}
\label{fig:rcs_variance_simulation}
\end{figure}

Next we present numerical simulation results which support our analysis of the variance of fidelity in section~\ref{sec:rcsvariance}. Recall the setting in section~\ref{sec:rcsvariance} where we consider i.i.d. single qubit Pauli-$X$ noise and use a first-order approximation of the fidelity. For a random circuit $C\sim\mathrm{RQC}(n,d)$, we can write the fidelity as
\begin{equation}
\begin{split}
    F&\approx(1-\varepsilon)^{nd}+\sum_{i=1}^{n}\sum_{l=1}^{d}\varepsilon (1-\varepsilon)^{nd-1}\left|\braket{\psi_{i,l}}{\psi}\right|^2,
\end{split}
\end{equation}
where $\ket{\psi_{i,l}}$ denotes the ideal state with an $X$ error on qubit $i$ at depth $l$. Let
\begin{equation}
    A_l:=\frac{1}{n}\sum_{i=1}^n \left|\braket{\psi_{i,l}}{\psi}\right|^2,
\end{equation}
then
\begin{equation}
    F\approx (1-\varepsilon)^{nd}+n\varepsilon (1-\varepsilon)^{nd-1}\sum_{l=1}^{d}A_l.
\end{equation}
Therefore,
\begin{equation}
    \Var(F)\approx \left(n\varepsilon\right)^2 (1-\varepsilon)^{2nd-2}\Var\left(\sum_{l=1}^{d}A_l\right)\approx \lambda^2 \left(\E F\right)^2 \Var\left(\sum_{l=1}^{d}A_l\right),
\end{equation}
where we have used the fact that $\lambda\approx n\varepsilon$ and $\E F\approx (1-\varepsilon)^{nd}$. Next, we show with numerical simulation that
\begin{equation}\label{eq:appvariance}
    \Var\left(\sum_{l=1}^{d}A_l\right)=\sum_{k,l=1}^d\left(\E[A_k A_l]-\E[A_k]\E[A_l]\right)=O(1).
\end{equation}

In the following we present simulation results to verify Eq.~\eqref{eq:appvariance} with two gate sets: 2-qubit Haar random gates (as in Fig.~\ref{fig:rcsbenchmarkingapp}a) and $\mathrm{CNOT}$+single qubit Haar random gates (as in Fig.~\ref{fig:rcsbenchmarkingapp}b). For each $k,l=1,\dots,30$, we simulate $\E[A_k A_l]$, $\E[A_k]$ and $\E[A_l]$ by averaging over 100 random circuits. For each random circuit, we randomly sample 100 error locations. The solid lines and color regions in Fig.~\ref{fig:rcs_variance_simulation} denote mean and standard error across 5 independent experiments.

In Fig.~\ref{fig:rcs_variance_simulation} we present simulation results for $n=8,12,16$ qubits. We observe that $\Var\left(\sum_{l=1}^{d}A_l\right)$ converges to a constant as depth increases for both gate sets. Interestingly, although the simulation results are very noisy with large error bars, we can observe a clear difference between the two gate sets. That is, $\Var\left(\sum_{l=1}^{d}A_l\right)$ converges to a larger constant with single qubit Haar random gates. This verifies our intuition that a gate set with more randomness has smaller variance in RCS benchmarking.

\section{Additional experiment details}
\label{app:experiment}
\begin{figure}[t]
\centering
\sbox{\measurebox}{
  \begin{minipage}[b]{.62\linewidth}
  \subfloat
    []
    {\includegraphics[width=\linewidth]{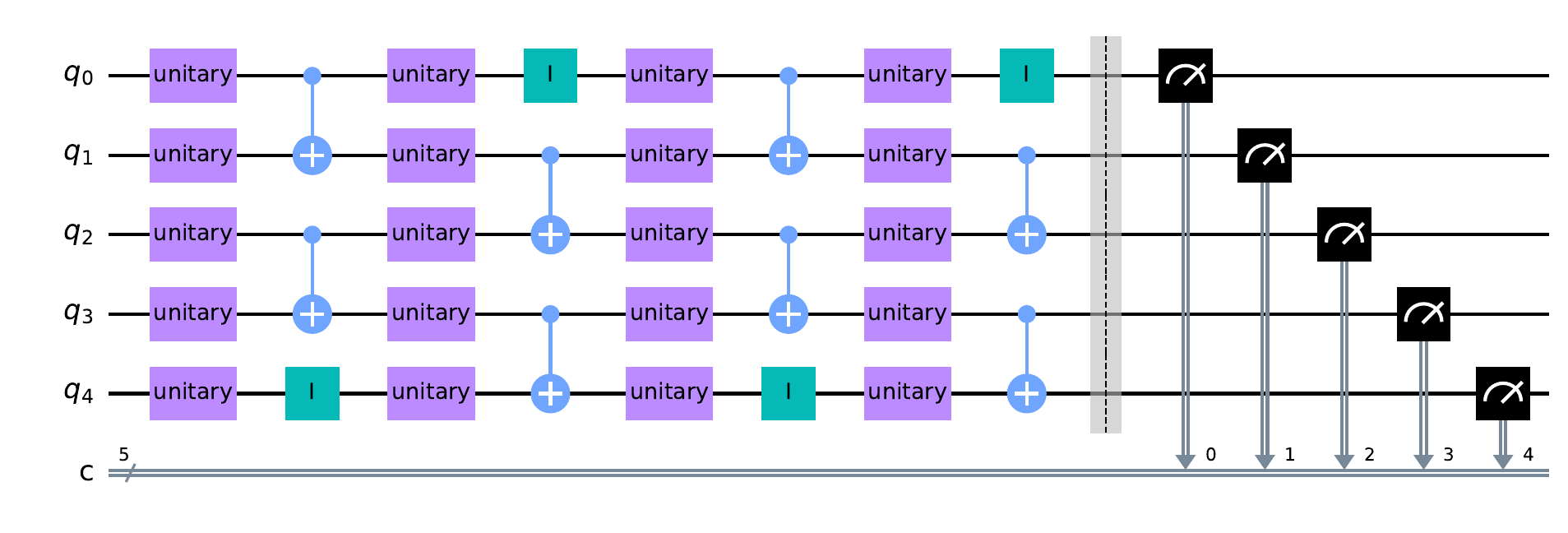}}
  \end{minipage}}
\usebox{\measurebox}\enskip
\begin{minipage}[b][\ht\measurebox+0.5cm][s]{.35\linewidth}
\centering
\subfloat[5 qubit device]
  {\includegraphics[width=0.6\linewidth]{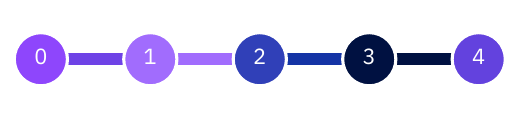}}
\vfill
\subfloat[27 qubit device]
  {\includegraphics[width=0.8\linewidth]{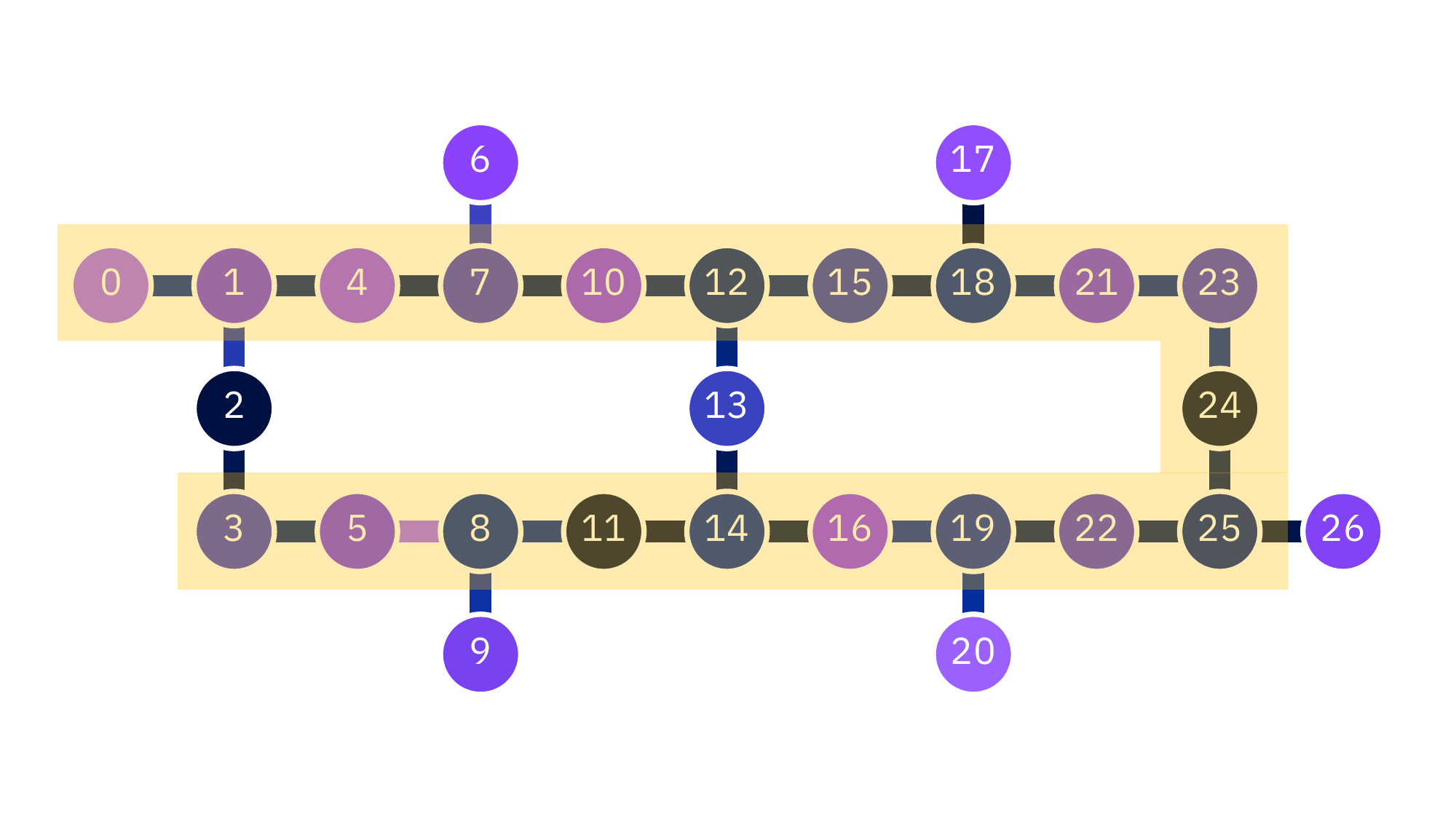}}
\end{minipage}
\caption{Architecture of the IBM Quantum superconducting qubit devices used in our experiments. (a) An example circuit on 5 qubits with depth 4. Each purple box represents a Haar random single qubit unitary gate. Figure generated by Qiskit~\cite{Qiskit}. (b)(c) Architecture for the 5 and 27 qubit devices. Figures retrieved from \url{https://quantum-computing.ibm.com/services?services=systems}. The subset of qubits used in the 27 qubit device is highlighted. The experiment in Fig.~\ref{fig:rcsvariance20qubit} is performed by adding qubits in the order given by $\texttt{[0,1,4,7,10,12,15,18,21,23,24,25,22,19,16,14,11,8,5,3]}$.}
\label{fig:ibmq_devices}
\end{figure}

Fig.~\ref{fig:ibmq_devices} shows the architecture of the devices used in our experiments. For the 27 qubit device, we consider a 20 qubit subset indexed by $\texttt{[0,1,4,7,10,12,15,18,21,23,24,25,22,19,16,14,11,8,5,3]}$ for the 20 qubit experiments, and use the first 10 qubits in this list for the 10 qubit experiments.

We give an example of the circuits implemented in RCS benchmarking in Fig.~\ref{fig:ibmq_devices}a. We implement Haar random single qubit gates between $\mathrm{CNOT}$ layers. Note that an arbitrary single qubit gate can be decomposed by
\begin{equation}\label{eq:singlequbitgate}
    U(\theta,\phi,\lambda)=R_z(\phi-\pi/2)R_x(\pi/2)R_z(\pi-\theta)R_x(\pi/2)R_z(\lambda-\pi/2).
\end{equation}
(See \url{https://qiskit.org/documentation/stubs/qiskit.circuit.library.UGate.html}.) Here $R_x(\pi/2)$ is the $\sqrt{X}$ gate, and also called a X90 pulse. The $R_z$ gates, which are rotations around $z$-axis, are implemented \emph{virtually} in hardware via framechanges and do not have any error. Therefore the error rate of a Haar random single qubit gate equals twice the error rate of a X90 pulse. $R_z(\theta)$ and $\sqrt{X}$ are the native single qubit gates supported by the devices. Therefore, when submitting circuits to the devices through cloud, each single qubit gate is first decomposed to the form in Eq.~\eqref{eq:singlequbitgate}.

By default, for any circuit submitted to the cloud, we take the maximum amount of measurement samples allowed (8192). We can submit the same circuit multiple times if more samples are needed. Below we give the parameters used in the experiments presented in Fig.~\ref{fig:rcs10and20qubit}. The parameters for the other experiments are already given in the main text. Here ``repeat" refers to how many times each circuit is submitted. The total circuit count includes the repeated ones, so the total amount of samples collected equals this number times 8192.
\begin{itemize}
    \item Fig.~\ref{fig:rcs10and20qubit}a: \texttt{depth=[1, 5, 9, 13, 17, 21, 25, 29, 33, 37]}, 90 circuits for each depth, \texttt{repeat=[1, 1, 1, 1, 1, 1, 1, 1, 1, 1]}, 900 circuits in total.
    \item Fig.~\ref{fig:rcs10and20qubit}b: \texttt{depth=[20, 22, 24, 26, 28, 30, 32]}, 100 circuits for each depth, \texttt{repeat=[1, 1, 1, 1, 2, 2, 2]}, 1000 circuits in total.
    \item Fig.~\ref{fig:rcs10and20qubit}c: \texttt{depth=[1, 5, 9, 13, 17, 21, 25, 29, 33, 37]}, 100 circuits for each depth, \texttt{repeat=[1, 1, 1, 1, 1, 1, 1, 1, 1, 1]}, 1000 circuits in total.
    \item Fig.~\ref{fig:rcs10and20qubit}d: \texttt{depth=[20, 22, 24, 26, 28, 30, 32]}, 100 circuits for each depth, \texttt{repeat=[1, 1, 1, 1, 2, 2, 2]}, 1000 circuits in total.
\end{itemize}

\end{document}